\newcommand*{\rom}[1]{\expandafter\@slowromancap\romannumeral #1@}
\newrobustcmd \textiff{\text{if and only if}}
\newrobustcmd \qtextiff{\quad\textiff\quad}
\newrobustcmd \qqtextiff{\qquad\textiff\qquad}
\newrobustcmd \suchthat{\text{such that }}
\newrobustcmd \qsuchthat{\quad\suchthat}
\newrobustcmd \qqsuchthat{\qquad\suchthat}
\NewDocumentCommand\TODO{ o }{%
			\marginpar{\textcolor{red}{{\LARGE !}}}%
			\IfNoValueTF{#1}{}{\textcolor{red}{#1}}{}}
\definecolor{srcolor}{RGB}{255,20,147}
\newrobustcmd{\defs}{::=}
\newrobustcmd{\ie}{i.e. }
\knowledgenewrobustcmd\mset[2]
\knowledgenewrobustcmd\openset[2]{\ensuremath{\cmdkl{(}#1\cmdkl,#2\cmdkl{)}}}
\knowledgenewrobustcmd{\leftopenset}[2]{\ensuremath{\cmdkl(#1,#2\cmdkl]}}
\knowledgenewrobustcmd{\rightopenset}[2]{\ensuremath{\cmdkl[#1,#2\cmdkl)}}
\knowledgenewrobustcmd{\closedset}[2]{\ensuremath{\cmdkl[#1,#2\cmdkl]}}
\knowledgenewrobustcmd{\openprefix}[1]{\ensuremath{\cmdkl({-}\infty,#1\cmdkl)}}
\knowledgenewrobustcmd{\closedprefix}[1]{\ensuremath{\cmdkl({-}\infty,#1\cmdkl]}}
\knowledgenewrobustcmd{\opensuffix}[1]{\ensuremath{\cmdkl(#1,\infty\cmdkl)}}
\knowledgenewrobustcmd{\closedsuffix}[1]{\ensuremath{\cmdkl[#1,\infty\cmdkl)}}
\knowledgenewrobustcmd{\leftclosedset}[2]{\ensuremath{\cmdkl[#1,#2 \rangle}}
\knowledgenewrobustcmd\relto[2]{#2\cmdkl{~\mathrm{in}~}#1}
\newrobustcmd{\rightrelativize}[2]{#1[\leq #2]}
\newrobustcmd{\relativize}[3]{#1[\geq #2,\leq #3]}
\newrobustcmd{\substitute}[2]{[#1/ #2]}
\newrobustcmd{\tuple}{\vec}
\newrobustcmd{\ifthen}{\Rightarrow}
\newrobustcmd{\lang}[1]{\mathcal{L}(#1)}
\knowledgenewrobustcmd{\dom}[1]{\cmdkl{\mathit{dom}(#1)}}
\knowledgenewrobustcmd{\length}[1]{\cmdkl|#1\cmdkl|}
\newrobustcmd\rationals{\mathbb{Q}}
\newrobustcmd\reals{\mathbb{R}}
\newrobustcmd\nats{\mathbb{N}}
\newrobustcmd{\Bin}{\kl[\Bin]{{\in}}}
\knowledge\Bin{notion}
\newrobustcmd{\Bnin}{\kl[\Bnin]{{\not\in}}}
\knowledge\Bnin{notion}
\newrobustcmd{\Win}{\mathbin{\kl[\Win]{\in}}}
\knowledge\Win{notion}
\newrobustcmd\product{\kl[\product]{\pi}}
\knowledge\product{notion}
\knowledgenewrobustcmd\Words[1]{\cmdkl{\fformula{Words}[}#1\cmdkl{]}}
\newrobustcmd{\sublem}[1]{.#1}
\newrobustcmd{\card}[1]{|#1|}
\newrobustcmd{\languageOf}[1]{\mathrm{L}_{#1}}
\newrobustcmd{\LSing}{\kl[\LSing]{\languageOf{\mathrm{Sing}}}}
\knowledge {\LSing}{notion}
\newrobustcmd{\LFinite}{\kl[\LFinite]{\languageOf{\mathrm{Fin}}}}
\knowledge {\LFinite}{notion}
\newrobustcmd{\LComplete}{\kl[\LComplete]{\languageOf{\mathrm{Complete}}}}
\knowledge {\LComplete}{notion}
\newrobustcmd{\LEvenGaps}{\kl[\LEvenGaps]{\languageOf{\mathrm{EvenGaps}}}}
\knowledge {\LEvenGaps}{notion}
\newrobustcmd{\LOrd}{\kl[\LOrd]{\languageOf{\mathrm{Ord}}}}
\knowledge {\LOrd}{notion}
\newrobustcmd{\LScat}{\kl[\LScat]{\languageOf{\mathrm{Scat}}}}
\knowledge {\LScat}{notion}
\newrobustcmd{\LGap}{\kl[\LGap]{\languageOf{\mathrm{Gap}}}}
\knowledge {\LGap}{notion}
\newrobustcmd{\LEven}{\kl[\LEven]{\languageOf{\mathrm{Even}}}}
\knowledge {\LEven}{notion}
\newrobustcmd{\LPerfectlyDense}{\kl[\LPerfectlyDense]{\languageOf{\mathrm{PerfDense}}}}
\knowledge {\LPerfectlyDense}{notion}
\newrobustcmd{\LFiniteGaps}{\kl[\LFiniteGaps]{\languageOf{\mathrm{FiniteGaps}}}}
\knowledge {\LFiniteGaps}{notion}
\newrobustcmd{\monoid}{\ensuremath{\mathbf{M}}}
\newrobustcmd{\monoidset}{\ensuremath{M}}
\newrobustcmd{\monoidN}{\mathbf{N}}
\newrobustcmd{\monoidNset}{N}
\newrobustcmd{\monoidOf}[2][]{\ensuremath{(#2#1,{\product})}\xspace}
\newrobustcmd{\algebraOf}[1]{\ensuremath{(#1,\unit,\productoper,\algomegasymb, \algomegastarsymb,\algshufflesymb)}\xspace}
\newrobustcmd{\semigroup}{\mathbf{S}}
\newrobustcmd{\semigroupset}{S}
\newrobustcmd{\semigroupOf}[1]{\ensuremath{(#1,\product)}\xspace}
\newrobustcmd{\monoidSing}{\kl[\monoidSing]{\ensuremath{\mathbf{Sing}}}}
\knowledge {\monoidSing}{notion}
\newrobustcmd{\monoidOrd}{\kl[\monoidOrd]{\ensuremath{\mathbf{Ord}}}}
\knowledge {\monoidOrd}{notion}
\newrobustcmd{\monoidFinite}{\kl[\monoidFinite]{\ensuremath{\mathbf{Fin}}}}
\knowledge {\monoidFinite}{notion}
\newrobustcmd{\monoidCut}{\kl[\monoidCut]{\ensuremath{\mathbf{Cut}}}}
\knowledge {\monoidCut}{notion}
\newrobustcmd{\monoidScat}{\kl[\monoidScat]{\ensuremath{\mathbf{Scat}}}}
\knowledge {\monoidScat}{notion}
\newrobustcmd{\monoidEven}{\kl[\monoidEven]{\ensuremath{\mathbf{Even}}}}
\knowledge {\monoidEven}{notion}
\newrobustcmd{\monoidGap}{\kl[\monoidGap]{\ensuremath{\mathbf{Gap}}}}
\knowledge {\monoidGap}{notion}
\newrobustcmd{\monoidMin}{\kl[\monoidMin]{\ensuremath{\mathbf{Min}}}}
\knowledge {\monoidMin}{notion}
\newrobustcmd{\monoidEvenGaps}{\kl[\monoidGap]{\ensuremath{\mathbf{EvenGaps}}}}
\knowledge {\monoidEvenGaps}{notion}
\newrobustcmd{\monoidFiniteGaps}{\kl[\monoidGap]{\ensuremath{\mathbf{FinGaps}}}}
\knowledge {\monoidFiniteGaps}{notion}
\knowledgenewrobustcmd{\monoidPerfectlyDense}{\cmdkl{\ensuremath{\mathbf{PerfDense}}}}
\newrobustcmd{\dupmonoidGap}{\ensuremath{\mathbf{Gap}}}
\newrobustcmd{\morphism}{\pi}
\newrobustcmd{\invmorphism}{\pi^{-1}}
\newrobustcmd{\fformula}[1]{\ensuremath{\mathtt{#1}}}
\newrobustcmd{\formula}[1]{\kl[\formula{#1}]{\fformula{#1}}}
\newrobustcmd{\setfamily}[1]{\ensuremath{\mathtt{#1}}}
\knowledgenewrobustcmd\fMin{\cmdkl{\fformula{Min}}}
\knowledgenewrobustcmd\fMax{\cmdkl{\fformula{Max}}}
\knowledgenewrobustcmd\fCut{\cmdkl{\fformula{Cut}}}
\knowledgenewrobustcmd\fDense{\cmdkl{\fformula{Dense}}}
\knowledgenewrobustcmd\fGap{\cmdkl{\fformula{Gap}}}
\knowledgenewrobustcmd\fOrdinal{\cmdkl{\fformula{Ordinal}}}
\knowledgenewrobustcmd\fReverseOrdinal{\cmdkl{\fformula{Ordinal^*}}}
\knowledgenewrobustcmd\fFinite{\cmdkl{\fformula{Finite}}}
\knowledgenewrobustcmd\fEven{\cmdkl{\fformula{Even}}}
\knowledgenewrobustcmd\fOdd{\cmdkl{\fformula{Odd}}}
\knowledgenewrobustcmd\fScattered{\cmdkl{\fformula{Scattered}}}
\knowledgenewrobustcmd\fOneGap{\cmdkl{\fformula{oneGap}}}
\knowledgenewrobustcmd\fNoGap{\cmdkl{\fformula{noGaps}}}
\knowledgenewrobustcmd\fFiniteGaps{\cmdkl{\fformula{FiniteGaps}}}
\knowledgenewrobustcmd\fEvenGaps{\cmdkl{\fformula{EvenGaps}}}
\knowledgenewrobustcmd\fSucc{\cmdkl{\fformula{Succ}}}
\knowledgenewrobustcmd\fDensea{\cmdkl{\fformula{Dense(\mathnormal{a})}}}
\knowledgenewrobustcmd\fOddbDensea{\cmdkl{\fformula{Odd(\mathnormal{b})\text{-}Dense(\mathnormal{a})}}}
\knowledgenewrobustcmd\fWellFounded{\cmdkl{\fformula{WellFounded}}}
\knowledgenewrobustcmd\fNob{\cmdkl{\fformula{No(\mathnormal{b})}}}
\knowledgenewrobustcmd\fAtleastTwo{\cmdkl{\fformula{AtleastTwo}}}
\knowledgenewrobustcmd\fexistab{\cmdkl{\fformula{existFactor}[ab]}}
\knowledgenewrobustcmd\fnoab{\cmdkl{\fformula{noFactor}[ab]}}
\knowledgenewrobustcmd\fomegaab{\cmdkl{\fformula{Omega}\text{-}{ab}}}
\knowledgenewrobustcmd\fonegap{\cmdkl{\fformula{oneGap}}}
\knowledgenewrobustcmd\ffingap{\cmdkl{\fformula{finGap}}}
\knowledgenewrobustcmd\fnogap{\cmdkl{\fformula{noGap}}}
\knowledgenewrobustcmd\fminonegap{\cmdkl{\alphabet \fonegap}}
\knowledgenewrobustcmd\impliciti{\cmdkl{\mathrm{i}}}
\knowledgenewrobustcmd\implicitoi{\cmdkl{\mathrm{oi}}}
\knowledgenewrobustcmd\implicitosi{\cmdkl{\mathrm{o^*i}}}
\knowledgenewrobustcmd\implicitsc{\cmdkl{\mathrm{sc}}}
\knowledgenewrobustcmd\implicitsh{\cmdkl{\mathrm{sh}}}
\knowledgenewrobustcmd\Juright[1]{\cmdkl{\overrightarrow{J}_{#1}}}
\knowledgenewrobustcmd\Juleft[1]{\cmdkl{\overleftarrow{J}_{#1}}}
\knowledgenewrobustcmd\Jright[1]{\cmdkl{\overset{\mapsto}{J}_{#1}}}
\knowledgenewrobustcmd\eright[1]{\cmdkl{\overset{\mapsto}{#1}}}
\knowledgenewrobustcmd\Jleft[1]{\cmdkl{\overset{\leftmapsto}{J}_{#1}}}
\knowledgenewrobustcmd\eleft[1]{\cmdkl{\overset{\leftmapsto}{#1}}}
\knowledgenewrobustcmd\euright[1]{\cmdkl{\overrightarrow{#1}}}
\knowledgenewrobustcmd\euleft[1]{\cmdkl{\overleftarrow{#1}}}
\knowledgenewrobustcmd\fNotScattered{\cmdkl{\fformula{NotScattered}}}
\knowledgenewrobustcmd\fInfinity{\cmdkl{\fformula{Infinitely}}}
\renewrobustcmd{\ostar}{\circledast} 
\newrobustcmd{\circplussymbol}{\ensuremath{\small \oplus}}
\newrobustcmd{\circsemigroup}{\circplussymbol-semigroup\xspace}	
\newrobustcmd\nepowerset[1]{\mathcal{P}(#1)\setminus\{\emptyset\}}
\newrobustcmd{\algomegasymb}{\ensuremath{{\boldsymbol {\kl[\omegaoper]{\omega}}}}}
\newrobustcmd{\algomegastarsymb}{\ensuremath{\boldsymbol{\kl[\omegastaroper]{\omega^*}}}}
\newrobustcmd{\algshufflesymb}{\ensuremath{\boldsymbol{\kl[\shuffleoper]\shuffle}}}
\knowledgenewrobustcmd{\productoper}{\ensuremath{\mathbin{\cmdkl\cdot}}}
\knowledgenewrobustcmd{\omegaoper}[1]{\ensuremath{#1^{\cmdkl\algomegasymb}}}
\knowledgenewrobustcmd{\omegastaroper}[1]{\ensuremath{#1^{\cmdkl\algomegastarsymb}}}
\knowledgenewrobustcmd{\shuffleoper}[1]{\ensuremath{#1^{\cmdkl\algshufflesymb}}}
\newrobustcmd{\shuffle}{\eta}
\knowledgenewrobustcmd{\perfectShuffle}[1]{\cmdkl{\ensuremath{\mathit{perfectshuffle}}}(#1)}
\newrobustcmd{\Nat}{\nats}
\newrobustcmd{\intersect}{\cap}
\newrobustcmd{\leftlimit}{\leftarrowtail}
\newrobustcmd{\rightlimit}{\rightarrowtail}
\newrobustcmd{\recurrentpoint}{|}
\newrobustcmd\gJ{\kl[\gJ]{\mathcal{J}}}
\newrobustcmd\gL{\kl[\gL]{\mathcal{L}}}
\newrobustcmd\gR{\kl[\gR]{\mathcal{R}}}
\newrobustcmd\gH{\kl[\gH]{\mathcal{H}}}
\newrobustcmd\gD{\kl[\gD]{\mathcal{D}}}
\knowledgenewrobustcmd{\Jclass}{\cmdkl{\mathcal{J}}}
\knowledgenewrobustcmd{\Lclass}{\cmdkl{\mathcal{L}}}
\knowledgenewrobustcmd{\Rclass}{\cmdkl{\mathcal{R}}}
\knowledgenewrobustcmd{\Hclass}{\cmdkl{\mathcal{H}}}
\knowledgenewrobustcmd{\Dclass}{\cmdkl{\mathcal{D}}}
\newrobustcmd{\Jeq}{\mathrel{\kl[\Jeq]{\mathcal J}}}
\newrobustcmd{\nJeq}{\mathrel{\kl[\Jeq]{\cancel{\mathcal{J}}}}}
\knowledge{\Jeq}{notion}
\newrobustcmd{\Req}{\mathrel{\kl[\Req]{\mathcal{R}}}}
\newrobustcmd{\nReq}{\mathrel{\kl[\Req]{\cancel{\mathcal{R}}}}}
\knowledge{\Req}{notion}
\newrobustcmd{\Leq}{\mathrel{\kl[\Leq]{\mathcal{L}}}}
\newrobustcmd{\nLeq}{\mathrel{\kl[\Leq]{\cancel{\mathcal{L}}}}}
\knowledge{\Leq}{notion}
\newrobustcmd{\Heq}{\mathrel{\kl[\Heq]{\mathcal{H}}}}
\newrobustcmd{\nHeq}{\mathrel{\kl[\Heq]{\cancel{\mathcal{H}}}}}
\knowledge{\Heq}{notion}
\newrobustcmd{\Jleq}{\mathrel{\kl[\Jleq]{\leqslant_{\mathcal{J}}}}}
\newrobustcmd{\Jgeq}{\mathrel{\kl[\Jleq]{\geqslant_{\mathcal{J}}}}}
\newrobustcmd{\nJleq}{\mathrel{\kl[\Jleq]{\nleqslant_{\mathcal{J}}}}}
\newrobustcmd{\nJgeq}{\mathrel{\kl[\Jleq]{\ngeqslant_{\mathcal{J}}}}}
\knowledge{\Jleq}{notion}
\newrobustcmd{\Rleq}{\mathrel{\kl[\Rleq]{\leqslant_{\mathcal{R}}}}}
\newrobustcmd{\Rgeq}{\mathrel{\kl[\Rleq]{\geqslant_{\mathcal{R}}}}}
\knowledge{\Rleq}{notion}
\newrobustcmd{\Lleq}{\mathrel{\kl[\Lleq]{\leqslant_{\mathcal{L}}}}}
\newrobustcmd{\Lgeq}{\mathrel{\kl[\Lleq]{\geqslant_{\mathcal{L}}}}}
\knowledge{\Lleq}{notion}
\newrobustcmd{\Jl}{\mathrel{\kl[\Jl]{<_{\mathcal{J}}}}}
\newrobustcmd{\nJl}{\mathrel{\kl[\Jl]{\nless_{\mathcal{J}}}}}
\newrobustcmd{\Jg}{\mathrel{\kl[\Jl]{>_{\mathcal{J}}}}}
\newrobustcmd{\nJg}{\mathrel{\kl[\Jl]{\ngtr_{\mathcal{J}}}}}
\knowledge{\Jl}{notion}
\newrobustcmd{\Rl}{\mathrel{\kl[\Rl]{<_{\mathcal{R}}}}}
\newrobustcmd{\Rg}{\mathrel{\kl[\Rl]{>_{\mathcal{R}}}}}
\knowledge{\Rl}{notion}
\newrobustcmd{\Ll}{\mathrel{\kl[\Ll]{<_{\mathcal{L}}}}}
\newrobustcmd{\Lg}{\mathrel{\kl[\Ll]{>_{\mathcal{L}}}}}
\knowledge{\Ll}{notion}
\newrobustcmd{\idemp}{\kl[\idemp]{\mathtt{E}}}
\knowledge\idemp{notion,ensuremath}
\newrobustcmd{\giidemp}{\kl[\giidemp]{\mathtt{GiE}}}
\knowledge\giidemp{notion,ensuremath}
\newrobustcmd{\oidemp}{\kl[\oidemp]{\mathtt{OE}}}
\knowledge\oidemp{notion,ensuremath}
\newrobustcmd{\ostidemp}{\kl[\ostidemp]{\mathtt{O^*E}}}
\knowledge\ostidemp{notion,ensuremath}
\newrobustcmd{\scatidemp}{\kl[\scatidemp]{\mathtt{ScE}}}
\knowledge\scatidemp{notion,ensuremath}
\newrobustcmd{\shuffleidemp}{\kl[\shuffleidemp]{\mathtt{ShE}}}
\knowledge\shuffleidemp{notion,ensuremath}
\newrobustcmd{\shsimp}{\kl[\shsimp]{\mathtt{SSE}}}
\knowledge\shsimp{notion,ensuremath}
\newrobustcmd{\aperiodic}{\kl[\aperiodic]{\mathtt{Ap}}}
\knowledge\aperiodic{notion,ensuremath}
\newrobustcmd{\s}{yes}
\newrobustcmd{\n}{no}
\newrobustcmd{\ds}{(yes)}
\newrobustcmd{\Eeqs}{e = e^2}
\newrobustcmd{\GIeqs}{\omegaoper e \omegastaroper e = e}
\newrobustcmd{\OIeqs}{e = \omegaoper e}
\newrobustcmd{\OIstareqs}{e = \omegastaroper e}
\newrobustcmd{\SIeqs}{e = \omegastaroper e \omegaoper e}
\newrobustcmd{\ShIeqs}{e = \shuffleoper e}
\newrobustcmd{\SSeqs}{\shsimp}
\newrobustcmd\smallsubseteq{\text{\raisebox{.12em}{\scalebox{.6}{${\subseteq}$}}}}
\knowledgenewrobustcmd{\eigi}{\ensuremath{\cmdkl{\mathtt{E{\smallsubseteq}GiE}}}}
\knowledgenewrobustcmd{\scish}{\ensuremath{\cmdkl{\mathtt{ScE {\smallsubseteq} ShE}}}}
\knowledgenewrobustcmd{\shiss}{\ensuremath{\cmdkl{\mathtt{ShE {\smallsubseteq} SSE}}}}
\knowledgenewrobustcmd{\oigi}{\ensuremath{\cmdkl{\mathtt{OE {\smallsubseteq} GiE}}}}
\knowledgenewrobustcmd{\ostigi}{\ensuremath{\cmdkl{\mathtt{O^*E {\smallsubseteq} GiE}}}}
\newrobustcmd{\foeqs}{\ensuremath{\aperiodic,\eigi,\scish, \shiss}}
\newrobustcmd{\fofeqs}{\ensuremath{\aperiodic,\oigi,\ostigi,\scish,\shiss}}
\newrobustcmd{\focuteqs}{\ensuremath{\aperiodic,\scish,\shiss}}
\newrobustcmd{\fofiniteeqs}{\ensuremath{\oigi,\ostigi,\scish,\shiss}}
\newrobustcmd{\fofinitecuteqs}{\ensuremath{\scish,\shiss}}
\newrobustcmd{\foscatteredeqs}{\ensuremath{\shiss}}
\newrobustcmd{\aperiodiceqs}{\exists n ~e^n=e^{n+1}}
\newrobustcmd{\eigieqs}{e=e^2 \rightarrow e^{\omega}e^{\omega*}=e}
\newrobustcmd{\scisheqs}{e=e^{\delta} \rightarrow e=e^{\shuffle}}
\newrobustcmd{\shisseqs}{e=e^{\shuffle} \rightarrow e \mbox{ is } \shsimp}
\newrobustcmd{\oigieqs}{e=e^{\omega} \rightarrow e^{\omega}e^{\omega*}=e}
\newrobustcmd{\ostigieqs}{e=e^{\omega*} \rightarrow e^{\omega}e^{\omega*}=e}
\newrobustcmd \emptyword {\varepsilon}
\newrobustcmd{\natorder}{\omega}
\newrobustcmd{\negorder}{\omega^*}
\newrobustcmd{\intorder}{\delta}
\newrobustcmd{\rationalorder}{\shuffle}
\newrobustcmd{\integers}{\mathbb{Z}}
\newrobustcmd\alphabet{\kl[\alphabet]{\ensuremath{\Sigma}}}
\knowledge\alphabet{notion}
\newrobustcmd{\subword}[2]{#1\kl[\subword]{|}_{#2}}
\knowledge\subword{notion}
\newrobustcmd\words[1]{#1^{\kl[\words]{\circsymbol}}}
\knowledge\words{notion}
\newrobustcmd{\nonemptywords}[1] {#1^{\kl[\nonemptywords]{\circplussymbol}}}
\knowledge\nonemptywords{notion}
\newrobustcmd{\reg}{\mathrm{r}}
\newrobustcmd{\factor}[2]{#1#2}
\knowledgenewrobustcmd{\omegaword}[1] {\ensuremath{#1^{\cmdkl\omega}}}
\knowledgenewrobustcmd{\omegastarword}[1] {\ensuremath{#1^{\cmdkl{\omega^*}}}}
\knowledgenewrobustcmd{\shuffleword}[1] {\ensuremath{#1^{\cmdkl{\shuffle}}}}
\knowledgenewrobustcmd{\negation}[1]{\ensuremath{\overline{#1}}}
\knowledgenewrobustcmd{\finitewords}[1]{\ensuremath{#1^{\cmdkl{*}}}}
\knowledgenewrobustcmd{\fo}{\ensuremath{\cmdkl{\mathsf{fo}}}}
\knowledgenewrobustcmd{\wmso}{\ensuremath{\cmdkl{\mathsf{wmso}}}}
\knowledgenewrobustcmd{\mso}{\ensuremath{\cmdkl{\mathsf{mso}}}}
\knowledgenewrobustcmd{\foq}[1][]{\ensuremath{\cmdkl{\mathsf{fo}[}#1\cmdkl{]}}}
\newrobustcmd{\focut}{\ensuremath{\foq[\existsCut]}}
\knowledge\focut{notion}
\newrobustcmd{\fofinite}{\ensuremath{\foq[\existsFinite]}}
\knowledge\fofinite{notion}
\newrobustcmd{\fofinitecut}{\ensuremath{\foq[\existsFinite,\existsCut]}}
\knowledge\fofinitecut{notion}
\newrobustcmd{\foordinal}{\ensuremath{\foq[\existsOrdinal]}}
\knowledge\foordinal{notion}
\newrobustcmd{\foscattered}{\ensuremath{\foq[\existsScattered]}}
\knowledge\foscattered{notion}
\newrobustcmd{\rexists}[1]{\kl[\rexists]{\exists^{#1}}}
\newrobustcmd{\rforall}[1]{\kl[\rforall]{\forall^{#1}}}
\newrobustcmd\existsIn[1]{\kl[\existsIn]{\exists}^{#1}}
\knowledge\existsIn{notion}
\newrobustcmd\forallIn[1]{\kl[\forallIn]{\forall}^{#1}}
\knowledge\forallIn{notion}
\newrobustcmd{\existsFinite}[1][]{\kl[\existsFinite]{\exists_{#1}^{\mathsf{finite}}}}
\knowledge\existsFinite{notion}
\newrobustcmd{\existsCut}{\kl[\existsCut]{\exists^{\mathsf{cut}}}}
\knowledge\existsCut{notion}
\newrobustcmd{\existsGap}{\kl[\existsGap]{\exists^{\mathsf{gap}}}}
\knowledge\existsGap{notion}
\newrobustcmd{\existsOrdinal}{\kl[\existsOrdinal]{\exists^{\mathsf{ordinal}}}}
\knowledge\existsOrdinal{notion}
\newrobustcmd{\existsScattered}{\kl[\existsScattered]{\exists^{\mathsf{scattered}}}}
\knowledge\existsScattered{notion}
\newrobustcmd{\forallFinite}{\kl[\forallFinite]{\forall^{\mathsf{finite}}}}
\knowledge\forallFinite{notion}
\newrobustcmd{\forallCut}{\kl[\forallCut]{\forall^{\mathsf{cut}}}}
\knowledge\forallCut{notion}
\newrobustcmd{\forallGap}{\kl[\forallGap]{\forall^{\mathsf{gap}}}}
\knowledge\forallGap{notion}
\newrobustcmd{\forallOrdinal}{\kl[\forallOrdinal]{\forall^{\mathsf{ordinal}}}}
\knowledge\forallOrdinal{notion}
\newrobustcmd{\forallScattered}{\kl[\forallScattered]{\forall^{\mathsf{scattered}}}}
\knowledge\forallScattered{notion}
\newrobustcmd{\cci}{{\mathsmaller{[\,]}}}
\newrobustcmd{\oci}{{\mathsmaller{(\,]}}}
\newrobustcmd{\coi}{{\mathsmaller{[\,)}}}
\newrobustcmd{\ooi}{{\mathsmaller{(\,)}}}
\newrobustcmd{\cui}{{\mathsmaller{[\,\rangle}}}
\newrobustcmd{\uci}{{\mathsmaller{\langle\,]}}}
\newrobustcmd{\uui}{{\mathsmaller{\langle\,\rangle}}}
\newrobustcmd{\unit}{\ensuremath{1}\xspace}
\newrobustcmd{\zero}{\ensuremath{0}\xspace}
\newrobustcmd{\ci}{\mathsmaller{[}}
\newrobustcmd{\oi}{\mathsmaller{(}}
\knowledgenewrobustcmd{\ccim}[1]{\cmdkl{\mathsmaller{[ #1 ]}}}
\knowledgenewrobustcmd{\ocim}[1]{\cmdkl{\mathsmaller{( #1 ]}}}
\knowledgenewrobustcmd{\coim}[1]{\cmdkl{\mathsmaller{[ #1 )}}}
\knowledgenewrobustcmd{\ooim}[1]{\cmdkl{\mathsmaller{( #1 )}}}
\knowledgenewrobustcmd{\cuim}[1]{\cmdkl{\mathsmaller{[ #1 \rangle}}}
\knowledgenewrobustcmd{\ucim}[1]{\cmdkl{\mathsmaller{\langle #1]}}}
\knowledgenewrobustcmd{\uuim}[1]{\cmdkl{\mathsmaller{\langle #1 \rangle}}}
\knowledgenewrobustcmd{\ouim}[1]{\cmdkl{\mathsmaller{(#1 \rangle}}}
\knowledgenewrobustcmd{\uoim}[1]{\cmdkl{\mathsmaller{\langle #1)}}}
\newrobustcmd{\morphsetmodify}[1] {#1}
\newrobustcmd{\invmorph}[1]{\langle {\morphsetmodify #1} \rangle}
\newrobustcmd{\minmorph}[1]{[ {\morphsetmodify #1} \rangle}
\newrobustcmd{\maxmorph}[1]{\langle {\morphsetmodify #1} ]}
\newrobustcmd{\minmaxmorph}[1]{[ {\morphsetmodify #1} ]}
\newrobustcmd{\nomaxmorph}[1]{\langle {\morphsetmodify #1} )}
\newrobustcmd{\nominmorph}[1]{( {\morphsetmodify #1} \rangle}
\newrobustcmd{\nominmaxmorph}[1]{( {\morphsetmodify #1} )}
\newrobustcmd{\minnomaxmorph}[1]{[ {\morphsetmodify #1} )}
\newrobustcmd{\maxnominmorph}[1]{( {\morphsetmodify #1} ]}
\newrobustcmd{\idap}{x^{\impliciti}=x^{\impliciti}\cdot x}
\newrobustcmd{\ideigi}{\omegaoper{(x^{\impliciti})} \cdot \omegastaroper{(x^{\impliciti})} = x^{\impliciti}}
\newrobustcmd{\idoigi}{x^{\implicitoi}\cdot \omegastaroper x =x^{\implicitoi}}
\newrobustcmd{\idostigi}{\omegaoper x \cdot x^{\implicitosi}=x^{\implicitosi}}
\newrobustcmd{\idscish}{\shuffleoper{\{x^{\implicitsc}\}} =x^{\implicitsc}}
\newrobustcmd{\idshiss}{\implicitsh(x,y_1,\dots,y_k)= \shuffleoper{(\{\implicitsh(x,y_1,\dots,y_k)\}\cup\{y_1,\dots,y_k\})}}
\knowledgenewrobustcmd{\expr}{\cmdkl{\ensuremath{\mathcal{X}}}}
\knowledgenewrobustcmd{\equations}{\cmdkl{\ensuremath{\mathcal{E}}}}
\knowledgenewrobustcmd{\logic}{\cmdkl{\ensuremath{\mathrm{Log}}}}
\knowledgenewrobustcmd{\extralogic}[1][]{\cmdkl{\ensuremath{\logic[\exists #1]}}}
\knowledgenewrobustcmd{\extclosure}{\cmdkl{\fformula{extensionClosure}}}
\knowledgenewrobustcmd{\starclosure}{\cmdkl{\ensuremath{^*}\fformula{Closure}}}
\knowledgenewrobustcmd{\cofinal}{\cmdkl{\fformula{coFinal}}}
\knowledgenewrobustcmd{\covering}{\cmdkl{\fformula{Covering}}}
\knowledgenewrobustcmd{\initial} {\cmdkl{\fformula{existsPrefix}}}
\knowledgenewrobustcmd{\coinitial}{\cmdkl{\fformula{coInitial}}}
\knowledgenewrobustcmd{\final} {\cmdkl{\fformula{existsSuffix}}}
\knowledgenewrobustcmd\fprefix{\cmdkl{\fformula{forallPrefix}}}
\knowledgenewrobustcmd\fomega{\cmdkl{\fformula{Omega}}}
\knowledgenewrobustcmd\fsuffix{\cmdkl{ \fformula{forallSuffix}}}
\knowledgenewrobustcmd{\definterval}{\cmdkl{\fformula{DefInterval}}}
\knowledgenewrobustcmd{\ramseyomega}[1]{\cmdkl{\omegaword{\cuim{#1}}}}
\knowledgenewrobustcmd{\ramseyomegastar}[1]{\cmdkl{\omegastarword{\ucim{#1}}}}
\newrobustcmd{\fletter}[1]{\formula{Letter}_{#1}}
\newrobustcmd{\fleft}{\formula{LeftInterval}}
\newrobustcmd{\circsymbol}{\ensuremath{\ostar}}
\knowledgenewrobustcmd\mainJ{\cmdkl{J}}
\knowledgenewrobustcmd\mainZ{\cmdkl{A}}
\newrobustcmd\nmainZ{\kl[\mainZ]{\bar A}}
\knowledgenewrobustcmd\mainZp{\cmdkl{B}}
\newrobustcmd\nmainZp{\kl[\mainZp]{\bar{B}}}
\newrobustcmd\omegasimilar{\mathtt{\textcolor{red}{omegasimilar}}}
\newrobustcmd\omegastarsimilar{\mathtt{\textcolor{red}{omegastarsimilar}}}
\newrobustcmd\similar{\mathtt{\textcolor{red}{similar}}}
\newrobustcmd\densesimilar{\mathtt{\textcolor{red}{densesimilar}}}
\newrobustcmd\definablecuts{\mathtt{\textcolor{red}{definablecuts}}}
\newrobustcmd\startWord{\textcolor{red}{startWord}}
\newrobustcmd\deprec[1]{\expandafter\newrobustcmd\csname#1\endcsname{\ensuremath{\mathtt{\textcolor{red}{#1}}}}}
\newrobustcmd{\sop}[1]{{#1}^{\kl[\sop]{sc}}}
\knowledge\sop{notion}
\newrobustcmd{\fop}[1] {{#1}^{\kl[\fop]{*}}}
\knowledge\fop{notion}
\newcommand{\mreg} {H}
\knowledge {\legJ}[\leq _\gJ]{notion}
\knowledge{\Bin}{notion}
\knowledge{\Bnin}{notion}
\knowledge {\projection}{notion}
\knowledge {\restrictedMonoidPowerset}{notion}
\knowledge {\Win}{notion}
\knowledge{\proj}{notion}
\title{Regular expressions over countable words}
\author{Thomas Colcombet}{IRIF Paris}{thomas.colcombet@irif.fr }{}{}
\author{A V Sreejith}{IIT Goa}{sreejithav@iitgoa.ac.in}{}{}
\keywords{countable words, expressions, logic, monoids}
\authorrunning{Colcombet and Sreejith}
\begin{document}

\setcounter{tocdepth}{4}\tableofcontents\clearpage

\maketitle
\begin{abstract}

We investigate the expressive power of regular expressions for languages of countable words and establish their expressive equivalence with logical and algebraic characterizations.  
Our goal is to extend the classical theory of regular languages - defined over finite words and characterized by automata, monadic second-order logic, and regular expressions - to the setting of countable words. 
In this paper, we introduce and study five classes of expressions: \emph{marked star-free expressions}, \emph{marked expressions}, \emph{power-free expressions}, \emph{scatter-free expressions}, and \emph{scatter expressions}. 
We show that these expression classes characterize natural fragments of logic over countable words and possess decidable algebraic characterizations. 
As part of our algebraic analysis, we provide a precise description of the relevant classes in terms of their \(\mathcal{J}\)-class structure. 
These results complete a triad of equivalences - between logic, algebra, and expressions - in this richer setting, thereby generalizing foundational results from classical formal language theory.
        
\end{abstract}


\newpage
\section{Introduction and contributions}

This paper contributes to the theory of regular languages over "countable words"—that is, infinite words whose domains are countable linear orders, as opposed to $\omega$-words whose domains are isomorphic to the natural numbers. The foundational result in this direction is the decidability of satisfiability for monadic second-order logic (MSO) over countable words or chains~\cite{Rabin69,Shelah75}. Since then, the study of countable words has evolved into a rich theory that retain many foundational aspects of classical regular language theory over finite words. In particular, regular languages of finite words admit equivalent characterizations through several formal systems: automata (deterministic or nondeterministic), definability in MSO, recognizability by finite monoids, and regular expressions. Analogous equivalences have been established for languages of countable words; notably, MSO-definable languages over countable linear orders are known to coincide with those recognizable by "o-monoids"~\cite{CartonColcombetPuppis18}.

A striking branch of research concerning regular languages of finite words is the algebraic characterization of sub-logics of MSO. The seminal result in this direction is the celebrated Schützenberger–McNaughton–Papert theorem, that states that a language is recognized by an aperiodic monoid if and only if it is definable by a star-free expression, which is itself equivalent to being definable in first-order logic (FO). This theorem initiated a deep line of research in formal language theory.

In the setting of countable words, \cite{colcombeticalp15} establishes five correspondences between varieties of "o-monoids" and natural sub-logics of MSO. These sub-logics arise by restricting set quantification in various ways: to singletons (yielding FO), to Dedekind cuts, to finite sets, and so on.

This raises a natural question: can we identify natural classes of expressions that characterize these logics and algebraic varieties?

To better understand the expressive power of the different classes of expressions we study, we now present a few illustrative examples. 

\begin{example}
    Let us consider the alphabet $\alphabet = \{a,b\}$ and denote by $\emptyset$ the empty language. The set of all countable words $\words \alphabet$ is then the complement of the empty language, written as $\negation \emptyset$. We present a few illustrative expressions:
    
    \begin{itemize}
        \item The set of all words without any occurrence of the letter $b$ is given by the "marked star-free expression":
        \[
        \fNob \;\defs\; \negation{\words \alphabet b \words \alphabet},
        \]
        where the concatenation is marked by the letter $b$.
    
        \item The set of all \emph{dense} words over $a$ is described by the following "marked star-free expression":
        \[
        \fDensea \;\defs\; \negation{\words \alphabet (aa) \words \alphabet}\ \bigcap\ \fNob\ \bigcap\ \words \alphabet a \words \alphabet,
        \]
        which expresses that there are no consecutive $a$’s, no occurrence of $b$, and at least one occurrence of $a$.
    
        \item The set of all words containing an \emph{odd number of $b$’s} separated by dense occurrences of $a$ is given by the "marked expression":
        \[
        \fOddbDensea \;\defs\; \fop{\big(b \ \fDensea\ b \ \fDensea \big)}{b},
        \]
        where the "Kleene star" is marked by the leftmost letter $b$.
    
        \item The property that the \emph{domain is well-founded} (i.e., there is no infinite descending sequence) can be expressed by a "power-free expression":
        \[
        \fWellFounded \;\defs\; \negation{\words \alphabet\ \big( \negation{\alphabet \words \alphabet\ +\ \words \alphabet (a + b) \words \alphabet } \big)}.
        \]
        Similarly, a corresponding expression characterizes \emph{reverse well-foundedness}. The domain is finite when it is both well-founded and reverse well-founded.
    
        \item The property that the \emph{domain is scattered} is captured by the "scatter expression":
        \[
        \sop \alphabet,
        \]
        and the property that the occurrences of the letter $b$ are scattered is given by:
        \[
        \sop {\big(\fNob \ b \ \fNob \big)}.
        \]
    \end{itemize}
\end{example}
    
In this work, we complete the results of \cite{colcombeticalp15} by introducing five classes of regular expressions that match the five logical and algebraic characterizations. This leads to our main theorem, which establishes an expressive equivalence between logic (e.g., $\fo$, \dots), algebraic characterizations (e.g., varieties of o-monoids such as $\aperiodic$, \dots), and classes of regular expressions (e.g., "marked expressions", \dots). 

A key technical contribution of our work is a detailed structural analysis of o-monoids. While "J-classes" play a central role in classical algebraic characterizations of regular languages, their behavior in the setting of "o-monoids" is more intricate. In this paper, we provide a framework for analyzing the J-classes and establish several new properties that are crucial for connecting the algebraic, logical and expression perspectives. In particular, our structure theorem for "o-monoids" (\Cref{thm:fundamental}) reveals how idempotents interact in the countable setting. These properties provide for classifying "o-monoids" in terms of their "J-classes" and also the type of idempotents they contain. We then show that these natural classes turn out to be equivalent to interesting classes of expressions. 

Although we have not yet introduced all the necessary technical definitions,\footnote{Note that in the electronic version of this document, most technical terms are hyperlinked to their definitions.} we state the main result here:

\begin{theorem}[main theorem]\AP\label{theorem:main}\phantomintro{main theorem}%
    Let $\monoid$ be the "syntactic" "o-monoid" of a \kl{language} $L\subseteq\words\alphabet$, then:
    \begin{enumerate}
        \item $L$ is {definable} in \fo\ if and only if $\monoid$ satisfies the equations $\foeqs$, if and only if all its "regular J-classes" are "shuffle simple regular", if and only if $L$ is definable by "marked star-free expressions".
        \item $L$ is {definable} in \fofinite\ if and only if $\monoid$ satisfies the equations $\fofiniteeqs$, if and only if all its "ordinal regular" "J-classes" and "ordinal* regular" "J-classes" are "shuffle simple regular", if and only if $L$ is {definable} by "marked expressions".
        \item $L$ is {definable} in \focut\ if and only if $\monoid$ satisfies the equations $\focuteqs$, if and only if it is "aperiodic" and all its "scattered regular" "J-classes" are "shuffle simple regular", if and only if $L$ is {definable} by "power-free expressions".
        \item $L$ is {definable} in \fofinitecut\ if and only if $\monoid$ satisfies the equations  $\fofinitecuteqs$, if and only if all its "scattered regular" "J-classes" are "shuffle simple regular", if and only if $L$ is {definable} by "scatter-free expressions".
        \item $L$ is {definable} in \foscattered\ if and only if $\monoid$ satisfies the equations $\foscatteredeqs$, if and only if all its "shuffle regular" "J-classes" are "shuffle simple regular", if and only if $L$ is {definable} in "scatter expression".
    \end{enumerate}
    Furthermore, the problem of deciding whether a language is definable by any of the above expression is decidable.
\end{theorem}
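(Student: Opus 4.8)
The plan is to establish a cycle of implications for each of the five items, using logic as the pivot point, and then derive decidability from the algebraic characterizations. Each item asserts a four-way equivalence: definability in a logic fragment $\mathcal{L}$, satisfaction of a set of profinite equations $\mathcal{E}$ by the syntactic $\circsymbol$-monoid, a structural condition on the regular $\mathcal{J}$-classes, and definability by a class of expressions. I would prove the equivalences in the shape
\[
\text{logic} \;\Longrightarrow\; \text{expressions} \;\Longrightarrow\; \text{equations/}\mathcal{J}\text{-structure} \;\Longrightarrow\; \text{logic},
\]
handling the algebraic clause (equations $\Longleftrightarrow$ $\mathcal{J}$-class condition) as a separate purely-algebraic equivalence that rides on the structure theorem \Cref{thm:fundamental}. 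The logic-to-algebra direction is essentially the content of \cite{colcombeticalp15} and I would invoke it: each of $\fo$, \fofinite, \focut, \fofinitecut, \foscattered\ is known there to correspond to the variety of $\circsymbol$-monoids defined by the listed equations, so a language definable in the logic has a syntactic $\circsymbol$-monoid satisfying those equations, and conversely.

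\textbf{Expressions versus logic.} For the forward direction (expressions $\Rightarrow$ logic) I would give a routine structural induction: every construction in a \emph{marked star-free expression} (marked concatenation, marked Kleene star, Boolean operations) is first-order definable using the marking letter as the pivot point of a first-order quantifier, and similarly the extra operators in \emph{marked}, \emph{power-free}, \emph{scatter-free}, and \emph{scatter expressions} are definable once one is allowed, respectively, to quantify over finite sets, over cuts, over finite sets and cuts, and over scattered sets—this is exactly why the five expression classes were designed with those operators. The converse (logic $\Rightarrow$ expressions) is the substantive direction and would be proved by induction on the structure of formulae, but mediated by the algebra: given that $L$ is definable in, say, \focut, its syntactic $\circsymbol$-monoid is aperiodic with all scattered regular $\mathcal{J}$-classes shuffle simple regular, and I would then show that any language recognized by such a $\circsymbol$-monoid is \emph{power-free}-expressible by decomposing the recognizing morphism along its $\mathcal{J}$-class structure—peeling off a top $\mathcal{J}$-class, writing the language as a finite union of marked products of lower-complexity pieces with pieces living inside the top class, and using the shuffle-simplicity of regular classes to express the "inside a $\mathcal{J}$-class" behaviour by the scatter/power operators. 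This Kleene-style decomposition is where \Cref{thm:fundamental} does the real work: it controls how idempotents of different flavours (ordinal, ordinal\*, scattered, shuffle, gap-insensitive) sit in the $\mathcal{J}$-order and guarantees that the pieces one peels off are of strictly lower rank, so the induction terminates.

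\textbf{The algebraic clause and decidability.} Separately, I would prove equations $\Longleftrightarrow$ $\mathcal{J}$-class condition directly inside an arbitrary finite $\circsymbol$-monoid: e.g. for item~3, $\aperiodic$ together with "$\scish$ and $\shiss$" is equivalent to "aperiodic and every scattered regular $\mathcal{J}$-class is shuffle simple regular", because by \Cref{thm:fundamental} a regular $\mathcal{J}$-class is scattered regular exactly when its idempotents are scattered idempotents, and the implications $\idscish$, $\idshiss$ evaluated at those idempotents say precisely that such an idempotent is shuffle simple; the analogous bookkeeping handles the other four items. Finally, decidability: given a language by any effective presentation (e.g. an expression or an $\mathsf{MSO}$ sentence), one can compute its syntactic $\circsymbol$-monoid—finiteness and computability of syntactic $\circsymbol$-monoids of $\mathsf{MSO}$-definable languages is the cornerstone from \cite{CartonColcombetPuppis18}—and then check the relevant finite list of equations, or equivalently enumerate the $\mathcal{J}$-classes and test the structural condition; both are finite checks. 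I expect the \textbf{main obstacle} to be the logic-to-expressions direction, specifically organizing the $\mathcal{J}$-class decomposition so that the recursion is well-founded and so that the "inside a regular $\mathcal{J}$-class" fragment is captured by exactly the operator (marked star / power / scatter) available in the target expression class and no stronger one; getting the five cases to line up uniformly, rather than proving five ad hoc decompositions, is the delicate point, and this is precisely what the fine structure in \Cref{thm:fundamental} is meant to enable.
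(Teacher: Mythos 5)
Your plan matches the paper's architecture essentially exactly: logic $\Leftrightarrow$ equations is imported from \cite{colcombeticalp15}, equations $\Leftrightarrow$ $\gJ$-class structure is a separate algebraic equivalence resting on the structure theorem (the paper's \Cref{thm:jclass}), expressions $\Rightarrow$ logic is a routine structural induction, and the substantive algebra-to-expressions direction is an induction peeling off a top $\gJ$-class with all five cases treated uniformly (the paper's \Cref{lemma:main}), with decidability via computing the syntactic $\circsymbol$-monoid. The only slip is your parenthetical claim that a regular $\gJ$-class is scattered regular exactly when \emph{all} its idempotents are scattered idempotents — in fact such a class contains exactly one scattered idempotent — but this does not affect the correctness of the stated equivalence or of the overall plan.
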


We summarize the main contributions of this paper as follows: (1) the definition of regular expressions over "countable words"; (2) a precise structural analysis of the "J-classes" of "o-monoids"; and (3) the establishment of expressiveness equivalences between sub-classes of "o-monoids" and classes of regular expressions. The equivalence between logic and algebraic identities was previously established in \cite{colcombeticalp15}, and the translation from expressions to logic is relatively straightforward.

Concretely, \Cref{theorem:main} is derived by combining the logic-to-algebraic translation from \cite{colcombeticalp15}, the structural insights from \Cref{thm:jclass}, and the expressiveness analysis from \Cref{lemma:main}. The main technical challenge lies in the need to treat all five cases simultaneously.

\subsection*{Related work}
Some algebraic studies of regular languages over infinite words have already been undertaken (e.g., \cite{Etessami99}). In particular, Bes and Carton studied the special case of "scattered countable words@scattered", and established algebraic characterizations for fragments such as first-order logic (\fo) \cite{BesCarton11} and weak monadic second-order logic (\wmso) \cite{BesCartonPersonal}.

Over general countable words, the seminal work of Carton, Colcombet, and Puppis \cite{CartonColcombetPuppis18} introduced the algebraic framework of "o-monoids", which laid the foundation for subsequent logical characterizations. Building on this, we \cite{colcombeticalp15} established characterizations of various logical fragments—such as first-order logic (with and without cuts), and weak monadic second-order logic (with and without cuts)—in terms of varieties of o-monoids. Further developments include the study of the two-variable fragment of first-order logic \cite{amalmfcs16}, temporal logic \cite{bharatlics19} and logics with infinitary first-order quantifiers \cite{bharatlics19}. An alternative approach to understanding o-monoids through decomposition theorems has been explored in \cite{bharatlics19, bharatfct21, bharatjcss23}. Finally, o-monoids have also been applied to the study of first-order separation problems \cite{colcombet22}.

\subsection*{Structure of the document}
In \Cref{section:preliminaries}, we introduce the basic notions related to "countable words".
In \Cref{section:regex}, we introduce various classes of "regular expressions" along with motivating examples.
\Cref{section:algebra} presents the algebraic formalism of "o-monoids", and develops the necessary the structural analysis of "o-monoids". This culminates with the characterization of the varieties by means "J-classes", stated in \Cref{thm:jclass}.
\Cref{section:monoid to expression} is devoted to the technically involved direction of our main result: the construction of expressions from algebraic identities, leading to the proof of \Cref{lemma:main}.


\section{Preliminaries: Languages and Logic}
\label{section:preliminaries}

In this section, we introduce the basic notions concerning "linear orderings" in \Cref{subsection:linear orderings}, then concerning words in \Cref{subsection:words}. In \Cref{subsection:mso}, we introduce "monadic@monadic second-order logic" and "first-order logic" and various other logics.

\subsection{Linear orderings}
\label{subsection:linear orderings}

\AP
A countable ""linear ordering"" $\alpha = (X,<)$ is a countable set~$X$ equipped with a total order $<$. In this case, we say that $X$ is the \intro{domain} of $\alpha$. 
We say that two linear orderings have the same \intro{order-type} if there is an order preserving bijection between their domains. We denote by $\natorder, \negorder, \rationalorder$ the order types of $(\nats, <), (-\nats, <), (\rationals,<)$ respectively.
Given \kl{linear orderings} $(\beta_i)_{i\in\alpha}$ (assumed disjoint up to isomorphism) indexed with a \kl{linear ordering}~$\alpha$,
their  \intro{generalized sum} $\sum_{i\in\alpha}\beta_i$ is the \kl{linear ordering} over the (disjoint) union of the sets of the $\beta_i$'s,
with the order defined by $x<y$ if either $x\in\beta_i$ and $y\in\beta_j$ with $i<j$, or $x,y\in\beta_i$ for some~$i$, and $x<y$ in $\beta_i$.

\AP
Let $\alpha = (X,<)$ be a linear ordering. A \intro{Dedekind cut} (or simply a \reintro{cut}) of $\alpha$ is a left-closed subset $Y \subseteq X$, \kl{i.e.}, for all $x<y$ if $y\in Y$ then $x\in Y$. A \intro{gap} is a \kl{cut} which neither has a maximum element nor its complement has a minimal element.
An ordering $\beta = (Y,<)$ is a ""subordering"" of $\alpha$ if $Y\subseteq X$ and the order on $Y$ is induced by the order on $X$.
We denote $\intro*\subword{\alpha}{Y}$ the induced subordering of $\alpha$. A subset $I\subseteq \alpha$ is an \intro{interval} if whenever $x,y\in I$
and $x<z<y$, $z\in I$. Two \kl{intervals} $I,J$ of a \kl{linear ordering} are \intro{consecutive} if $I$ and $J$
are disjoint and their union is an interval.
\AP Given elements $x,y$ in $\alpha$ we denote by \intro[\openprefix]{\openprefix{y}} the "interval" $\{z \in \alpha \mid z < y \}$ and by \intro[\closedprefix]{\closedprefix{y}} the interval \intro[\openprefix]{\openprefix{y}} union $\{y\}$. We similarly define intervals \intro[\opensuffix]{\opensuffix x} and \intro[\closedsuffix]{\closedsuffix x}. The interval \intro[\openset]{\openset{x}{y}} is defined as the intersection of sets \intro[\openprefix]{\openprefix y} and \intro[\opensuffix]{\opensuffix x}. Similarly we define intervals \intro[\rightopenset]{\rightopenset{x}{y}}, \intro[\leftopenset]{\leftopenset{x}{y}} and \intro[\closedset]{\closedset{x}{y}}. An interval is ""left open"" if the left endpoint is not included. It is ""left closed"" otherwise. Similarly, we have ""right open"" and ""right closed"" intervals. An ""open"" (resp. ""closed"") interval is both left and right open (resp. closed).

\AP
We say linear ordering $\alpha$ is finite (resp. infinite) if its domain is finite (resp. infinite); is ""dense"" if it contains at least two elements and between any two points in the set there is another point; is \intro{scattered} if none of its sub-orderings is \intro{dense}; is a \intro{well ordering} if every subset has a minimal element; an \intro{ordinal} if it does not have an $\omega^*$ sub-ordering. Examples: $(\rationals,<)$ is dense and $(\nats,<)$ and $(\integers, <)$ are scattered.



For more information on linear orderings see \cite{Rosenstein}.

\subsection{Countable words}
\label{subsection:words}

\AP
Fix an \intro{alphabet} $\intro*\alphabet$ which is a finite set of symbols called ""letters"". 
Given a \kl{linear ordering} $\alpha$, a \intro{word} of \intro{domain} $\alpha$ is a mapping from $w: \alpha \rightarrow \alphabet$.
The \kl{domain} of a \kl{word} is~$\alpha$, and is denoted $\dom w$.
\AP ""Countable words"" are words of countable "domain@@word".
\AP The set of all "countable words" over an alphabet~$\Sigma$ is
denoted $\intro*\words \Sigma$ and the set of all words over non-empty countable domain is
denoted $\intro*\nonemptywords \Sigma$.
A \intro{language} is a subset of $\words\alphabet$.
\AP
For a word $w$ and a subset $I \subseteq \dom w$, $\subword{w}{I}$ denotes the \intro{subword} got by restricting $w$ to $I$. 
If $I$ is a convex subset then $\subword{w}{I}$ is called a \intro{factor} of $w$. 
\AP
The \intro{generalized concatenation} of the \kl{words} $(w_i)_{i\in\alpha}$ (supported by disjoint \kl{domains}) indexed by a \kl{linear ordering}~$\alpha$ is
$ \prod_{i \in \alpha} w_i$
and denotes the \kl{word} of \kl{domain} $\sum_i \dom {w_i}$ which coincides with each $w_i$ over $\dom{w_i}$ for all $i\in\alpha$.

\AP
Let us now look at some special words. The \intro{empty word} $\emptyword$, is the only \kl{word} of empty \kl{domain}. The ""omega word"" $\intro*\omegaword{a}$ is the \kl{word} over a single \kl{letter} $a$ and of \kl{domain} $(\nats,<)$.
Similarly the ""omega* word"" $\intro*\omegastarword{a}$ is the word over a single letter $a$ and of domain negative integers. For a language $L$, the language $\omegaword L$ consists of all words $\prod_{i \in \nats} u_i$ where $u_i \in L$ for all $i \in \nats$. Similarly we define $\omegastarword L$.
\AP
Finally, the ""shuffle"" of a non-empty set $S \subseteq \alphabet$ (denoted by $\intro*\shuffleword S$) is a word of \kl{domain}~$(\rationals,<)$ in which all non-empty intervals $(x,y)$ contain at least one occurrence of each letter in $S$. The \kl{word} is unique up to isomorphism (see Shellah \cite{Shelah75}). We can extend the notion of shuffle to a finite set of languages $\mathcal{L} = \{L_1, \dots, L_k\}$. We define $\shuffleword {\mathcal{L}}$ to be the set of all words of the form $\prod_{i \in \rationals} w_{f(i)}$ where $w_{f(i)} \in L_{f(i)}$, and $f$ is the unique shuffle over the set of letters $\{1,2,\dots,k\}$.
\begin{example}
Let the alphabet be $\alphabet = \{a,b\}$. Consider the word $w: \mathbb{Q} \rightarrow \alphabet$ where the letter at the position $\frac{p}{q}$ in reduced fractional form is given by 
\begin{align*}
  w\Big(\frac{p}{q}\Big)=\begin{cases}
               a, \text{ if } q \text{ is odd} \\
               b, \text{ otherwise.}\\
            \end{cases}
\end{align*}
Clearly it is countable, has at least two elements, no minimal nor maximal elemements, and all intervals having at least two elements contain both the letters $a$ and $b$. As a consequence, it is isomorphic to the "shuffle" of~$\{a,b\}$.
\end{example}

\subsection{"First-order@first-order logic", "monadic second-order@monadic second-order logic" and logics in between}
\label{subsection:mso}

We study several logics that are restrictions of "monadic second-order logic" interpreted over "countable words".

\AP
The ""first-order variables"" range over positions of a word and the ""set variables"" (also called ""monadic variables"") range over sets of positions of a word. ""Monadic second-order formulae"" (\reintro{\mso-formulae} for short\phantomintro\mso) are defined according to the following recursive syntax:
\begin{align*}
	x<y \mid x \in X \mid a(x) \mid \phi_1 \wedge \phi_2 \mid \phi_1 \vee \phi_2 \mid  \neg \phi \mid \exists x ~\phi \mid \exists X ~\phi\ ,
\end{align*}
in which~$a$ is a letter in a finite alphabet~$\alphabet$, $x$ is a "first-order variable", $X$ a "monadic variable", and $\phi$, $\phi_1$, $\phi_2$ are "\mso-formulae".\phantomintro{set quantifier}
""First-order formulae"" (or \reintro{\fo-formulae}\phantomintro{\fo}) are "\mso-formulae" which do not use "set quantifiers".


We also consider logics sitting somewhere between \fo\ and \mso. These are defined using restricted forms of set quantifiers.
\begin{definition}\label{definition:logics}
	We consider the following quantifiers:
	\begin{itemize}
	\itemAP $\intro*\existsFinite X\ \phi$ holds if there exists a finite set~$X$ such that~$\phi$ holds.
	\itemAP $\intro*\existsCut X\ \phi$ holds if there exists a "cut" $X$ such that $\phi$ holds.
	\itemAP $\intro*\existsScattered X\ \phi$ holds if there exists a "scattered" set $X$ such that $\phi$ holds.
	\end{itemize}
	\AP The logic~$\intro*\foq[Q]$ is the extension of \fo\ with quantifiers in~$Q$, for~$Q$ a set of
	quantifiers among~$\existsCut$, $\existsFinite$, and $\existsScattered$.
\end{definition}
\AP In particular, we recognize in this definition $\foq[\emptyset]$ is nothing but \fo, and that \fofinite\ is what is usually called ""weak second-order monadic logic"" (\intro*\wmso\ for short). We refer to \cite{colcombeticalp15} for more details about these logics.


\section{Regular Expressions}
\label{section:regex}

\AP
In this section, we define the ""regular operations"" used in this paper. Our expressions build upon the usual operations over finite words -- negation, union (denoted by $+$), intersection, concatenation (denoted by $.$), and ""Kleene star"" (denoted by $\intro[\fop]{}\fop{}$) -- and are extended by the ""scatter operator"" (denoted by $\intro[\sop]{}\sop{}$). 

\AP We define ""scatter expressions"" (or simply ""expressions"") over an alphabet $\alphabet$ by the following grammar:
\[
 S \defs \emptyset \mid \sigma \in \alphabet \mid S + S \mid S \cap S \mid \negation S \mid S S \mid \fop S \mid \sop S
\]
\AP We are also interested in a few sub classes of expressions. A ""power-free expression"" is an expression that does not use the operators $\fop S$ or $\sop S$. A ""scatter-free expression"" is an expression that does not use the operator $\sop S$. 

The semantics of expression $r$ is given by the language $\lang r \subseteq \words \Sigma$ that is defined inductively.
The base cases are $\lang {\emptyset} = \emptyset$ and $\lang {\sigma} = \{\sigma\}$. 
The semantics of the boolean operators are natural.
The semantics of other operators are: $\lang{\reg_1 \reg_2} = \lang {\reg_1}\lang {\reg_2}$, 
\begin{align*}
\lang {\fop \reg} & = \{\prod_{i \in \alpha} u_i \mid \alpha \text{ is a finite ordering and } u_i \in \lang{\reg} \}, \text{ and }\\
\lang{\sop \reg} & = \{\prod_{i \in \alpha} u_i \mid \alpha \text{ is a "scattered ordering@scattered linear ordering" and } u_i \in \lang{\reg}\}.
\end{align*}

\AP
A ""left marked expression"" is an expression of the form $\sigma \reg$ where $\sigma$ is a letter and $r$ is an expression.
We define ""marked expressions"" by the grammar:
\[
  M \defs \emptyset \mid  M + M \mid M \cap M \mid \negation M \mid M \sigma M \mid M \fop {(\sigma M)}
\]
where $\sigma$ is a letter. A ""marked star-free expression"" is a marked expression that does not contain the "Kleene star". We say that "marked expressions" are closed under ""marked concatenation"" and ""marked Kleene star"" in contrast to ""unrestricted concatenation"" and ""unrestricted Kleene star"" for expressions in general.

\begin{remark}
	Let $\mreg_1$ and $\mreg_2$ be "marked expressions". Since the language $\{\epsilon\}$ is definable by a marked expression (see \Cref{sec:examples}.\Cref{itm:epsilon}), we can define expressions like $\fop{(\sigma \mreg_1)}$ within the marked expression syntax.
	
	Moreover, the expression $\fop{(\sigma_1 \mreg_1 + \sigma_2 \mreg_2)}$ is equivalent to the marked expression:
	\[
	\fop{(\sigma_1 \mreg_1 \fop{(\sigma_2 \mreg_2)})} + \fop{(\sigma_2 \mreg_2 \fop{(\sigma_1 \mreg_1)})}.
	\]
	Hence, marked expressions are closed under the "Kleene star" applied to unions of "left marked expressions".
\end{remark}

	\subsection{Examples}\label{sec:examples}
	We now describe several examples that illustrate the expressive power of various classes of expressions. Many of these will also be used in our later arguments. We begin with languages definable using "marked star-free expressions".
	\begin{enumerate}
	\itemAP \label{itm:epsilon}
	  The set $\words \alphabet$ is defined by $\negation \emptyset$. The set of all words \emph{not containing} the letter $a$ is $\words \alphabet \backslash \words \alphabet a \words \alphabet$. The set $\nonemptywords \alphabet \defs \words \alphabet \backslash \epsilon$, where $\epsilon = \negation{\words \alphabet \alphabet \words \alphabet}$ and $\words \alphabet \alphabet \words \alphabet \defs \bigcup_{a \in \alphabet} \words \alphabet a \words \alphabet$.
	
	\itemAP ``There is at least one letter'' is defined by the expression $\words \alphabet \alphabet \words \alphabet$.
	
	\itemAP \label{item:min}
	  ``The domain has a minimum (respectively, maximum)'' is defined by the expressions $\intro*\fMin \defs \alphabet \words \alphabet$ (resp. $\intro*\fMax \defs \words \alphabet \alphabet$).
	
	\itemAP \label{item:ex-dense}
	  ``The domain is dense'' is expressed as $\negation {\words \alphabet \alphabet \alphabet \words \alphabet \cup \epsilon}$.
	
	\itemAP \label{item:ex-ab}
	  The expression $\intro*\fexistab \defs \words \alphabet (ab) \words \alphabet$ accepts words that contain the factor $ab$, and $\intro*\fnoab \defs \negation{\fexistab}$ accepts words that do not.
	
	\itemAP \label{item:ex-omega-ab}
	  ``There are occurrences of the factor $ab$ arbitrarily close to the end of the word, but no maximal one'' is defined as $ \negation{\words \alphabet (ab) \fnoab} \cap \fexistab$.
	
	\itemAP \label{item:ex-finite}
	  ``The domain is finite'' and ``the domain has even size'' are both definable using marked expressions: $\fop \alphabet$ and $\fop{(\alphabet \alphabet)}$, respectively.
	\end{enumerate}
	
	We now turn to examples definable using "power-free expressions".
	
	\begin{enumerate}\setcounter{enumi}{7}
	\itemAP \label{item:ex-ordinal}
	  ``The domain is well-founded'' is defined as the complement of ``there is a cut after which there is no first letter'':\\
	  $\negation{\words \alphabet \negation{\alphabet \words \alphabet + \epsilon}}$.\\
	  Similarly, one can express that the ``domain is reverse well-founded''.
	
	\itemAP
	  ``The domain is finite'' is defined as ``the domain is well-founded and reverse well-founded''.
	
	\itemAP \label{item:ex-gap}
	  ``There is a gap'' is defined as $\intro*\fGap \defs (\negation{\words \alphabet \alphabet + \epsilon})(\negation{\alphabet \words \alphabet + \epsilon})$,\\
	  ``there are no gaps'' by $\intro*\fnogap \defs \negation{\fGap}$,\\
	  ``exactly one gap'' by $\intro*\fonegap \defs (\fnogap \cap \negation{\words \alphabet \alphabet + \epsilon})(\fnogap \cap \negation{\alphabet \words \alphabet + \epsilon})$, and\\
	  ``exactly one gap with a minimum'' by $\intro*\fminonegap$.
	
	\itemAP
	  ``There is an $\omega$-sequence of gaps'' is expressed as\\
	  $\big(\negation{\words \alphabet \fonegap}\ \wedge\ \fGap \big) \words \alphabet$.\\
	  Similarly, one can express the existence of an $\omega^*$-sequence of gaps.
	
	\itemAP \label{item:ex-finite-gap}
	  ``There are at most finitely many gaps'' is accepted by $\intro*\ffingap$, which asserts that there is no $\omega$ or $\omega^*$ sequence of gaps.
	
	\itemAP \label{item:ex-even-gap}
	  ``There is an even number (non-zero) of gaps'' is definable using a scatter-free expression:\\
	  $\fonegap\ \fop {(\fminonegap\ \fminonegap)} \fminonegap$.
	
	\itemAP \label{item:ex-scattered}
	  ``The domain is scattered'' is defined using the scatter expression $\sop \alphabet$;\\
	  ``the letter $a$ is scattered'' is expressed as\\
	  $\sop {\big(\words{(\Sigma \backslash \{a\})} a \words{(\Sigma \backslash \{a\})}\big)}$.
	\end{enumerate}

\subsection{Operations on Languages}
\label{sec:logic-closure}

\AP
In this section, we describe constructions on languages that can be expressed using the operations defined in the previous section. We say that a language $F$ is ""extension closed"" if $\words \alphabet F \words \alphabet \subseteq F$ (equivalently, $F = \words \alphabet K \words \alphabet$ for some language~$K$).

\begin{lemma} \label{lem:operations}
Let $F$ be a language.
\begin{enumerate}
\itemAP\label{item:definitial}%
	The expressions $F \alphabet \words \alphabet$ and $\words \alphabet \alphabet F$ respectively define the languages
	\[
		\intro* \initial(F) \defs \big \{w \mid \exists x\ \subword w {(- \infty, x)} \in F \big\}
			\quad \text{and} \quad
		\intro* \final(F) \defs \big \{w \mid \exists y\ \subword w {(y,\infty)} \in F \big\}.
\]

\itemAP\label{item:defprefix}%
	The expressions $\negation{\negation {F} \alphabet \words \alphabet}$ and $\negation{\words \alphabet \alphabet \negation {F}}$ respectively define the languages
	\[
		\intro* \fprefix (F) \defs\ \big\{w \mid \forall x\ \subword w {(- \infty, x)} \in F \big \}
			\quad \text{and} \quad
		\intro* \fsuffix(F) \defs\ \big\{w \mid \forall y\ \subword w {(y,\infty)} \in F \big \}.
	\]

\itemAP\label{item:defcofinal}%
	The expressions \( \negation{(\words \alphabet \alphabet) \negation{(\words\alphabet \alphabet F \alphabet \words \alphabet)}} \) and \( \negation{(\negation{\words \alphabet \alphabet F\alphabet\words \alphabet})(\alphabet \words \alphabet)} \) respectively define the languages
	\[
		\intro*\cofinal(F) \defs \big\{ w \mid \forall x\exists y\exists z\ x< y < z \wedge \subword w {(y,z)} \in F \big \}, \text{ and }
	\]
	\[
		\intro*\coinitial(F) \defs \big\{ w \mid \forall z \exists x \exists y\  x<y<z \wedge  \subword w {(x,y)} \in F \big \}.
	\]

\itemAP\label{item:definfinity}%
	If $F$ is "extension closed", then the set of words containing infinitely many non-overlapping factors from $F$ can be expressed as:
\begin{enumerate}
\item $\negation{\negation{F \alphabet F}\ \fop{(\alphabet\ \negation{F \alphabet F})}}$ using only "marked" operations, or
\item $\words \alphabet \big(\cofinal(F) + \coinitial(F) \big) \words \alphabet$ using only "unrestricted concatenation".
\end{enumerate}

\itemAP\label{item:defdense}%
	Let $F$ be "extension closed". The expression $\negation{\sop{(\negation{FF})}}$ defines the set of all words containing densely many non-overlapping factors from $F$.
\end{enumerate}
\end{lemma}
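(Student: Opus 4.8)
The plan is to prove the five items separately, in each case by unfolding the semantics of the regular operations on the left and matching it against the first-order description on the right. Only items~4 and~5 require real work, and there the hypothesis that $F$ is extension closed enters through a single monotonicity fact that I would establish first: if $\words\alphabet F\words\alphabet\subseteq F$, then for every word $w$ and every convex $I\subseteq\dom w$ we have $\subword w I\in F$ iff $\subword w J\in F$ for some convex $J\subseteq I$; in particular membership $\subword w I\in F$ is preserved when $I$ is enlarged to any convex superset inside $\dom w$.

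For item~1 I would simply observe that $w\in F\,\alphabet\,\words\alphabet$ iff $\dom w$ decomposes into a prefix interval, a single position $x$, and a suffix interval, with the prefix sent into $F$; that prefix is exactly $\subword w{\openprefix x}$, so the condition is $\exists x\ \subword w{\openprefix x}\in F$, i.e.\ $w\in\initial(F)$, and $\final(F)$ is symmetric. Item~2 is then the De Morgan dual: $\fprefix(F)$ is the complement of $\initial(\negation F)$, which by item~1 is defined by $\negation F\,\alphabet\,\words\alphabet$, hence $\fprefix(F)$ is defined by $\negation{\negation F\,\alphabet\,\words\alphabet}$, and likewise for $\fsuffix$. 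For item~3 I would unfold from the inside out: $w\in\words\alphabet\alphabet F\alphabet\words\alphabet$ iff there are positions $p<q$ of $w$ with $\subword w{\openset p q}\in F$; prepending $\words\alphabet\alphabet$ to the complement of this selects the words having a position $x$ (the maximum of the chosen prefix) such that no such pair $p<q$ lies strictly after $x$; and the outer negation yields exactly $\cofinal(F)$. The expression for $\coinitial(F)$ is treated identically by left--right symmetry.

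For item~4 I would first fix the reading: ``$w$ contains infinitely many non-overlapping factors from $F$'' means $\dom w$ has an infinite family of pairwise disjoint convex subsets each sent into $F$. For part~(b), the $\Leftarrow$ direction: if $\subword w K\in\cofinal(F)$ for a convex $K$, then starting at any point of $K$ and iterating the defining property produces inside $K$ an $\omega$-chain of pairwise disjoint intervals each sent into $F$; symmetrically $\coinitial$ yields an $\omega^*$-chain. Conversely, given infinitely many pairwise disjoint $F$-intervals, the family ordered by $<$ is an infinite linear ordering, hence (by Ramsey's theorem applied to pairs) contains a copy of $\omega$ or of $\omega^*$; taking the convex hull $K$ of such a monotone subchain and fattening its intervals via the monotonicity fact, one checks $\subword w K\in\cofinal(F)$ resp.\ $\coinitial(F)$ -- the only delicate point being that $K$ has no maximum (resp.\ minimum), which holds since an attained supremum of the subchain would force it to be finite. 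For part~(a) I would show $w\notin\negation{F\alphabet F}\,\fop{(\alphabet\,\negation{F\alphabet F})}$ iff $w$ has infinitely many non-overlapping $F$-factors: if $w=v_0(\sigma_1 v_1)\cdots(\sigma_n v_n)$ with each $v_i\in\negation{F\alphabet F}$, delete the $n$ connector positions to split $\dom w$ into $n+1$ convex blocks; since pairwise disjoint intervals cover at most $n$ connectors, infinitely many of the supposed $F$-intervals lie inside single blocks, so some block contains three of them $I<I'<I''$, and a point of $I'$ together with the monotonicity fact forces that block into $F\alphabet F$, a contradiction. The reverse implication -- a word with at most $k$ pairwise disjoint $F$-intervals admits such a decomposition -- I would prove by induction on $k$, choosing the cut positions greedily; this bookkeeping is the fiddliest part of item~4.

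Item~5 is where I expect the main obstacle. First, using that Dedekind cuts include gaps together with the monotonicity fact, $w\in FF$ iff $w$ has two disjoint $F$-factors, so $\negation{FF}$ is the set of words whose $F$-factors pairwise intersect, and $\sop{(\negation{FF})}$ is the set of words admitting a scattered-indexed decomposition into blocks with that property. I would then show its complement is the set of words carrying densely many non-overlapping $F$-factors, i.e.\ a family of pairwise disjoint $F$-intervals densely ordered by $<$. One direction is easy in contrapositive form: such a dense family, via a choice of a point in each interval and of the block of a scattered decomposition containing it, would embed a dense ordering into a scattered index ordering, which is impossible. The hard direction is that a word with no such dense family admits a scattered decomposition into $\negation{FF}$-blocks; I would obtain this by a condensation argument -- repeatedly contracting maximal convex blocks whose $F$-factors pairwise intersect and arguing, by induction on the Hausdorff rank of the resulting quotient, that the process terminates with a scattered ordering precisely when no dense family of disjoint $F$-factors survives. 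Making this condensation well defined and proving its termination is the technical heart of the lemma; everything else is routine unwinding of definitions.
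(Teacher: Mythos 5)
Your proposal is correct and, on items 1--3 and on the forward direction of item 4(a), it follows essentially the same route as the paper: unfold the semantics, take De Morgan duals, and use extension-closedness to upgrade ``contains an $F$-factor'' to ``lies in $F$''. Where you genuinely diverge is in how much you prove. For the reverse direction of 4(a) the paper fixes one factorization witnessing $w\notin L$ and explicitly refines each block around a chosen letter, whereas you propose an induction on the maximal number $k$ of pairwise disjoint $F$-intervals with greedily chosen cut positions; both work, and your pigeonhole argument for the forward direction (three disjoint $F$-intervals forced into one block) is cleaner than the paper's assertion that ``every finite factorization must include'' a splittable block. For 4(b) the paper gives no proof at all, and your Ramsey extraction of an $\omega$- or $\omega^*$-subchain of disjoint $F$-intervals, followed by taking the convex hull and checking it lands in $\cofinal(F)$ resp.\ $\coinitial(F)$, is exactly the missing argument. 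For item 5 the paper offers a single sentence; your two directions (embedding a dense family into a scattered index set for the easy direction, and a condensation argument by Hausdorff rank for the hard one) are the standard and correct way to actually prove it. So your plan is, if anything, more complete than the paper's proof.

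One step you should make explicit, since your induction in 4(a) presupposes it: you need that a word with no \emph{infinite} family of pairwise disjoint $F$-intervals in fact admits a \emph{uniform finite bound} $k$ on such families. This is true (e.g.\ via the condensation $x\sim y$ iff $[x,y]$ contains only boundedly many disjoint $F$-intervals, which is transitive because a family in $[x,z]$ splits into a family in $[x,y)$, a family in $(y,z]$, and at most one interval through $y$), but it is not automatic, and the paper silently assumes the equivalent statement that $w\notin L$ yields a finite factorization into $\negation{F\alphabet F}$-blocks. Also note a cosmetic edge case in 4(b) that affects the paper as much as you: the empty word satisfies $\cofinal(F)$ vacuously, so $\words\alphabet\,\cofinal(F)\,\words\alphabet$ as written is all of $\words\alphabet$; the intended reading clearly requires the middle factor to be non-empty, and your argument produces a non-empty $K$, so your proof is fine once that reading is fixed.
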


\begin{proof}
We prove each item separately:
\begin{enumerate}
\item The expression $F \alphabet \words \alphabet$ defines the set of words that have some prefix in $F$, followed by a letter. This is precisely the definition of $\initial(F)$. The dual case for $\final(F)$ is symmetric.

\item Let $L \defs \negation{F} \alphabet \words \alphabet$, which includes all words that have a prefix in $\negation F$ followed by a letter. Hence, its complement consists of words where every such prefix lies in $F$, which matches $\fprefix(F)$. The dual case for $\fsuffix(F)$ is similar.

\item Suppose $w \notin \cofinal(F)$. Then there exists $x \in \dom w$ such that for all $y,z \in \dom w$ with $x < y < z$, we have $\subword w {(y,z)} \notin F$. Equivalently, $\subword w {(x,z)} \notin \words \alphabet \alphabet F$, and hence all prefix of $\subword w {(x, \infty)}$ is not in $\words \alphabet \alphabet F$. In other words, $w$ satisfies the expression $(\words \alphabet \alphabet) (\negation{\words \alphabet \alphabet F \alphabet \words \alphabet})$. 

For the other direction, consider a $w \in (\words \alphabet \alphabet) (\negation{\words \alphabet \alphabet F \alphabet \words \alphabet})$. Then, there is a position $x \in \dom w$ such that for all $z > x$, $\subword w {(x,z)} \notin \words \alphabet \alphabet F$ or equivalently for all $y$ where $z > y > x$, $\subword w {(y,z)} \notin F$. Therefore $w \notin \cofinal(F)$.

\item 
	Let $L$ be the set of all words that contain infinitely many non-overlapping factors from $F$, where $F$ is an extension-closed language. Let
	\[
	H = \negation{F \alphabet F} \fop{(\alphabet\ \negation{F \alphabet F})}.
	\]
	
	We prove that $w \in L$ if and only if $w \notin H$.
	
	\medskip
	\noindent
	\emph{($\Rightarrow$) Suppose $w \in L$.} Then every finite factorization of $w = u_1 u_2 \cdots u_k$ must include at least one factor $u_i$ that can be further factorized as $u_i = x\sigma y$ for some $\sigma \in \alphabet$ such that both $x, y$ contain a factor from $F$. Since $F$ is "extension-closed", this is equivalent to saying $x, y \in F$. Therefore, $w$ cannot be factorized in such a way that all $u_i \in \negation{(F \alphabet F)}$. Hence, $w \notin H$.

	\medskip
	\noindent
	\emph{($\Leftarrow$) Suppose $w \notin L$.} Then there exists a finite factorization $w = u_1 u_2 \cdots u_k$ such that each $u_i$ cannot be factorized into $x \sigma y$ with both $x, y \in F$. We now show that $w \in H$.

	If $u_i$ has a last letter we can refine the factorization $u_i = x_i' \sigma_i z_i$ where $z_i = \varepsilon$.

	If $u_i$ does not have a last letter, we can refine the factorization $u_i = x_i' \sigma_i z_i$ where for all factorization of $z_i = p \sigma q$, we have $p \not \in F$.

	Similarly, if $x_i'$ has a first letter we can refine the factorization $x_i' = x_i \gamma_i y_i$ where $x_i = \varepsilon$.
	If $x_i'$ does not have a first letter, we can refine the factorization $x_i' = x_i \gamma_i y_i$ where for all factorization of $x_i = q \sigma p$, we have $p \not \in F$.

	For each $i$, consider the factorization of \( u_i = x_i\ \sigma_i \ y_i\ \gamma_i\ z_i \). We have that exactly one of the following conditions holds:
	\begin{itemize}
		\item $y_i \in F$,
		\item $x_i \in F$ and for all factorizations of $x_i$ as $q \sigma p$ with $p \not \in F$,
		\item $z_i \in F$ and for all factorizations of $z_i$ as $p \sigma q$ with $p \not \in F$.
	\end{itemize}
	
	Now consider the following refinement of the original factorization:
	\[
	w = x_0\ \sigma_0 y_0\ \gamma_0 z_0x_1\ \sigma_1 y_1\ \gamma_1 z_1x_2\ \sigma_2y_2\  \cdots\ \gamma_kz_k.
	\]
	We claim that each factor (between marked letters) in this expression belongs to $\negation{(F \alphabet F)}$. Precisely, for each $i$, the words $x_0$, $z_k$, $y_i$ and $z_{i-1}x_{i}$ are in $\negation{(F \alphabet F)}$. It folows from our construction that $x_0$, $z_k$ and $y_i$ satisfies this condition. We now verify this for $z_{i-1}x_{i}$.
	
	Consider an arbitrary factorization $z_{i-1}x_{i} = p\sigma qx_i$ or $z_{i-1}x_{i} = z_{i-1}q\sigma p$. From our construction, we have $p \notin F$ in both the cases. 
	
	We conclude that $w \in H$.
	
\item The proof for dense occurrence is similar. A word contains densely many $F$-factors if every non-empty open interval includes an $F$-factor. This is captured precisely by $\negation{\sop{(\negation{FF})}}$.
\end{enumerate}
\end{proof}






\section{Algebras and "o-monoids"}
\label{section:algebra}
In this section, we first recall the algebraic framework that was introduced in \cite{CartonColcombetPuppis18} and further developed in \cite{colcombeticalp15}. We then give several interesting properties of this algebra, including different subclasses and decidable characterizations of these subclasses.
\subsection{"o-monoids", "o-algebras", "recognizability"}
\label{subsection:o-monoids}
\AP
A ""o-monoid"" $\monoid = \monoidOf \monoidset$ is a set $\monoidset$ equipped with an operation $\product:\words \monoidset \rightarrow \monoidset$, called the \intro{product}, that satisfies $\product(a)=a$ for all $a\in \monoidset$, and the \intro{generalized associativity} property: for every \kl{words} $u_i$ over $\monoidset$ with $i$ ranging over a \kl{countable linear ordering} $\alpha$,
$\product\left(\prod_{i\in\alpha}u_i\right)=\product\left(\prod_{i\in\alpha}\product(u_i)\right)$.
We reserve the notation $\unit=\product(\emptyword)$ for the \intro{identity element}.

An example of a
"o-monoid" is the free "o-monoid" $(\words \alphabet, \emptyword, \prod)$ over the
alphabet $\alphabet$ with the product being the "generalized concatenation".
Now we discuss some natural algebraic notions.
A ""morphism"" from a "o-monoid" $\monoidOf \monoidset$ to a "o-monoid" $\monoidOf['] \monoidset$ is a map $h: \monoidset \rightarrow \monoidset'$ such that, for
every $w \in \words{\monoidset}$, $h(\pi(w))=\pi'(\bar{h}(w))$ where $\bar{h}$ is the pointwise extension of $h$ to words.
We skip the notions sub-"o-monoid" and direct products since they are as expected.
We say $\monoid$ \intro{divides} $\monoid'$ if there exists a sub-"o-monoid"
$\monoid''$ of $\monoid'$ and a surjective morphism from $\monoid''$ to $\monoid$.

\AP
A "o-monoid" $\monoidOf \monoidset$
is said to be finite if \monoidset\ is so. But, even for a finite "o-monoid",
the product operation $\product$ has an infinitary description. However, $\product$ can be captured using finitely presentable ""derived operations"".
Corresponding to a "o-monoid" $\monoidOf{\monoidset}$ there is an induced ""o-algebra"" $\monoid = \algebraOf \monoidset$ where the operations are defined as following: for all $a, b \in \monoidset$,
\AP $a \intro*\productoper b = \product(ab)$, the ""omega power"" of $a$, $\intro*\omegaoper a = \product(\omegaword a)$, the ""omega* power"" of $a$, $\intro*\omegastaroper a = \product(\omegastarword a)$ and for all $\emptyset \neq E \subseteq \monoidset$, the ""shuffle power"" of $E$, $\intro*\shuffleoper E = \product(\shuffleword E)$.
For a singleton set $\{m\}$, we write $\shuffleoper m =\shuffleoper{\{m\}}$. These "derived operators" satisfy certain natural axioms (see \cite{CartonColcombetPuppis18} for details):
\begin{enumerate}
\item $(\monoidset,\unit,\productoper)$ forms a monoid with $\unit$ as identity element.
\item for all $a,b \in \monoidset$, $\omegaoper a = a \productoper \omegaoper a$, $\omegaoper{(a \productoper b)} = a\productoper \omegaoper{(b \productoper a)}$ and $\omegaoper{(a^n)} = \omegaoper{a}$ for all $n > 0$.
\item for all $a,b \in \monoidset$, $\omegastaroper a = \omegastaroper a \productoper a$, $\omegastaroper{(a \productoper b)} = \omegastaroper{(b \productoper a)} \productoper b$ and $\omegastaroper{(a^n)} = \omegastaroper{a}$ for all $n > 0$.
\item For all non-empty subset $P \subseteq \monoidset$, every element $c \in P$ and every subset $P' \subseteq P$ and every non-empty subset $P'' \subseteq \{\shuffleoper P, a \productoper \shuffleoper P, \shuffleoper P \productoper b, a \productoper \shuffleoper P \productoper b \mid a, b \in P\}$, we have $\shuffleoper P = \shuffleoper P \productoper \shuffleoper P = \shuffleoper P \productoper c \productoper \shuffleoper P = \omegaoper{(\shuffleoper P)} = \omegaoper{(\shuffleoper P \productoper c)} = \omegastaroper{(\shuffleoper P)} = \omegastaroper{(c \productoper \shuffleoper P)} = \shuffleoper{(P' \cup P'')}$.
\item $\unit = \omegaoper \unit = \omegastaroper \unit = \shuffleoper \unit$ and for all non-empty subset $P \subseteq \monoidset$, $\shuffleoper P = \shuffleoper{(P \cup \{1\})}$.
\end{enumerate}
\AP
It has been established in \cite{CartonColcombetPuppis18} that
an {\em arbitrary} finite "o-algebra" $\algebraOf \monoidset$ satisfying these
natural axioms is induced by a unique "o-monoid" $\monoidOf \monoidset$.
It is rather straightforward to define the notions of morphisms, subalgebras,
direct-products as well as division for "o-algebra".

\AP
The "o-algebras" can be used to \kl{recognize} \kl{languages}. Consider a "o-algebra"
$\monoid=\algebraOf{\monoidset}$, a map $h$ from an \kl{alphabet} $\alphabet$ to $M$ and a set $F\subseteq M$,
then $\monoid,h,F$ \intro{recognizes} the \kl{language} $L=\{u\in\words\alphabet\mid\product(h(u))\in F\}$, where $h$
has been extended implicitly into a map from $\words\alphabet$ to $\words M$ and $\product$ is the unique generalized associative product defined by the "o-algebra".
\AP Given a language $L$ recognized by a "o-algebra",
there exists a ""syntactic"" "o-algebra" $\monoid$ which is minimal in the sense that $\monoid$ divides all "o-algebras" that recognize $L$ \cite{CartonColcombetPuppis18}. It was also shown in \cite{CartonColcombetPuppis18} that a language is \mso-definable if and only if it is recognized by a finite "o-algebra".

Let us consider a few examples of "o-algebras". The examples may contain a special element called \intro[zero]{}$\zero$. It is an absorbing element and $\morphism(w) = 0$ for any $w$ that has $\zero$ somewhere. Since the operations on $\unit$ and $\zero$ are fixed we skip those in the examples.
\begin{example}
\AP
The "o-monoid" $\intro*\monoidMin$ = $\monoidOf{\{\unit,\ci,\oi\}}$ where the operations are: $\ci \productoper x = \ci$ and $\oi \productoper x = \oi$ for all $x$. The omega / omega* and shuffle power operations are: $\omegaoper \ci = \ci$ and $\omegastaroper \ci = \omegaoper \oi = \omegastaroper \oi = \oi$ and $\shuffleoper S = \oi$ for all $S \neq \{\unit\}$. For the alphabet $\alphabet = \{a,b\}$ and "morphism" $h:\words \alphabet \rightarrow \monoidMin$ that extends $h(a)=h(b) = \ci$, the tuple $\monoidMin,h,\{\ci\}$ \kl{recognizes} the \kl{language}  of words that have a first letter.
\end{example}
\begin{example}\AP
	The "o-monoid" $\intro*\monoidGap$= $\monoidOf{\{\unit,\cci, \oci, \coi, \ooi, \zero\}}$. Operations: $\shuffleoper S = \zero$ for all $S \neq \{\unit\}$ and those given below.
\begin{align*}
\begin{array}{c|cccccc|c|c}
\productoper&\,\,\cci\,&\,\coi\,&\,\oci\,&\,\ooi\,&\zero &\,\quad&\, \algomegasymb\, &\,\algomegastarsymb\\
\hline
\cci	\,&\cci	&\coi&\cci&\coi\,	&\zero &\,	&\,\coi\,&\,\oci	\\
\coi	\,&\cci&\coi	& \zero & \zero\,			&\zero &\,	&\,\coi\,&\,\ooi		\\
\oci	\,&\oci &\ooi &\oci &\ooi \,	&\zero &\,	&\,\ooi \,&\,\oci	\\
\ooi	\,& \oci &\ooi &\zero&\zero\,					&\zero &\,	&\,\zero\,&\,\zero \\
\zero & \zero & \zero & \zero & \zero & \zero &  & \zero & \zero
\end{array}
\end{align*}
Let the alphabet be $\alphabet = \{a,b\}$ and \kl{morphism} $h: \words \alphabet \mapsto \monoidGap$ extends the map $h(a)=h(b)=\ci$. Then $\monoidGap,h,\{\zero\}$ \kl{recognizes} the \kl{language} of all words that contain at least one gap. 
\end{example}
\begin{example}\AP%
	The "o-algebra" $\intro*\monoidPerfectlyDense = \monoidOf {\{\unit, s, g, \zero\}}$ where the "derived operations" are defined as follows:
	\begin{align*}
	\begin{array}{c|cc|c|c}
	\productoper&\,\,s\,&\,g\,&\,\algomegasymb\, &\,\algomegastarsymb\\
	\hline
	s	\,&g	&g&\,g\,&\,g	\\
	g	\,&g&g	&\,g\,&\,g		\\
	\end{array}
	&&
	\shuffleoper S&=\begin{cases}
				\unit&\text{if }S = \{\unit\} \\
				\zero & \text{if } s \in S \\
				g&\text{otherwise}
				\end{cases}
	\end{align*}
	For  $\alphabet = \{a,b\}$ and "morphism" $h\colon\words \alphabet \to \monoidPerfectlyDense$ that extends 	$h(a)=s$ and $h(b)=g$, the tuple $\monoidPerfectlyDense,h,\{\zero\}$ \kl{recognizes} the \kl{language}  of words that contain an $a$-labelled set $X$ such that all points $x < y < z$ with $y \in X$ are such that $(x,y)$ and $(y,x)$ both intersect $X$.
\end{example}
\begin{example}
\AP
The "o-algebra" $\intro*\monoidEven = \monoidOf {\{\unit, s, s^2, \zero\}}$ where $s^2 = s \productoper s = s^2 \productoper s^2$ and $s = s \productoper s^2 = s^2 \productoper s$. The "omega@@power", "omega*@@power" and "shuffle power" of both $s$ and $s^2$ is the $\zero$ element.  For an alphabet $\alphabet = \{a,b\}$ and a map $h(a) = s$ and $h(b) = \unit$ we have that $\monoidEven, h, \{s^2\}$ \kl{recognizes} the \kl{language} of words that contain even number of $a$'s.
\end{example}

%

\subsection{Structure theorem for "o-monoids"}
\label{subsection:green's relations}
Let us fix a finite "o-algebra" $\monoid = \algebraOf{\monoidset}$ and associated "o-monoid" $\monoidOf{\monoidset}$. For the rest of the paper, we can either view that $\monoid$ is a "o-algebra" or $\monoid$ is a "o-monoid" and the operators are derived operators. Our discussion in previous section shows that both views are consistent.

\AP
Green's relations are related to the theory of ideals of monoids.
 We define the following relations (\intro{Green's relations}) between two elements $a,b$ in $\monoid$:\AP
\phantomintro\Rleq\phantomintro\Req\phantomintro\Lleq\phantomintro\Leq\phantomintro\Jleq\phantomintro\Jeq\phantomintro\Heq\phantomintro{$\gJ$-equivalent}
\begin{align*}
	& a\reintro*\Rleq b\ \text{if }a=b\productoper x~\text{for $x\in\monoidset$},\
	a\reintro*\Req b\ \text{if }a\Rleq b\ \text{and}\  b\Rleq a,\\
	& a\reintro*\Lleq b\ \text{if }a=x\productoper b~\text{for $x\in\monoidset$},\
	a\reintro*\Leq b\ \text{if }a\Lleq b\ \text{and}\  b\Lleq a,\\
	& a\reintro*\Jleq b\  \text{if }a=x\productoper b\productoper y~\text{for $x,y\in\monoidset$},\
	a\reintro*\Jeq b\  \text{if }a\Jleq b\ \text{and}\ b\Jleq a,\\
    & \text{and}\ a\reintro*\Heq b\ \text{if }a\Leq b\ \text{and}\ a\Req b.
\end{align*}\AP
We write $a \reintro*\nReq b$ (resp. $a \reintro*\nLeq b, a \reintro*\nJeq b, a \reintro*\nHeq b$) if $a$ is not $\Req$ (resp. $\Leq, \Jeq, \Heq$) equivalent to $b$. 
\AP
We also denote by $a \intro*\Rl b$ to mean $a \Rleq b$ and $a \nReq b$ (and similarly for $a \intro*\Ll b$ and $a \intro*\Jl b$).
\AP The ""J-class"" of $b \in \monoidset$ is the set of all elements $\intro*\Jclass(b) = \{a\mid a \Jeq b\}$. Similarly, we denote by $\intro*\Rclass(b), \intro*\Lclass(b), \intro*\Hclass (b)$ the ""R-classes"", ""L-classes"" and ""H-classes"" of $b$. The best way to view the "J-classes" in a finite monoid is the ``egg-box'' view. The "R-classes" and "L-classes" form rows and columns in the "J-class".

\AP
An \intro{idempotent} is an element $e \in \monoidset$ such that $e \productoper e=e$.
Since $(\monoidset,\productoper)$ is a "monoid", the properties of monoids is carried over to "o-monoids" (see \cite{Pin86} for details).
\begin{proposition}\label{lem:monoidprop}
	The following hold for elements $a, b \in \monoidset$ and idempotents $e,f \in \monoidset$.
	\begin{enumerate}
	\item (Green's Lemma) \label{itm:green} Let $a \Req b$ and $b = a \productoper c$. Then $x \rightarrow x \productoper c$ is a bijection from $\Lclass(a)$ to $\Lclass(b)$.
	\item \label{itm:JlRisR} If $a \Jeq b$ and $a\Rleq b$, then $a \Req b$.
	\item \label{itm:location} (Location lemma) Let $s \Jeq t$. Then $s \Jeq (s \productoper t)$
			if and only if $\Rclass(t) \cap \Lclass(s)$ contains an idempotent.
	\item \label{itm:Hclass} $G$ is a group with identity $e$ if and only if $\Hclass(e) = G$. For any element $a$, $\Hclass(a) = \{a\}$ if and only if $\Jclass(a)$ do not contain a non-trivial group.
	\item \label{itm:Jfall} Let $b,t,c,a \in \monoid$,
			such that $b \productoper t \Jgeq a$, $t \productoper c \Jgeq a$
			and $b \productoper t \productoper c \nJgeq a$. Then $t \Jg a$.
	\end{enumerate}
\end{proposition}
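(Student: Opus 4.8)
The plan is to observe that each of the five statements is phrased entirely in terms of the binary product $\productoper$, the unit $\unit$, the Green preorders $\Rleq,\Lleq,\Jleq$ together with their associated equivalences, and the notion of idempotent ($e\productoper e=e$); none of them refers to the infinitary product $\product$ or to the derived power operations. By the first axiom of an "o-algebra", $(\monoidset,\unit,\productoper)$ is an ordinary finite monoid, so every item is an instance of a classical theorem on Green's relations and can be imported verbatim from the standard theory (e.g.\ \cite{Pin86}). Consequently the proof is largely a matter of citation; the only items I would spell out are \ref{itm:JlRisR} and \ref{itm:Jfall}, whose formulations differ slightly from the usual textbook phrasings.

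For item~\ref{itm:green} (Green's Lemma), item~\ref{itm:location} (the Location Lemma), and item~\ref{itm:Hclass} (the identification of an $\Hclass$-class containing an idempotent $e$ with the maximal subgroup at $e$, and the criterion that $\Hclass(a)=\{a\}$ iff $\Jclass(a)$ carries no nontrivial subgroup) I would simply cite the corresponding results of \cite{Pin86}. For item~\ref{itm:JlRisR} I would invoke \emph{stability} of finite monoids: from $a\Rleq b$ write $a=b\productoper x$; then $b\Jeq b\productoper x$ together with $b\Rleq b$ and finiteness forces $b\Req b\productoper x=a$, i.e.\ $a\Req b$. The left--right dual statement --- that $a\Jeq b$ and $a\Lleq b$ imply $a\Leq b$ --- follows by the mirror argument, and I use it below.

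It then remains to derive item~\ref{itm:Jfall}. From $t\Jgeq b\productoper t\Jgeq a$ we already get $t\Jgeq a$, so it suffices to rule out $t\Jeq a$. Assume $t\Jeq a$. Then $b\productoper t\Jleq t$ and $b\productoper t\Jgeq a\Jeq t$, hence $b\productoper t\Jeq t$; since also $b\productoper t\Lleq t$, the dual of item~\ref{itm:JlRisR} gives $b\productoper t\Leq t$. Symmetrically, $t\productoper c\Jeq t$ and $t\productoper c\Rleq t$, so item~\ref{itm:JlRisR} gives $t\productoper c\Req t$. As $\Req$ is a left congruence, $b\productoper(t\productoper c)\Req b\productoper t$, i.e.\ $b\productoper t\productoper c\Req b\productoper t$, while $b\productoper t\Leq t$; and in a finite monoid $\Req$ followed by $\Leq$ is contained in $\Jeq$ (the collapse $\mathcal D=\mathcal J$), so $b\productoper t\productoper c\Jeq t\Jeq a$, i.e.\ $b\productoper t\productoper c\Jgeq a$. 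This contradicts the hypothesis $b\productoper t\productoper c\nJgeq a$. Hence $t\nJeq a$, and combined with $t\Jgeq a$ this yields $t\Jg a$.

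I do not expect a genuine obstacle here: the only care needed is to match items \ref{itm:JlRisR} and \ref{itm:Jfall} correctly to the stability property and to the $\mathcal D=\mathcal J$ phenomenon for finite monoids, and to keep straight that $\Req$ is a left congruence while $\Leq$ is a right congruence.
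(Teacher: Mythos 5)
Your proposal is correct and matches the paper's treatment, which simply observes that $(\monoidset,\productoper)$ is an ordinary finite monoid and imports these facts from the classical theory of Green's relations (citing \cite{Pin86}) without further proof. Your additional spelled-out arguments for items \ref{itm:JlRisR} (stability) and \ref{itm:Jfall} are sound — indeed, for item \ref{itm:Jfall} you do not even need the full $\mathcal D=\mathcal J$ collapse, since $b\productoper t\productoper c\Req b\productoper t$ already gives $b\productoper t\productoper c\Jeq b\productoper t\Jeq t$ directly.
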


We now state the \emph{Structure theorem} for "o-monoids".
\begin{theorem}[Structure theorem for "o-monoids"]
\label{thm:fundamental}
The following properties hold true for any "o-monoid".
\begin{enumerate}
\item 
\label{itm:omegaL} Let $e$ and $f$ be idempotents. If $e \Jeq f$, then $\omegaoper e \Leq \omegaoper f$ and $\omegastaroper e \Req \omegastaroper f$. In particular, if $e \Jeq f \Jeq \omegaoper e$ then $f \Req \omegaoper f$, and if $e \Jeq f \Jeq \omegastaroper e$ then $f \Leq \omegastaroper f$.


\item \label{itm:nogroup} Let $e$ be an idempotent. If $e \Jeq \omegaoper e$ or $e \Jeq \omegastaroper e$, then $\Jeq(e)$ do not contain any non-trivial group. In other words $\card{\Hclass(e)} = 1$.
%

\item \label{itm:shuffle} If $\shuffleoper S \Jeq \shuffleoper R$ then $\shuffleoper S = \shuffleoper R$.
\end{enumerate}
\end{theorem}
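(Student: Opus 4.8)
The plan is to prove the three parts in the order \ref{itm:omegaL}, \ref{itm:nogroup}, \ref{itm:shuffle}, since part~\ref{itm:nogroup} will use part~\ref{itm:omegaL} and part~\ref{itm:shuffle} will use both. Throughout I would freely use the classical monoid facts of \Cref{lem:monoidprop}, especially the implication $a \Jeq b \wedge a \Rleq b \Rightarrow a \Req b$ (and its left--right dual), the Location lemma (\Cref{lem:monoidprop}.\ref{itm:location}), and \Cref{lem:monoidprop}.\ref{itm:Hclass}, together with the $\circsymbol$-algebra axioms: the absorption identities $\omegaoper a = a\productoper\omegaoper a$ and $\omegastaroper a = \omegastaroper a\productoper a$, the cyclic identities $\omegaoper{(a\productoper b)} = a\productoper\omegaoper{(b\productoper a)}$ and $\omegastaroper{(a\productoper b)} = \omegastaroper{(b\productoper a)}\productoper b$, the identity $\omegaoper{(a^n)} = \omegaoper a$ for $n>0$, and the axiom that each $\shuffleoper P$ is an idempotent with $\omegaoper{(\shuffleoper P)} = \omegastaroper{(\shuffleoper P)} = \shuffleoper P$. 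As a preliminary observation I would record the easy special case of \ref{itm:omegaL}: if $p,q$ are idempotents with $p\Req q$, then $p\productoper q = q$ and $q\productoper p = p$ (standard), hence $\omegaoper p = \omegaoper{(q\productoper p)} = q\productoper\omegaoper{(p\productoper q)} = q\productoper\omegaoper q = \omegaoper q$; dually $p\Leq q$ gives $\omegastaroper p = \omegastaroper q$.

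For the general case of \ref{itm:omegaL}, given idempotents $e\Jeq f$, I would fix a witness $f = aeb$ of $f\Jleq e$ and normalise it by replacing $a$ with $f\productoper a$ and $b$ with $b\productoper f$; afterwards $f\productoper a = a$ and $b\productoper f = b$, which forces $a\Req f$ and, since $eb = (eb)\productoper f$ and $f = a\productoper(eb)$, also $eb\Leq f$. Two applications of the cyclic $\omega$-identity rewrite $\omegaoper f = \omegaoper{(aeb)}$ as $(aeba)\productoper\omegaoper{(eba)}$, and $aeba = (aeb)\productoper a = f\productoper a = a$, so $\omegaoper f = a\productoper\omegaoper{(eba)}$. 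The role of the normalisation is that $eba$ is now an idempotent — from $aeba = a$ one gets $(eba)(eba) = e\productoper b\productoper(aeba) = eba$ — and it is $\Jeq$-equivalent to $e$: writing $eba = (eb)\productoper a$ with $eb\Leq f\Req a$, the Location lemma applies because $\Rclass(a)\cap\Lclass(eb) = \Hclass(f)$ contains the idempotent $f$, giving $eba\Jeq eb\Jeq e$. Since $eba\Rleq e$ and $eba\Jeq e$, \Cref{lem:monoidprop}.\ref{itm:JlRisR} yields $eba\Req e$, so by the special case $\omegaoper{(eba)} = \omegaoper e$ and hence $\omegaoper f = a\productoper\omegaoper e \Lleq \omegaoper e$. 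Exchanging $e$ and $f$ gives $\omegaoper e \Lleq \omegaoper f$, so $\omegaoper e \Leq \omegaoper f$; the statement $\omegastaroper e \Req \omegastaroper f$ follows by the mirror argument (reversing every word interchanges $\omegaoper{}$ with $\omegastaroper{}$ and $\Lleq$ with $\Rleq$ while preserving all axioms). The ``in particular'' clause is then immediate: if moreover $\omegaoper e \Jeq e$, then $\omegaoper f \Jeq \omegaoper e \Jeq f$, and since $\omegaoper f = f\productoper\omegaoper f \Rleq f$ we get $f\Req\omegaoper f$ by \Cref{lem:monoidprop}.\ref{itm:JlRisR}; dually for $\omegastaroper{}$.

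For \ref{itm:nogroup} I would argue by contradiction. If some idempotent $e'\Jeq e$ has $\card{\Hclass(e')}>1$, then applying \ref{itm:omegaL} to $e,e'$ gives $\omegaoper{e'}\Jeq\omegaoper e\Jeq e\Jeq e'$, so we may assume $e'=e$, i.e.\ $\Hclass(e)$ is a non-trivial group with $e\Jeq\omegaoper e$. From $\omegaoper e\Rleq e$ and $\omegaoper e\Jeq e$ we obtain $\omegaoper e\Req e$, so $e = \omegaoper e\productoper z$ for some $z$. Choosing $g\in\Hclass(e)\setminus\{e\}$ we have $e\productoper g = g = g\productoper e$ and $g^k = e$ for some $k\ge 1$; then $\omegaoper g = \omegaoper{(g^k)} = \omegaoper e$, while the cyclic identity gives $\omegaoper g = \omegaoper{(g\productoper e)} = g\productoper\omegaoper{(e\productoper g)} = g\productoper\omegaoper g$, so $g\productoper\omegaoper e = \omegaoper e$; hence $g = g\productoper e = g\productoper\omegaoper e\productoper z = \omegaoper e\productoper z = e$, a contradiction. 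The case $e\Jeq\omegastaroper e$ is symmetric. Finally \ref{itm:shuffle} drops out of \ref{itm:omegaL}: $\shuffleoper S$ and $\shuffleoper R$ are idempotents fixed by $\omegaoper{}$ and $\omegastaroper{}$, so if $\shuffleoper S\Jeq\shuffleoper R$ then \ref{itm:omegaL} gives both $\shuffleoper S\Leq\shuffleoper R$ and $\shuffleoper S\Req\shuffleoper R$, i.e.\ $\shuffleoper S\Heq\shuffleoper R$; since an $\Hclass$-class contains at most one idempotent (\Cref{lem:monoidprop}.\ref{itm:Hclass}), $\shuffleoper S = \shuffleoper R$.

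The step I expect to be the crux is the general case of \ref{itm:omegaL}. The tempting idea of joining $e$ and $f$ by a chain of idempotents related successively by $\Req$ or $\Leq$ does not work — inside a regular $\Jeq$-class the idempotents need not be connected in this way — so one cannot reduce directly to the $\Req$ special case. The device that makes it go through is the normalisation $f\productoper a = a$, which produces an idempotent $eba$ lying in the same $\Rclass$-class as $e$; checking that $eba$ is genuinely idempotent and $\Jeq$-equivalent to $e$ is exactly where idempotency of $f$, the Location lemma, and stability of the finite monoid enter. Everything else is bookkeeping with Green's relations and the $\omega$/$\omega^*$ identities.
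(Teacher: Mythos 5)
Your proof is correct and follows essentially the same route as the paper's: item (1) via the cyclic $\omega$-identities applied to a pair $x,y$ with $e=y\productoper x$ and $f=x\productoper y$, item (2) by contradiction through a group element $g$ forced to satisfy $g\productoper\omegaoper e=\omegaoper e$ and hence $g=e$, and item (3) by combining item (1) for $\omegaoper{}$ and $\omegastaroper{}$ to get $\Heq$-equivalence of the two shuffle powers and concluding by uniqueness of the idempotent in an $\Hclass$-class. The only divergence is cosmetic: where the paper directly invokes the standard fact that $\Jeq$-equivalent idempotents admit such $x,y$, you re-derive an equivalent normalised witness $f=aeb$ with $f\productoper a=a$ and $b\productoper f=b$ by hand, which is longer but equally valid.
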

\begin{proof}
We prove the Items one by one.
\begin{enumerate}
\item Let $e$ and $f$ be idempotents such that $e \Jeq f$. From Location lemma (\Cref{lem:monoidprop}.{\Cref{itm:location}}) it follows that there exists elements $x,y$ such that $e = y \productoper x$ and $f=x \productoper y$. From the equations of "o-algebra" $\omegaoper f = \omegaoper{(x \productoper y)} = x  \productoper \omegaoper{(y \productoper x)} = x  \productoper \omegaoper{e}$. Therefore $\omegaoper f \Lleq \omegaoper e$. 
A similar derivation gives $\omegaoper e = y \productoper \omegaoper f$ and therefore $\omegaoper e \Lleq \omegaoper f$. Hence $\omegaoper e \Leq \omegaoper f$. By a symmetric derivation, we have that $\omegastaroper e \Req \omegastaroper f$. This concludes the proof of the first implication. To prove the special case assume $e \Jeq f  \Jeq \omegaoper e$. Since $\omegaoper e \Jeq \omegaoper f$, we have that $f \Jeq \omegaoper f$. Furthermore, from $\omegaoper f \Rleq f$ and \Cref{lem:monoidprop}.{\Cref{itm:JlRisR}} it follows that $f \Req \omegaoper f$.

\item Consider an idempotent $e$ such that $\omegaoper e \in \Jclass(e)$. We show by contradiction that $\Jclass(e)$ do not contain a non-trivial group. This will imply $\card{\Hclass(e)} = 1$ from \Cref{lem:monoidprop}.\Cref{itm:Hclass}. Assume $\Jclass(e)$ contains a non-trivial group. \Cref{lem:monoidprop}.{\Cref{itm:Hclass}} shows there is an idempotent $f \in \Jclass(e)$ such that $\Hclass(f)$ is a non-trivial group. Consider an element $a \neq f$ and $a \Heq f$. Since $\Hclass(f)$ is a non-trivial finite group, there is a natural number $n > 1$ such that $a^n = f$. Since $e \Jeq f \Jeq \omegaoper e$, from the previous Item  $f \Req \omegaoper f$. Therefore, there exists an element $x$ such that $f = a^n = \omegaoper f  \productoper x = \omegaoper{(a^n)}  \productoper x = a  \productoper \omegaoper{(a^n)}  \productoper x = a^{n+1} = a$. This is a contradiction. By a symmetrical argument $\omegastaroper e \in \Jclass(e)$ implies $\Hclass(e) = \{e\}$.

\item Let $S^{\shuffle} = e$ and $R^{\shuffle}=f$. From the axioms of "o-algebra", $e = \omegaoper e$ and $f = \omegaoper f$. Since $e \Jeq f$, it follows from \Cref{itm:omegaL} that $ \omegaoper e \Leq \omegaoper f$ and therefore $e \Leq f$. By a symmetric argument, $e \Req f$ and therefore $e \Heq f$. From the previous Item we have that $e = f$.
\end{enumerate}
\end{proof}

\Cref{itm:omegaL} of the Structure Theorem states that if two idempotents lie in the same "J-class", then their $\omega$-powers belong to the same "L-class", and their $\omega^*$-powers belong to the same "R-class". The most notable and structurally informative case is when an idempotent, its $\omega$-power, and its $\omega^*$-power all lie within the same "J-class". An example of such a "J-class" is illustrated in \Cref{fig:idealJclass}.

\Cref{itm:nogroup} further asserts that if a "J-class" contains an idempotent together with its $\omega$- or $\omega^*$-power, then the "J-class" cannot contain a non-trivial group (i.e., a group of size at least two). This means that all "H-classes" within such a "J-class" must be singletons.

Additionally, \Cref{itm:shuffle} establishes that there can be at most one element in a "J-class" that is a shuffle power. As depicted in \Cref{fig:idealJclass}, this shuffle power lies in a unique "H-class".

We now establish this claim and explore further structural properties of "o-monoids". These properties follow directly from the Structure Theorem and classical results on finite monoids.

 \begin{figure}[h!]
\centering
 \begin{tabular}{|p{0.5cm}|p{0.5cm}|p{0.5cm}|p{0.5cm}|}
 \hline
 $e$ & ~ & ~ & $\omegaoper e$ \\
 \hline
 ~ & ~ & ~ & ~ \\
 \hline
  ~ & $f$ & ~ & $\omegaoper f$ \\
 \hline
 ~ & & & \\
 \hline
$\omegastaroper e$ & $\omegastaroper f$& & $\shuffleoper S$ \\
 \hline
 \end{tabular}
\caption{The "omega@omega power" (resp. "omega*@omega* power") "powers@omega power" lie in the same "L-class" (resp. "R-class"). The "shuffle power" lies in a unique "H-class".}
\label{fig:idealJclass}
\end{figure}


 \begin{lemma}
 \label{lem:omegaprop}
 Let $a \Jeq \omegaoper a$. Then the following hold.
 \begin{enumerate}
 \item \label{itm:omega}  $a$ is an idempotent and $a \Req \omegaoper a$.
 \item \label{itm:omegaLclass}  "R-classes" of $\Jclass(a)$ contain an "idempotent" $f$ and its omega power $\omegaoper f$. 
 \item \label{itm:omegaidemp} There exists an idempotent $f \in \Jclass(a)$ such that $\omegaoper f$ is also an idempotent. Hence, $\omegaoper f = \omegaoper {(\omegaoper f)}$.
 \end{enumerate}
 \end{lemma}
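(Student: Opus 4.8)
The plan is to derive all three items from the Structure theorem (\Cref{thm:fundamental})---chiefly \Cref{itm:omegaL} and \Cref{itm:nogroup}---together with the classical facts on Green's relations gathered in \Cref{lem:monoidprop}, using the hypothesis $a \Jeq \omegaoper a$ as the common engine. For Item~1, the axiom $\omegaoper a = a \productoper \omegaoper a$ gives $\omegaoper a \Rleq a$, so \Cref{lem:monoidprop}.\Cref{itm:JlRisR} together with $a \Jeq \omegaoper a$ upgrades this to $a \Req \omegaoper a$. To prove that $a$ is itself an idempotent I would pass to an idempotent power $e := a^k$ (some $k \ge 1$ with $e^2 = e$, which exists since $(\monoidset,\productoper)$ is a finite monoid). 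The axiom $\omegaoper{(a^n)} = \omegaoper a$ gives $\omegaoper e = \omegaoper a$, hence $a \Jeq \omegaoper a = \omegaoper e \Jleq e \Jleq a$, so $a \Jeq e$; writing $e = a \productoper a^{k-1} = a^{k-1} \productoper a$ yields $e \Rleq a$ and $e \Lleq a$, and \Cref{lem:monoidprop}.\Cref{itm:JlRisR} (and, dually, its $\gL$-version) gives $e \Req a$ and $e \Leq a$, i.e.\ $e \Heq a$. Since $e$ is an idempotent with $e \Jeq \omegaoper e$, \Cref{thm:fundamental}.\Cref{itm:nogroup} forces $\Hclass(e)$ to be a singleton, so $a = e$ is an idempotent; together with $a \Req \omegaoper a$ this gives Item~1.

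For Item~2, Item~1 makes $\Jclass(a)$ a regular "J-class" (it contains the idempotent $a$), so by the classical theory of finite monoids (see \cite{Pin86}) every "R-class" of $\Jclass(a)$ contains an idempotent $f$. Any such $f$ satisfies $a \Jeq f \Jeq \omegaoper a$, so the special case of \Cref{thm:fundamental}.\Cref{itm:omegaL}, applied to the idempotents $a$ and $f$, yields $f \Req \omegaoper f$; thus $\omegaoper f$ lies in the same "R-class" as $f$, which is exactly Item~2.

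For Item~3, the key observation is that \Cref{thm:fundamental}.\Cref{itm:omegaL} gives $\omegaoper f \Leq \omegaoper{f'}$ for any two idempotents $f,f'$ of $\Jclass(a)$, so all $\omega$-powers of idempotents of $\Jclass(a)$ lie inside one fixed "L-class" $L_0 := \Lclass(\omegaoper a)$. Since $\Jclass(a)$ is regular, $L_0$ contains an idempotent $g$, and then $\omegaoper g \in L_0 = \Lclass(g)$, i.e.\ $\omegaoper g \Leq g$. On the other hand $g \Jeq a \Jeq \omegaoper a$, so the special case of \Cref{thm:fundamental}.\Cref{itm:omegaL} gives $\omegaoper g \Req g$; hence $\omegaoper g \Heq g$, and in particular $g \Jeq \omegaoper g$. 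As $g$ is an idempotent with $g \Jeq \omegaoper g$, \Cref{thm:fundamental}.\Cref{itm:nogroup} makes $\Hclass(g)$ trivial, so $\omegaoper g = g$; taking $f := g$ settles Item~3, and $\omegaoper f = f$ makes $\omegaoper{(\omegaoper f)} = \omegaoper f$ immediate. The step I expect to be the main obstacle is keeping the idempotency bookkeeping straight: $\omegaoper a$ need not be an idempotent even when $a$ is (for instance in $\monoidGap$ one has $\omegaoper{\cci} = \coi$ with $\coi \productoper \coi = \zero$), so $f = a$ will not do in Item~3 and one must genuinely locate the idempotent $g$ inside $L_0$; the parallel pitfall in Item~1 is arguing directly about $a$ instead of passing to its idempotent power.
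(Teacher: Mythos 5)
Your proof is correct and follows essentially the same route as the paper's: Green's relations (\Cref{lem:monoidprop}) combined with \Cref{thm:fundamental}.\Cref{itm:omegaL} and \Cref{thm:fundamental}.\Cref{itm:nogroup}. If anything, your version is slightly more careful at the two points where the paper is terse---passing to the idempotent power $a^k$ so that \Cref{thm:fundamental}.\Cref{itm:nogroup} is legitimately applicable before $a$ is known to be idempotent in Item~1, and explicitly locating the idempotent $g \in \Lclass(\omegaoper a)$ with $\omegaoper g = g$ in Item~3---so there are no gaps.
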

 \begin{proof}
 The proofs are given below.
 \begin{enumerate}
 \item Since $a \Jeq \omegaoper a$ and $a \Rgeq aa \Rgeq aa \omegaoper a = \omegaoper a$ we have that $a \Req aa$. Since $a \Lgeq aa$, from \Cref{lem:monoidprop}.\Cref{itm:JlRisR}, it follows that $a \Heq aa$. Thanks to the Structure theorem (\Cref{thm:fundamental}.\Cref{itm:nogroup}), $a = aa$ and hence $a$ is an idempotent.
Since $a  \Jeq \omegaoper a$, \Cref{thm:fundamental}.\Cref{itm:omegaL} gives $a \Req \omegaoper a$.
%
%

 \item It is a property of finite monoids that, if a "J-class" contains an idempotent, all "R-classes" contain idempotents. Consider an idempotent $f$ in an "R-class". From the Structure theorem (\Cref{thm:fundamental}.\Cref{itm:omegaL}) and the fact that $\omegaoper a \in \Jclass(a)$, $f \Req \omegaoper f$ and $\omegaoper a \Leq \omegaoper f$.
 
 \item It is a property of finite monoids that, if a "J-class" contains an idempotent, all "L-classes" contain idempotents.  Hence $\Lclass(\omegaoper a)$ contains an idempotent and therefore there exists an $f$ such that $\omegaoper f$ is a idempotent. Now we show that $\omegaoper f = \omegaoper {(\omegaoper f)}$. Let $g = {\omegaoper f}$.  Since $g$ is an idempotent, from \Cref{thm:fundamental}.\Cref{itm:omegaL},  $g \Req \omegaoper g$ and $\omegaoper g \Leq \omegaoper f$.  Therefore $g \Heq \omegaoper g$ and from \Cref{thm:fundamental}.\Cref{itm:nogroup} we have that $g = \omegaoper g$.
 \end{enumerate}
 \end{proof}
 
There is a natural dual to the above lemma.

 \begin{lemma}
 \label{lem:omegaandstar}
 Let idempotent $e \Jeq \omegaoper e$ and $e \Jeq \omegastaroper e$. Then the following hold.
 \begin{enumerate}
 \item \label{itm:gie} For all idempotents $f \in \Jclass(e)$, $\omegastaroper f \productoper \omegaoper f$ is the unique element in $\Lclass(\omegaoper e) \cap \Rclass(\omegastaroper e)$.
  \item \label{itm:omegaomegastar} Let $e \Jeq (\omegaoper e \productoper \omegastaroper e)$. Then, $g = \omegastaroper e \productoper \omegaoper e$ is an idempotent where $g = \omegaoper g = \omegastaroper g$.
 \item \label{itm:scattered} Let $e = \omegaoper e = \omegastaroper e$. Then for all idempotents $f \in \Jclass(e)$, $f = \omegaoper f \productoper \omegastaroper f$ and $e = \omegastaroper f \productoper \omegaoper f$.
 \item \label{itm:shuffle-scat} Let $e = \shuffleoper S$ for some $S \subseteq \monoidset$. Then, for all idempotents $f$ in $\Jclass(e)$, we have $f = \omegaoper f \productoper \omegastaroper f$ and $e = \omegastaroper f \productoper \omegaoper f$.
 \end{enumerate}
 \end{lemma}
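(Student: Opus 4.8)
We fix the idempotent $e$ from the hypotheses and an arbitrary idempotent $f\in\Jclass(e)$. The plan is to first collect, from \Cref{thm:fundamental}.\Cref{itm:omegaL}, the basic placements $f\Req\omegaoper f$, $f\Leq\omegastaroper f$, $\omegaoper e\Leq\omegaoper f$ and $\omegastaroper e\Req\omegastaroper f$ --- the first two follow from $e\Jeq f\Jeq\omegaoper e$ and $e\Jeq f\Jeq\omegastaroper e$, the last two from $e\Jeq f$ alone. The decisive structural fact is that $\Jclass(e)$ contains no non-trivial group: since $e$ is an idempotent with $e\Jeq\omegaoper e$, this is \Cref{thm:fundamental}.\Cref{itm:nogroup}, and hence by \Cref{lem:monoidprop}.\Cref{itm:Hclass} every "H-class" contained in $\Jclass(e)$ is a singleton. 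Thus, to pin down a product whose factors and result lie in $\Jclass(e)$, it is enough to show the product stays in $\Jclass(e)$ and to locate its "R-class" and "L-class"; this is the engine for all four items.

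For \Cref{itm:gie}: $f$ is an idempotent lying in $\Rclass(\omegaoper f)\cap\Lclass(\omegastaroper f)$, so the Location lemma (\Cref{lem:monoidprop}.\Cref{itm:location}), applied with $s=\omegastaroper f$ and $t=\omegaoper f$, gives $\omegastaroper f\Jeq\omegastaroper f\productoper\omegaoper f$. Combining this with $\omegastaroper f\productoper\omegaoper f\Rleq\omegastaroper f$ and \Cref{lem:monoidprop}.\Cref{itm:JlRisR} (and its left-right dual) places $\omegastaroper f\productoper\omegaoper f$ in $\Rclass(\omegastaroper f)\cap\Lclass(\omegaoper f)=\Rclass(\omegastaroper e)\cap\Lclass(\omegaoper e)$. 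That intersection is a non-empty "H-class" inside $\Jclass(e)$, hence a singleton, which is exactly the uniqueness assertion.

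For \Cref{itm:omegaomegastar} the crux --- and the step I expect to be the main obstacle --- is to convert the extra hypothesis $e\Jeq\omegaoper e\productoper\omegastaroper e$ into the equality $\omegaoper e\productoper\omegastaroper e=e$. The argument mirrors the previous paragraph: $\omegaoper e\productoper\omegastaroper e$ is forced into $\Rclass(\omegaoper e)\cap\Lclass(\omegastaroper e)$, which equals $\Rclass(e)\cap\Lclass(e)=\Hclass(e)$ because $e\Req\omegaoper e$ and $e\Leq\omegastaroper e$, and $\Hclass(e)=\{e\}$. Granting $\omegaoper e\productoper\omegastaroper e=e$, the remaining assertions about $g=\omegastaroper e\productoper\omegaoper e$ are direct computations in the "o-algebra": we get $g\productoper g=\omegastaroper e\productoper(\omegaoper e\productoper\omegastaroper e)\productoper\omegaoper e=\omegastaroper e\productoper e\productoper\omegaoper e=\omegastaroper e\productoper\omegaoper e=g$ using $\omegastaroper e=\omegastaroper e\productoper e$ and $\omegaoper e=e\productoper\omegaoper e$, while the identities $\omegaoper{(a\productoper b)}=a\productoper\omegaoper{(b\productoper a)}$ and $\omegastaroper{(a\productoper b)}=\omegastaroper{(b\productoper a)}\productoper b$ yield $\omegaoper g=\omegastaroper e\productoper\omegaoper{(\omegaoper e\productoper\omegastaroper e)}=\omegastaroper e\productoper\omegaoper e=g$ and symmetrically $\omegastaroper g=g$.

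For \Cref{itm:scattered}, the hypothesis $\omegaoper e=e=\omegastaroper e$ makes $\Lclass(\omegaoper e)\cap\Rclass(\omegastaroper e)=\Hclass(e)=\{e\}$, so \Cref{itm:gie} immediately gives $\omegastaroper f\productoper\omegaoper f=e$; the companion identity $\omegaoper f\productoper\omegastaroper f=f$ follows from the mirror image of the argument for \Cref{itm:gie}, which this time forces $\omegaoper f\productoper\omegastaroper f$ into $\Rclass(\omegaoper f)\cap\Lclass(\omegastaroper f)=\Hclass(f)=\{f\}$ --- the Location lemma now applies because $\Rclass(\omegastaroper f)\cap\Lclass(\omegaoper f)=\Rclass(e)\cap\Lclass(e)=\Hclass(e)$ contains the idempotent $e$. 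Finally, \Cref{itm:shuffle-scat} is the special case of \Cref{itm:scattered} obtained by noting that $e=\shuffleoper S$ is an idempotent with $e=\omegaoper{(\shuffleoper S)}=\omegastaroper{(\shuffleoper S)}$, by the shuffle axioms of the "o-algebra".
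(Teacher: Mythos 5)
Your proof is correct and follows essentially the same route as the paper's: the Location lemma places $\omegastaroper f\productoper\omegaoper f$ (resp.\ $\omegaoper f\productoper\omegastaroper f$) in the right $\Jclass$-class, \Cref{lem:monoidprop}.\Cref{itm:JlRisR} pins down its $\Rclass$- and $\Lclass$-classes, and the triviality of $\Hclass$-classes from \Cref{thm:fundamental}.\Cref{itm:nogroup} forces the equalities. In fact you supply one detail the paper's proof leaves out, namely the explicit verification that $g=\omegaoper g=\omegastaroper g$ in \Cref{itm:omegaomegastar} via the identities $\omegaoper{(a\productoper b)}=a\productoper\omegaoper{(b\productoper a)}$ and its dual, where the paper only checks that $g$ is idempotent.
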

Let us consider an idempotent $e \Jeq \omegaoper e \Jeq \omegastaroper e$. We prove each of the Items in the Lemma one by one.
 \begin{proof}
 \begin{enumerate}
 \item Consider an idempotent $f \in \Jclass(e)$. Since $\{\omegaoper e, \omegastaroper e\} \subseteq \Jclass(e)$ from \Cref{thm:fundamental}.\Cref{itm:omegaL} $\{\omegaoper f, \omegastaroper f\} \subseteq \Jclass(e)$. We now show that $g = \omegastaroper f \productoper \omegaoper f$ is also in $\Jclass(e)$.
From Location lemma (see \Cref{lem:monoidprop}.\Cref{itm:location}) if $\Rclass(\omegaoper f) \cap \Lclass(\omegastaroper f)$ contains an idempotent, then $f \Jeq \omegastaroper f \omegaoper f$. Since $f \in \Rclass(\omegaoper f) \cap \Lclass(\omegastaroper f)$, we have that $g \in \Jclass(e)$. Since $g \Lleq \omegaoper f$ and $g \Jeq \omegaoper f$ (from \Cref{lem:monoidprop}.\Cref{itm:JlRisR}), $g \Leq \omegaoper f$. Similarly, $g \Req \omegastaroper f$. Therefore $g$ is in $\Lclass(\omegaoper f) \cap \Rclass(\omegastaroper f)$ = $\Lclass(\omegaoper e) \cap \Rclass(\omegastaroper e)$. The claim follows from the fact that $\Hclass(g) = \{g\}$ (see \Cref{thm:fundamental}.\Cref{itm:nogroup}).

  \item Let $h = \omegaoper e  \productoper \omegastaroper e$. Since $e \Jeq h$ it follows that $e \Leq h$ and $e \Req h$ and therefore $e \Heq h$. Since $\Jclass(e)$ has no non-trivial group (from Structure theorem). Hence $e = h = \omegaoper e  \productoper \omegastaroper e$.

  Now consider the element
 $g = \omegastaroper e \productoper \omegaoper e$. However  $(\omegastaroper e \productoper \omegaoper e)\productoper (\omegastaroper e \productoper \omegaoper e) = \omegastaroper e \productoper ( \omegaoper e \productoper \omegastaroper e) \productoper \omegaoper e = \omegastaroper e \productoper \omegaoper e$ showing $g$ is an idempotent.

 \item Let us assume that $e = \omegaoper e = \omegastaroper e$. From \Cref{lem:omegaprop}.\Cref{itm:omega}, $e$ is an idempotent. Consider an idempotent $f \in \Jclass(e)$. From \Cref{itm:gie}, $\omegastaroper f \productoper \omegaoper f$ and $\omegastaroper e \productoper \omegaoper e = e$ are in the same "H-class". However, due to \Cref{thm:fundamental}.\Cref{itm:nogroup} there is only one element in the "H-class" and therefore $e = \omegastaroper f \productoper \omegaoper f$. Since $e=e^2$, $\omegaoper f \productoper \omegastaroper f \Jeq f$ and therefore $f = \omegaoper f \productoper \omegastaroper f$.


 \item Since $e = \shuffleoper e$, we have $e = \omegaoper e = \omegastaroper e$. The claim follows from the above Item.
 \end{enumerate}
 \end{proof}
 \begin{example}\AP
 	The "o-algebra" $\monoidGap$ has three "J-classes", $\{\unit\}$, $\{\zero\}$, and $\{\cci, \oci, \coi, \ooi\}$.
	The eggbox representation of $\{\cci, \oci, \coi, \ooi\}$ is as follows:
	\begin{align*}
		\begin{tabular}{|p{1.5cm}|p{1.8cm}|}\hline
		$\cci$ & $\coi = \omegaoper \cci$
		\\\hline
		$\oci = \omegastaroper \cci$ & $\ooi = \omegastaroper \cci \omegaoper \cci$
		\\\hline
	 \end{tabular}
	\end{align*}
\end{example}

\subsection{"Profinite identities" characterize "varieties" of "o-monoids"}
In this section, we introduce notions of "o-monoids" that shall allow us to define the classes of "o-monoids" that we study.
We also introduce the sufficient material for showing that thse classes are "varieties".

We begin with introducing several forms of idempotents are important in understanding finite monoids. For "o-monoids" they can have several characteristics that we introduce now:
\begin{itemize}
\itemAP ""Gap insensitive idempotents"" (the set of which is $\intro*\giidemp$)
	are idempotents $e \in \monoid$ such that $\omegaoper e \productoper \omegastaroper e = e$
\itemAP ""Ordinal idempotents"" (the set of which is $\intro*\oidemp$) are idempotents $e$ such that $\omegaoper e = e$.
	The name comes from the fact that an idempotent is "ordinal@@idempotent" if and only if $\morphism(u)=e$ for all "words" $u\in \words{\{e\}}$ of non-empty ordinal (ie well-founded) domain.
\itemAP Symmetrically, ""ordinal* idempotents"" (the set of which is $\intro*\ostidemp$) are all idmpotents $e$
  such that $\omegastaroper e = e$.
\itemAP \intro{Scattered idempotents} (the set of which is $\intro*\scatidemp$) are all elements $e$ which are at the same time
	"ordinal@@idempotent" and "ordinal* idempotents". That is, $e= \omegaoper e = \omegastaroper e$.
	The name stems from the fact that an idempotent is "scattered@@idempotent" if and only if $\morphism(u)=e$ for all "words" $u\in \words{\{e\}}$ of non-empty "scattered" "domain".
\itemAP	""Shuffle idempotents"" (the set of which is $\intro*\shuffleidemp$) are all elements $e$ such that $\shuffleoper{e} = e$.
\itemAP	""Shuffle simple idempotents"" (the set of which is $\intro*\shsimp$) are all elements $e$ where for all $K\subseteq \monoidset$
 	such that $e\productoper a\productoper e=e$ for all $a\in K$, $\shuffleoper{(\{e\}\cup K)} = e$.
\end{itemize}

Note that by definition every \kl{shuffle simple idempotent} is a \kl{shuffle idempotent} and every \kl{scattered idempotent} is both an \kl{ordinal idempotent} and an \kl{ordinal* idempotent}.  
Furthermore all "scattered idempotents"~$e$ are such that $e=e\productoper e = \omegaoper e \productoper \omegastaroper e$, and hence are "gap insensitive".
We also note that since $(\{e\}^\shuffle)^\omega=(\{e\}^\shuffle)^{\omega*}=\{e\}^\shuffle$, every \kl{shuffle idempotent} is a \kl{scattered idempotent}. Since $e \productoper \omegaoper e = \omegaoper e$  for any \kl{scattered idempotent} $e$, $\omegaoper e = e \productoper \omegaoper e = \omegaoper e \productoper  \omegastaroper e \productoper \omegaoper e$ and therefore $\omegaoper e \productoper \omegastaroper e = e$ is an \kl{idempotent}. Similarly, for an \kl{ordinal idempotent} (resp. \kl{ordinal* idempotent}) $e$, $e = e \productoper \omegaoper e = e \productoper e$ is an idempotent.
The following remark is easy to see from the above discussion.
\begin{remark}\AP\label{remark:idempotents}
	$\shsimp \subseteq \shuffleidemp \subseteq \scatidemp = \oidemp \cap \ostidemp \subseteq \giidemp$
	and $\oidemp \cup \ostidemp \cup \giidemp \subseteq \idemp$.
\end{remark}

\begin{definition}\AP\label{definition:algebraic-properties}
	We consider the following properties:
	\begin{itemize}
	\itemAP $\intro*\aperiodic$ (""aperiodicity"") if for all $a\in \monoid$ there exists $n$ such that $a^n=a^{n+1}$,
	\itemAP $\intro*\eigi$ if all \kl{idempotents} are \kl{gap insensitive},
	\itemAP $\intro*\oigi$ if all \kl{ordinal idempotents} are \kl{gap insensitive},
	\itemAP $\intro*\ostigi$ if all "ordinal* idempotents" are \kl{gap insensitive},
	\itemAP $\intro*\scish$ if all \kl{scattered idempotents} are \kl{shuffle idempotent}, and
	\itemAP $\intro*\shiss$ if all \kl{shuffle idempotents} are "shuffle simple@@idempotent".
	\end{itemize}
\end{definition}

In order for languages recognized by a class of "o-monoids" to be of interest they need to be closed under boolean combinations. Complementation comes for free but for closure under union and intersection the class of "o-monoids" need to satisfy certain properties.

\begin{definition}\AP\label{def:variety}%
	A ""variety"" of "o-monoids"~$V$ is a class of "o-monoids" that is closed under sub-"o-monoids", quotients and finite Cartesian products.
\end{definition}
Our aim is to show that the classes $\aperiodic$, $\eigi$, $\oigi$, $\ostigi$, $\scish$, $\shiss$ are "varieties".
The standard way to define a class of \kl{monoids} is by using ``identities''. When a class of "o-monoids" (or some algebraic structure) satisfy an identity, then they form a variety.

\AP  An \intro{implicit operation}~$f$ of arity $k$ is a collection of maps $(f_\monoid)$
 indexed by "o-monoids" such that for all "o-monoid"~$\monoid$, $f_\monoid$ is a function $\monoid^k$ to $\monoid$ that satisfies the property that whenever there is a "o-monoid" morphism $\gamma$
 from $\monoid$ to $\monoidN$, and for all $x_1,\dots,x_k\in M$,
 \begin{align*}
f_{\monoidN}(\gamma(x_1),\dots,\gamma(x_k))=\gamma(f_{\monoid}(x_1),\dots,f_{\monoid}(x_k))\ .
 \end{align*}
When the "o-monoid" in which it is applied is clear from the context, we omit it in the notations. An example of an "implicit operation" is the idempotent power of an element.

Note that the \kl{derived operations} can naturally be seen as \kl{implicit operations}, and that operations
constructed from \kl{implicit operations} are also \kl{implicit operations}.

A ""profinite identity"" (or simply an ""identity"" from now) is an equality of the form $f(x_1,\dots,x_k)=g(x_1,\dots,x_k)$ where $f$ and $g$ are \kl{implicit operations} of arity~$k$. A "o-monoid"~$\monoid$ satisfies this \kl{identity}
if for all $a_1,\dots,a_k$ in $\monoid$, $f_\monoid(a_1,\dots,a_k)=g_\monoid(a_1,\dots,a_k)$.
\begin{lemma}\AP
	Let $S$ be the set of "o-monoids" satisfying a given set of identities. Then $S$ is a "variety" of "o-monoids".
\end{lemma}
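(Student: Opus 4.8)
`The plan is to verify the three closure conditions in the definition of a variety — closure under sub-o-monoids, quotients (surjective morphic images), and finite Cartesian products — for the class $S$ of o-monoids satisfying a fixed set $\Sigma$ of identities. The key observation, which does all the work, is that implicit operations are by definition compatible with o-monoid morphisms, and hence in particular with the structure-preserving maps underlying sub-o-monoids, quotients and products. So each closure property reduces to a routine diagram chase.

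\textbf{Sub-o-monoids.} Let $\monoid' \le \monoid$ with $\monoid \in S$, and let $\iota \colon \monoid' \to \monoid$ be the inclusion, which is an o-monoid morphism. For an identity $f(x_1,\dots,x_k) = g(x_1,\dots,x_k)$ in $\Sigma$ and elements $a_1,\dots,a_k \in \monoid'$, the compatibility condition gives $\iota(f_{\monoid'}(a_1,\dots,a_k)) = f_{\monoid}(\iota(a_1),\dots,\iota(a_k)) = g_{\monoid}(\iota(a_1),\dots,\iota(a_k)) = \iota(g_{\monoid'}(a_1,\dots,a_k))$, where the middle equality uses $\monoid \in S$. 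Since $\iota$ is injective, $f_{\monoid'}(\bar a) = g_{\monoid'}(\bar a)$, so $\monoid' \in S$.

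\textbf{Quotients.} Let $\gamma \colon \monoid \to \monoidN$ be a surjective o-monoid morphism with $\monoid \in S$. Given $b_1,\dots,b_k \in \monoidN$, pick preimages $a_i$ with $\gamma(a_i) = b_i$. Then for each identity in $\Sigma$, $f_{\monoidN}(b_1,\dots,b_k) = f_{\monoidN}(\gamma(a_1),\dots,\gamma(a_k)) = \gamma(f_{\monoid}(a_1,\dots,a_k)) = \gamma(g_{\monoid}(a_1,\dots,a_k)) = g_{\monoidN}(b_1,\dots,b_k)$, again using compatibility of implicit operations with $\gamma$ and $\monoid \in S$. Surjectivity is exactly what lets us find the preimages $a_i$.

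\textbf{Finite products.} Let $\monoid_1,\dots,\monoid_n \in S$ and $\monoid = \prod_j \monoid_j$ with projections $p_j \colon \monoid \to \monoid_j$, each an o-monoid morphism. For tuples $\bar a = (a^{(1)},\dots,a^{(k)})$ in $\monoid$, one has $p_j(f_{\monoid}(\bar a)) = f_{\monoid_j}(p_j(a^{(1)}),\dots,p_j(a^{(k)})) = g_{\monoid_j}(p_j(a^{(1)}),\dots,p_j(a^{(k)})) = p_j(g_{\monoid}(\bar a))$ for every $j$; since an element of a product is determined by its projections, $f_{\monoid}(\bar a) = g_{\monoid}(\bar a)$, so $\monoid \in S$. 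I do not expect any real obstacle here: the only subtlety is that the relevant maps (inclusions, quotient maps, projections) really are o-monoid morphisms in the sense of the paper — i.e. they commute with the infinitary product $\product$ — which is immediate from how sub-o-monoids, quotients and products are defined; once that is granted, the argument is the standard Birkhoff-style verification and the "hard part" is essentially bookkeeping.`
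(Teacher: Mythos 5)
Your proof is correct and follows essentially the same route as the paper's: verify the three closure properties directly, using the defining compatibility of implicit operations with o-monoid morphisms (applied to inclusions, surjections, and projections respectively). The only cosmetic differences are that you spell out the sub-o-monoid case (which the paper dismisses as obvious) and derive the product case from the projections rather than by the paper's explicit componentwise computation; these amount to the same argument.
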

\begin{proof}
Varieties are closed under intersection. Therefore it is sufficient to show the theorem when $S$ is a set of "o-monoids" satisfying just a single identity. Below we show that "o-monoids" that satisfy an identity is closed under sub-"o-monoids", quotients and finite Cartesian products.

The first state	ment is obvious.
For the second, assume that $\monoid$ satisfies $f(x_1,\dots,x_k)=g(x_1,\dots,x_k)$,
and that $\gamma$ is a \kl{morphism} from $\monoid$ onto $\monoidN$.
Let $b_1,\dots,b_k\in \monoidN$. Using the surjectivity assumption of $\gamma$,
there exist $a_1,\dots,a_k\in M$ such that $\gamma(a_1)=b_1$,\dots,$\gamma(a_k)=b_k$.
We now have:
\begin{align*}
	 f_\monoidN(b_1,\dots,b_k) & =f_\monoidN(\gamma(a_1),\dots,\gamma(a_k)) \\
	& = \gamma(f_\monoid(a_1,\dots,a_k)) = \gamma(g_\monoid(a_1,\dots,a_k)) \quad \text{(since $\monoid$ satisfies identity)} \\
	& =g_\monoidN(\gamma(a_1),\dots,\gamma(a_k)) =g_\monoidN(b_1,\dots,b_k)
\end{align*}
Thus $\monoidN$ satisfies $f(x_1,\dots,x_k)=g(x_1,\dots,x_k)$. For the last statement, we have:
\begin{align*}
	f_{(\monoid \times \monoidN)} \big((a_1,b_1),\dots,(a_k,b_k)\big)
					& =\big(f_\monoid(a_1,\dots,a_k), f_\monoidN(b_1,\dots,b_k)\big) \\
			& = \big(g_\monoid(a_1,\dots,a_k), g_\monoidN(b_1,\dots,b_k)\big) \quad \text{(since $\monoid$ ad $\monoidN$ satisfies identity)} \\
			& = g_{(\monoid \times \monoidN)} \big((a_1,b_1),\dots,(a_k,b_k)\big)
\end{align*}
Therefore $\monoid \times \monoidN$ satisfies the identity $f(x_1,\dots,x_k)=g(x_1,\dots,x_k)$.
\end{proof}
As a consequence of the above lemma it is sufficient to show that the classes \aperiodic, $\eigi$, $\oigi$, $\ostigi$, $\scish$, $\shiss$ satisfy a set of "identities". The following lemma shows that the notions used in our classification of "o-monoids" can be interpreted in terms of \kl{implicit operations}.
\begin{lemma}\AP\label{lemma:implicit operations}
\begin{itemize}
\itemAP there exists an \kl{implicit operation} ``$\intro*\impliciti$'' such that for all elements $a$ in a "o-monoid", $a^{\impliciti}$
 	is an \kl{idempotent} and if $e$ is an \kl{idempotent}, then $e^{i}=e$.%
		\footnote{This is the classical~$a^\omega$ operation in finite monoids, but this notation would be confusing in our case.}
\itemAP there exists an \kl{implicit operation} ``$\intro*\implicitoi$'' such that for all elements $a$ in a "o-monoid", $a^{\implicitoi}$
 	is an \kl{ordinal idempotent} and if $e$ is an \kl{ordinal idempotent}, then $e^{\implicitoi}=e$.
\itemAP there exists an \kl{implicit operation} ``$\intro*\implicitosi$'' such that for all elements $a$ in a "o-monoid", $a^{\implicitosi}$ is an \kl{ordinal* idempotent} and if $e$ is an \kl{ordinal* idempotent}, then $e^{\implicitosi}=e$.
\itemAP there exists an \kl{implicit operation} ``$\intro*\implicitsc$'' such that for all elements $a$ in a "o-monoid", $a^{\implicitsc}$
 	is a \kl{scattered idempotent} and if $e$ is a \kl{scattered idempotent}, then $e^{\implicitsc}=e$.
\itemAP there exists an \kl{implicit operation} ``$\intro*\implicitsh$'' of arity $k+1$ such that
    $\implicitsh(e,a_1,\dots,a_k)$ is a \kl{shuffle idempotent} $f$ such that $f\productoper a_n\productoper f=f$ for all~$n=1\dots k$. Furthermore, if $e$ is a \kl{shuffle idempotent} such that $e\productoper a_n\productoper e=e$ for all $n=1\dots k$,
    	then $\implicitsh(e,a_1,\dots,a_k)=e$.
\end{itemize}
\end{lemma}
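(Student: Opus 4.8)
The plan is to realise each of the five operations as the \emph{eventual value} of iterating a simpler implicit operation, started from an argument that is itself built from the given elements. Three generalities make this legitimate. First, compositions and iterates of implicit operations are again implicit operations, and each "derived operations" ($\productoper$, $x\mapsto\omegaoper x$, $x\mapsto\omegastaroper x$, $E\mapsto\shuffleoper E$) is implicit. Second, if $F$ is a fixed implicit operation then in any finite "o-monoid" the orbit $x, F(x), F^2(x),\dots$ is eventually periodic; I will use the structure theorem (\Cref{thm:fundamental}) to show that in each of our cases the period equals $1$, that is, the orbit reaches a genuine fixpoint. Third, the family ``$x\mapsto$ eventual value of the $F$-orbit of $x$'' commutes with morphisms: if $\gamma\colon\monoid\to\monoidN$, then $F_\monoidN^n(\gamma(x))=\gamma(F_\monoid^n(x))$ for every $n$, so taking $n$ above the stabilisation bounds of both $\monoid$ and $\monoidN$ equates the two eventual values.

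For ``$\impliciti$'' this is the classical idempotent power: $a^{\impliciti}$ is the unique "idempotent" in the subsemigroup generated by $a$; it commutes with morphisms and satisfies $e^{\impliciti}=e$ for an "idempotent" $e$. For ``$\implicitoi$'', let $a^{\implicitoi}$ be the eventual value of iterating $x\mapsto\omegaoper x$ from $a$. Since $\omegaoper x=x\productoper\omegaoper x$ we have $\omegaoper x\Rleq x$, so the orbit is $\Rleq$-decreasing and reaches a cycle $c_0\to c_1\to\dots\to c_{p-1}\to c_0$; running around it, all $c_i$ are $\Req$-equivalent, hence $\Jeq$-equivalent. By \Cref{lem:omegaprop}.\Cref{itm:omega} (applied to $c_i$, using $c_i\Jeq c_{i+1}=\omegaoper{c_i}$) each $c_i$ is an "idempotent", so \Cref{thm:fundamental}.\Cref{itm:omegaL} applied to these idempotents shows the images $c_{i+1}=\omegaoper{c_i}$ are all $\Leq$-equivalent; together with $\Req$-equivalence this puts all $c_i$ in one "H-class", which \Cref{thm:fundamental}.\Cref{itm:nogroup} forces to be trivial. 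Hence $p=1$; the fixpoint $e$ satisfies $\omegaoper e=e$ and is idempotent, so it is an "ordinal idempotent", and if $e$ is already an "ordinal idempotent" the orbit is constant. The operation ``$\implicitosi$'' is the left--right mirror, iterating $x\mapsto\omegastaroper x$.

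For ``$\implicitsc$'', let $a^{\implicitsc}$ be the eventual value of iterating $H(x)=(x^{\implicitoi})^{\implicitosi}$ from $a$; both $\implicitoi$ and $\implicitosi$ are $\Jleq$-decreasing (from $\omegaoper x\Rleq x$, $\omegastaroper x\Lleq x$), so $H$ is $\Jleq$-decreasing and its orbit reaches a cycle. On a cycle element $e$ put $e'=e^{\implicitoi}$; cycle-stability forces $e'\Jeq e$, with $e'$ an "ordinal idempotent" and $e=(e')^{\implicitosi}$ an "ordinal* idempotent" (both "idempotents"). Iterating $x\mapsto\omegastaroper x$ is $\Lleq$-decreasing, so $e\Lleq e'$, hence $e\Leq e'$ by the left--right dual of \Cref{lem:monoidprop}.\Cref{itm:JlRisR}; and \Cref{thm:fundamental}.\Cref{itm:omegaL} on the idempotents $e',e$ gives $e'=\omegaoper{e'}\Leq\omegaoper e$, so $e\Leq\omegaoper e$. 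Since also $\omegaoper e\Rleq e$ and $\omegaoper e\Jeq e$, \Cref{lem:monoidprop}.\Cref{itm:JlRisR} gives $\omegaoper e\Req e$; therefore $\omegaoper e\Heq e$, and \Cref{thm:fundamental}.\Cref{itm:nogroup} (applicable since $e\Jeq\omegastaroper e$) yields $\omegaoper e=e$. Thus every cycle element is a "scattered idempotent", so $H(e)=e$ and $p=1$; and $H$ fixes "scattered idempotents". For ``$\implicitsh$'', let $\implicitsh(e,a_1,\dots,a_k)$ be the eventual value of iterating $g\mapsto\shuffleoper{(\{g\}\cup\{g\productoper a_n\productoper g : 1\le n\le k\})}$ from $g_0=e$. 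After the first step every iterate is a "shuffle power"; the map is $\Jleq$-decreasing (from $\shuffleoper P=\shuffleoper P\productoper c\productoper\shuffleoper P$ for $c\in P$), and \Cref{thm:fundamental}.\Cref{itm:shuffle} (a "J-class" carries at most one "shuffle power") implies that two consecutive $\Jeq$-equivalent iterates are equal, so the orbit reaches a fixpoint $f=\shuffleoper Q$ with $Q=\{f\}\cup\{f\productoper a_n\productoper f\}$. The "o-algebra" axioms for the shuffle power then give $\shuffleoper f=\shuffleoper Q=f$ (so $f$ is a "shuffle idempotent") and $f=f\productoper(f\productoper a_n\productoper f)\productoper f=f\productoper a_n\productoper f$; and if $e$ is already a "shuffle idempotent" with $e\productoper a_n\productoper e=e$ for all $n$, the first step returns $\shuffleoper{\{e\}}=e$, so the orbit is constant.

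The main obstacle is exactly this termination claim: in a finite "o-monoid" a $\Jleq$-decreasing iteration \emph{a priori} only stabilises at the level of "J-classes", so one must exclude non-trivial cycles inside a "J-class". This is where the structure theorem is indispensable --- triviality of the relevant "H-classes" (\Cref{thm:fundamental}.\Cref{itm:nogroup}), the $\Leq$/$\Req$ constraints on $\omega$- and $\omega^*$-powers (\Cref{thm:fundamental}.\Cref{itm:omegaL}), uniqueness of the "shuffle power" in a "J-class" (\Cref{thm:fundamental}.\Cref{itm:shuffle}), together with \Cref{lem:omegaprop}. The subtlest case is ``$\implicitsc$'', where the common fixpoint of the interleaved $\omega$- and $\omega^*$-iterations must be shown to be at once "ordinal idempotent" and "ordinal* idempotent", which requires the $\Leq$-comparisons above rather than either analysis separately.
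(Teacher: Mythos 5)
Your construction is correct and follows the same overall strategy as the paper's proof: realise each operation as the stabilised value of iterating a derived operation, and use \Cref{thm:fundamental} to rule out non-trivial cycles inside a $\gJ$-class so that the iteration reaches a genuine fixpoint rather than merely stabilising at the level of $\gJ$-classes. The differences are in the choice of the iterated map. For $\implicitsc$ the paper iterates $a_{n+1}=\omegastaroper{(a_n)}\productoper\omegaoper{(a_n)}$ and invokes \Cref{lem:omegaandstar}.\Cref{itm:gie}, whereas you compose the already-constructed operations, iterating $x\mapsto (x^{\implicitoi})^{\implicitosi}$; for $\implicitsh$ the paper shuffles the singleton $\{f_n\}$ where $f_n$ is a long product interleaving $\omegastaroper{e_n}\productoper\omegaoper{e_n}$ with the $a_i$'s, whereas you shuffle the set $\{g\}\cup\{g\productoper a_n\productoper g \mid 1\le n\le k\}$ directly. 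Both variants work, and yours extracts the required identities $\shuffleoper f=f$ and $f\productoper a_n\productoper f=f$ quite transparently from axiom (4) of $\circsymbol$-algebras. You are also more explicit than the paper about why an eventually constant orbit yields an implicit operation (commutation with morphisms after passing a common stabilisation bound), a point the paper only gestures at.

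One labelling slip in the $\implicitsc$ case should be repaired: you set $e'=e^{\implicitoi}$ for a cycle element $e$ and then write $e=(e')^{\implicitosi}$, which literally asserts $H(e)=e$ --- the very fact you are trying to establish. The intended (and correct) argument is to use that every cycle element lies in the image of $H$: write $e=H(d)=(d^{\implicitoi})^{\implicitosi}$ for the \emph{preceding} cycle element $d$ and take $e'=d^{\implicitoi}$. With that relabelling, $e'$ is an ordinal idempotent with $e'\Jeq e$, $e$ is an ordinal* idempotent with $e\Leq e'$, and your chain of deductions via \Cref{thm:fundamental}.\Cref{itm:omegaL}, the dual of \Cref{lem:monoidprop}.\Cref{itm:JlRisR}, and \Cref{thm:fundamental}.\Cref{itm:nogroup} goes through verbatim to give $\omegaoper e=e$; hence every cycle element is a scattered idempotent, is fixed by $H$, and the period is one.
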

\begin{proof}
\emph{Operation ``$\impliciti$'':}
Let $a$ be an element in the finite "o-monoid" $\monoid$. Consider the sequence $a^{1!}$, $a^{2!}$, \dots{}.
Since the number of elements are finite there is an $i_a$ be such that $a^{i_a!}$ is an idempotent. Among all the elements $a$ in $\monoid$, pick the $i_a!$ which is the maximum. That is, let $i = max \{i_a! \mid a \in \monoid\}$. Clearly $x^i$ is an idempotent for all $x$ in $\monoid$. It is also clear that if $e$ is an \kl{idempotent} $e^{\impliciti}=e$.


\emph{Operation ``$\implicitoi$'':}
Let $a$ be an element in a finite "o-monoid" $\monoid$. Consider the sequence defined by $a_0=a$, $a_1 = \omegaoper a_0$, $a_2 = \omegaoper a_1$, \dots{}, $a_{n+1} = \omegaoper a_n$, \dots{}. We argue that for any $a_n, a_{n+1}$ and $a_{n+2}$ one of the following holds: either $a_{n} \Jg a_{n+2}$ or $a_{n+2} = a_{n+1}$. Let us assume $a_n \Jeq a_{n+2}$. Since $a_n  \Jeq \omegaoper {a_n}  \Jeq \omegaoper {(\omegaoper a_n)}$ from \Cref{thm:fundamental}.{\Cref{itm:omegaL}} it follows that $a_{n+2} = a_{n+1}$.

Since the number of "J-classes" are finite, the sequence $\big(a_n)_{n \in \nats}$ is ultimately constant.
Let $a^{\implicitoi}$ be this limit value. Once more, as defined by an ultimately constant sequence, it is an \kl{implicit operation}.
Furthermore, if $e$ is an \kl{ordinal idempotent}, \kl{i.e.}, $\omegaoper e=e$ then clearly $e^{\implicitoi}=e$.

\emph{Operation ``$\implicitosi$'':} This is the dual of the above case.

\emph{Operation ``$\implicitsc$'':}
Let $a$ be an element in a finite \kl{o-monoid}~$\monoid$. Consider the sequence defined by~$a_0=a$
and $a_{n+1}=\omegastaroper{(a_n)}  \productoper \omegaoper{(a_n)}$. We argue that for three consecutive $a_n, a_{n+1}$ and $a_{n+2}$ either $a_{n} \Jg a_{n+2}$ or $a_{n+2} = a_{n+1}$. Let us assume $a_n \Jeq a_{n+2}$. Since $a_{n+1}$ is an idempotent (since $a_{n+1}  \Jeq \omegaoper a_{n+1}$), from \Cref{lem:omegaandstar}.\Cref{itm:gie} it follows that $a_{n+2} = a_{n+1}$.

This means that the sequence is ultimately constant.
Let $a^{\implicitsc}$ be this limit value. Once more, as defined by an ultimately constant sequence, it is an \kl{implicit operation}.
Furthermore, if $e$ is a \kl{scattered idempotent}, \kl{i.e.},  $\omegaoper e=e=\omegastaroper e$,
we clearly have $e^{\implicitsc}=e$.

\emph{Operation ``$\implicitsh$'':}
Finally, given $e,a_1,\dots,a_k$, consider the sequence defined by $e_0=e$,
and
\begin{align*}
f_n&=\omegastaroper e_n \cdot \omegaoper e_n \cdot a_1\cdot \omegastaroper e_n \cdot \omegaoper  e_n \cdot a_2 \cdots a_k\cdot \omegastaroper e_n \cdot \omegaoper e_n \ ,\\
e_{n+1}&=\shuffleoper{\{f_n\}}\ .
\end{align*}
From \Cref{thm:fundamental}.{\Cref{itm:shuffle}} either $e_{n+1} = e_n$ or $e_{n} \Jg e_{n+1}$. The sequence is therefore ultimately constant. Let $e^{\implicitsh}$ be this limit value. Furthermore, if $e$ is a \kl{shuffle idempotent} such that $e \productoper a_i \productoper e = e$ for all $i \leq n$, then $f_n = e \productoper a_1 \productoper e \productoper a_2 \productoper \dots \productoper a_k \productoper e = e$ and therefore $e^{\implicitsh} = e$.
\end{proof}

Using these "implicit operations", it is easy to recast the various properties we are interested in into "identities":
\begin{lemma}
\begin{itemize}
\item $\aperiodic$ is equivalent to $\idap$.
\item $\eigi$ is equivalent to $\ideigi$.
\item $\oigi$ is equivalent to $\idoigi$.
\item $\ostigi$ is equivalent to $\idostigi$.
\item $\scish$ is equivalent to $\idscish$.
\item $\shiss$ is equivalent to $\idshiss$.
\end{itemize}
\end{lemma}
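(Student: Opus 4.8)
The plan is to prove the six equivalences one by one, each in both directions, all resting on a single observation drawn from \Cref{lemma:implicit operations}: for each of $\impliciti,\implicitoi,\implicitosi,\implicitsc$, as $x$ ranges over $\monoid$ the element $x^{\mathrm{op}}$ (writing $\mathrm{op}$ for the relevant one of these operations) ranges over \emph{exactly} the corresponding class of special idempotents — $\idemp$, $\oidemp$, $\ostidemp$, $\scatidemp$ respectively — since $x^{\mathrm{op}}$ always lies in that class and $e^{\mathrm{op}}=e$ for every $e$ in it. Likewise the tuples $\bigl(\implicitsh(x,y_1,\dots,y_k),(y_1,\dots,y_k)\bigr)$ range over exactly the pairs made of a shuffle idempotent $e$ and a tuple $(a_1,\dots,a_k)$ with $e\productoper a_i\productoper e=e$ for all $i$. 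Hence, after specialising the free variables, each identity turns into a statement quantified over precisely the idempotents that the matching algebraic property speaks about.

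The ``identity $\Rightarrow$ property'' implications are then direct specialisations. For $\eigi$: assuming $\omegaoper{(x^{\impliciti})}\productoper\omegastaroper{(x^{\impliciti})}=x^{\impliciti}$ for all $x$, take an idempotent $e$, substitute $x:=e$, use $e^{\impliciti}=e$, and read off $\omegaoper e\productoper\omegastaroper e=e$. The cases $\oigi$, $\ostigi$, $\scish$ are the same with $\impliciti$ replaced by $\implicitoi$, $\implicitosi$, $\implicitsc$ and ``idempotent'' by the matching adjective (for $\oigi$ one also uses $\omegaoper e=e$ to see that $e\productoper\omegastaroper e=e$ is exactly gap insensitivity of $e$; $\ostigi$ is dual). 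For $\shiss$: given a shuffle idempotent $e$ and a finite $K=\{a_1,\dots,a_k\}\subseteq\monoidset$ with $e\productoper a_i\productoper e=e$, substitute $x:=e$ and $y_i:=a_i$ in the $\shiss$-identity of arity $k$, use $\implicitsh(e,a_1,\dots,a_k)=e$, and get $\shuffleoper{(\{e\}\cup K)}=e$. For $\aperiodic$: from $x^{\impliciti}=x^{\impliciti}\productoper x$ and $x^{\impliciti}=x^{m}$ we get $x^{m}=x^{m+1}$, so $n:=m$ witnesses aperiodicity.

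For ``property $\Rightarrow$ identity'', the cases $\eigi$, $\scish$, $\shiss$ are again direct: the property says every idempotent of the relevant kind has the required feature, so in particular each $x^{\mathrm{op}}$ does (and for $\shiss$, each $\implicitsh(x,\vec y)$ is a shuffle idempotent satisfying $\implicitsh(x,\vec y)\productoper y_i\productoper\implicitsh(x,\vec y)=\implicitsh(x,\vec y)$, so that taking $K=\{y_1,\dots,y_k\}$ in the definition of ``shuffle simple'' yields the right-hand side of the identity), which is precisely the identity. The remaining three, $\aperiodic$, $\oigi$, $\ostigi$, need a short reconciliation because the identity's right-hand side refers to $x$ rather than to $x^{\mathrm{op}}$. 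For $\aperiodic$: in an aperiodic monoid each monogenic submonoid $\langle x\rangle$ has a unique idempotent, which absorbs right multiplication by $x$; as $x^{\impliciti}$ is an idempotent power of $x$ it is that idempotent, so $x^{\impliciti}\productoper x=x^{\impliciti}$. For $\oigi$: with $e:=x^{\implicitoi}$ (an ordinal idempotent), the property gives $e\productoper\omegastaroper e=e$ using $\omegaoper e=e$, and one checks $e\Jeq\omegastaroper e$; a short computation using the telescoping identity $x\productoper x^{\implicitoi}=x^{\implicitoi}$ (obtained from $\omegaoper b=b\productoper\omegaoper b$ along the defining sequence of $\implicitoi$), the identity $\omegastaroper x=\omegastaroper{(x^{\impliciti})}$, and the triviality of $\Hclass(e)$ from \Cref{thm:fundamental}.\Cref{itm:nogroup}, then upgrades $e\productoper\omegastaroper e=e$ to $e\productoper\omegastaroper x=e$. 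The case $\ostigi$ is dual.

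The main obstacle I anticipate is exactly this last reconciliation for $\oigi$ (and symmetrically $\ostigi$): the bookkeeping about which idempotents each implicit operation reaches is routine, the backward directions are pure substitution, and the aperiodicity case is classical, but moving from gap insensitivity of $x^{\implicitoi}$ to the identity $x^{\implicitoi}\productoper\omegastaroper x=x^{\implicitoi}$ genuinely combines the structural facts of \Cref{thm:fundamental} with the behaviour of the operation $e\mapsto\omegastaroper e$, rather than being a purely formal rewriting.
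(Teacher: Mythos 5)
Your treatment of $\aperiodic$, $\eigi$, $\scish$ and $\shiss$ is correct and is essentially what the paper intends (the paper gives no proof, treating the lemma as immediate from \Cref{lemma:implicit operations}): since $x^{\impliciti}$, $x^{\implicitsc}$ and $\implicitsh(x,y_1,\dots,y_k)$ range over exactly the idempotents that the corresponding property quantifies over, both directions are substitutions, and your reconciliation for $\aperiodic$ is the standard one. The genuine gap is precisely at the point you flag as the main obstacle: the direction ``$\oigi$ implies $\idoigi$'' (and its dual for $\ostigi$). The ``short computation'' upgrading $e\productoper\omegastaroper e=e$ to $e\productoper\omegastaroper x=e$ for $e=x^{\implicitoi}$ is never exhibited, and it cannot be: the implication is false. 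No combination of the telescoping identity $x\productoper x^{\implicitoi}=x^{\implicitoi}$, of $\omegastaroper x=\omegastaroper{(x^{\impliciti})}$, and of the triviality of $\Hclass(e)$ closes this step, because the o-algebra axioms impose no relation between $\omegastaroper{x}$ and $\omegastaroper{(\omegaoper x)}$. The identity as printed, with $\omegastaroper x$ rather than $\omegastaroper{(x^{\implicitoi})}$ on the left-hand side, is strictly stronger than the property $\oigi$; if one reads it as $x^{\implicitoi}\productoper\omegastaroper{(x^{\implicitoi})}=x^{\implicitoi}$, in the same format as $\ideigi$, then your pure substitution argument works verbatim and no reconciliation is needed at all. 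As written, bullets three and four of your proof are incomplete and unprovable.

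Concretely, take $M=\{1,0,x,e,s,t\}$ with $1$ neutral, $0$ absorbing, all of $x,e,s,t$ idempotent, $x$ acting as identity on $\{0,e,s,t\}$, the pair $\{e,t\}$ forming a left-zero semigroup ($e\productoper t=e$, $t\productoper e=t$), and every product mixing $s$ with $e$ or $t$ equal to $0$; set $\omegaoper x=e$, $\omegastaroper x=s$, $\omegaoper e=e$, $\omegastaroper e=t$, $\omegaoper s=0$, $\omegastaroper s=s$, $\omegaoper t=\omegastaroper t=t$, and $\shuffleoper P=0$ for every $P\neq\{1\}$. A routine check of the axioms of \Cref{subsection:o-monoids} (associativity, $\omegaoper{(a\productoper b)}=a\productoper\omegaoper{(b\productoper a)}$, $\omegastaroper{(a\productoper b)}=\omegastaroper{(b\productoper a)}\productoper b$, and the shuffle axioms, all of which hold because $0$ absorbs every problematic instance) shows this is a legitimate finite "o-monoid". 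Its "ordinal idempotents" are $1,0,e,t$, and each satisfies $\omegaoper g\productoper\omegastaroper g=g$ (in particular $\omegaoper e\productoper\omegastaroper e=e\productoper t=e$), so the property $\oigi$ holds. Yet $x^{\implicitoi}=e$ and $x^{\implicitoi}\productoper\omegastaroper x=e\productoper s=0\neq e$, so the identity $\idoigi$ fails. You should therefore either restate the identity with $\omegastaroper{(x^{\implicitoi})}$ in place of $\omegastaroper x$ (and dually for $\idostigi$), after which the whole lemma follows by substitution, or accept that the third and fourth bullets are false as stated.
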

\begin{corollary}\AP\label{lem:varities}
Let $S$ be a set of all "o-monoids" satisfying one or more of the following set of properties: $\aperiodic$,  $\eigi$ , $\oigi$, $\ostigi$, $\scish$and $\shiss$. Then $S$ is a variety of "o-monoids".
\end{corollary}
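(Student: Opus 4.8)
The plan is to derive the statement directly from the two lemmas just established. First I would recall that the immediately preceding lemma shows each of the properties $\aperiodic$, $\eigi$, $\oigi$, $\ostigi$, $\scish$, $\shiss$ to be equivalent to a single "profinite identity", built from the "implicit operations" constructed in \Cref{lemma:implicit operations}. Hence, for a fixed non-empty subset $P$ of these six properties, a "o-monoid" lies in $S$ precisely when it satisfies the corresponding finite set of "identities". Applying the lemma asserting that the class of "o-monoids" satisfying a given set of "identities" is a "variety", we conclude that $S$ is a "variety".

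An equivalent, more modular phrasing I could use instead: let $S_p$ denote the class of "o-monoids" satisfying a single property $p\in P$; each $S_p$ is a "variety" by the identity-to-variety lemma, and $S=\bigcap_{p\in P}S_p$. It then suffices to note — as already remarked inside the proof of that lemma — that "varieties" are closed under intersection, since each of the three defining closure conditions (closure under sub-"o-monoids", quotients, and finite Cartesian products) is inherited by an arbitrary intersection of classes that each enjoy it.

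I expect no genuine obstacle here: all the required machinery is already in place, and the argument is essentially bookkeeping. The one point needing care is the reading of ``satisfying one or more of the following properties'' as ``satisfying every property in some fixed non-empty subset'', so that $S$ is an \emph{intersection} of the classes $S_p$ rather than a union; a union of "varieties" is in general not a "variety", so this distinction is precisely what makes the statement hold.
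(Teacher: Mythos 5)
Your proposal is correct and is exactly the argument the paper intends: the corollary is stated without proof precisely because it follows by combining the lemma recasting each property as a profinite identity with the lemma that identity-defined classes are varieties (whose proof already notes closure of varieties under intersection). Your remark that ``one or more properties'' must be read conjunctively, as an intersection of the classes $S_p$, is the right reading and matches how the paper uses these classes (e.g.\ $\foeqs$ is a conjunction of several properties).
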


\subsection{Types of "J-classes" and their link with "identities"}
In the structural understanding of "o-monoids", it is convenient to use several properties of their "J-classes".
\AP We define a "J-class" to be: 
\begin{itemize}
\itemAP ""regular@@J-class"" if it contains an "idempotent",
\itemAP ""ordinal regular"" if it contains an "ordinal idempotent",
\itemAP ""ordinal* regular"" if it contains an "ordinal* idempotent",
\itemAP ""gap insensitive regular"" if it contains a "gap insensitive idempotent",
\itemAP ""scattered regular"" if it contains a "scattered idempotent",
\itemAP ""shuffle regular"" if it contains a "shuffle idempotent", and
\itemAP ""shuffle simple regular""  if it contains a "shuffle simple idempotent".
\end{itemize}

We give now some basic facts about these "J-classes".
\begin{lemma} \label{lem:Jclassprop}
The following properties hold for "J-classes".
	\begin{enumerate}
		\item Every "R-class" and "L-class" within a "regular J-class" contains an idempotent.
		\item Every "R-class" within an "ordinal regular" "J-class" contains an idempotent and its "omega power".
		\item Every "L-class" within an "ordinal* regular" "J-class" contains an idempotent and its "omega* power".
		\item \label{itm:gisc}A "J-class" is "gap insensitive regular" if and only if it is "scattered regular".
		\item All "idempotents" in a "scattered regular" "J-class" are "gap insensitive".
		\item "Scattered regular" "J-class" are both "ordinal@ordinal regular" and "ordinal* regular" 
		\item Each "J-class" contains at most one "scattered idempotent". 
		\item Every "shuffle regular" "J-class" is "scattered regular".
		\item Every "shuffle simple regular" "J-class" is "shuffle regular".
	\end{enumerate}
\end{lemma}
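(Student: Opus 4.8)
The plan is to obtain all nine items by reduction to facts already established: the Structure theorem (\Cref{thm:fundamental}), its corollaries \Cref{lem:omegaprop} and \Cref{lem:omegaandstar}, the classical monoid facts of \Cref{lem:monoidprop}, and the inclusions between idempotent types recorded in \Cref{remark:idempotents} together with the discussion around it. Eight of the items reduce to essentially a single citation each; the one genuine argument is needed for item~4.

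I would start with items 1--3. Item~1 is the classical fact that in a finite monoid every $\gR$- and $\gL$-class of a "regular J-class" contains an idempotent --- exactly the fact already invoked inside the proof of \Cref{lem:omegaprop}. For item~2, an "ordinal regular" "J-class" $J$ contains an idempotent $e$ with $\omegaoper e=e$, so $e\Jeq\omegaoper e$, and \Cref{lem:omegaprop}.\Cref{itm:omegaLclass} (with $a=e$) states precisely that every $\gR$-class of $J$ contains an idempotent together with its "omega power". Item~3 is the left--right mirror of item~2, obtained by applying the analogue of \Cref{lem:omegaprop} under word reversal (which exchanges the $\omega$- and $\omega^*$-powers and swaps $\Rleq$ with $\Lleq$) to an "ordinal* idempotent" of $J$.

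Next I would clear items 5--9, all of which simply restate things already proved. Item~5 is \Cref{lem:omegaandstar}.\Cref{itm:scattered}: in the "J-class" of a "scattered idempotent", every idempotent $f$ satisfies $f=\omegaoper f\productoper\omegastaroper f$, i.e.\ is "gap insensitive". Item~6 holds because a "scattered idempotent" is by definition simultaneously an "ordinal idempotent" and an "ordinal* idempotent". For item~7, given two "scattered idempotents" $e,f$ of a common "J-class", \Cref{thm:fundamental}.\Cref{itm:omegaL} gives $\omegaoper e\Leq\omegaoper f$ and $\omegastaroper e\Req\omegastaroper f$; since $\omegaoper e=\omegastaroper e=e$ and $\omegaoper f=\omegastaroper f=f$, this reads $e\Heq f$, and since $e\Jeq\omegaoper e$, \Cref{thm:fundamental}.\Cref{itm:nogroup} gives $\card{\Hclass(e)}=1$, whence $e=f$. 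Items~8 and~9 follow directly from $\shuffleidemp\subseteq\scatidemp$ and $\shsimp\subseteq\shuffleidemp$ (\Cref{remark:idempotents}): a "J-class" containing a "shuffle idempotent" contains a "scattered idempotent" (the same element), and one containing a "shuffle simple idempotent" contains a "shuffle idempotent".

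The real work is item~4. One direction is immediate: a "scattered idempotent" is "gap insensitive" (\Cref{remark:idempotents}), so "scattered regular" implies "gap insensitive regular". For the converse, take a "gap insensitive idempotent" $e$ of $J$, so $e=\omegaoper e\productoper\omegastaroper e$. I would first extract $\omegaoper e\Rleq e$ from the axiom $\omegaoper e=e\productoper\omegaoper e$ and $e\Rleq\omegaoper e$ from $e=\omegaoper e\productoper\omegastaroper e$, giving $e\Req\omegaoper e$, and dually $e\Leq\omegastaroper e$; hence $\omegaoper e,\omegastaroper e\in J$ and $e\Jeq\omegaoper e\productoper\omegastaroper e$. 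Now the standing hypotheses of \Cref{lem:omegaandstar} hold for $e$, and those of its item \Cref{itm:omegaomegastar} hold as well, producing an idempotent $g=\omegastaroper e\productoper\omegaoper e$ with $g=\omegaoper g=\omegastaroper g$, i.e.\ a "scattered idempotent"; and \Cref{lem:omegaandstar}.\Cref{itm:gie} (with $f=e$) places $g$ in $\Lclass(\omegaoper e)\cap\Rclass(\omegastaroper e)\subseteq J$. Thus $J$ is "scattered regular". I expect the only subtle point to be ensuring that the newly constructed idempotent $g$ does not escape $J$; but this Location-lemma bookkeeping has already been carried out inside \Cref{lem:omegaandstar}, so the argument should close cleanly.
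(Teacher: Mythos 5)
Your proof is correct and follows essentially the same route as the paper: items 1--3 and 5--9 reduce to the classical monoid facts, \Cref{lem:omegaprop}, \Cref{lem:omegaandstar} and \Cref{remark:idempotents}, and the converse of item 4 goes through \Cref{lem:omegaandstar}.\Cref{itm:omegaomegastar} exactly as in the paper. Your explicit check that a gap insensitive idempotent $e$ satisfies $e\Jeq\omegaoper e\Jeq\omegastaroper e$ (so that \Cref{lem:omegaandstar} is actually applicable) is a detail the paper leaves implicit, and your citation of \Cref{lem:omegaandstar}.\Cref{itm:scattered} for item 5 is the more accurate reference than the one printed in the paper.
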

\begin{proof}
	Consider a "J-class" $J$. We prove the items in the order they are listed.
	\begin{enumerate}
		\item This is a standard property of finite monoids (see \cite{Pin86}).
		\item Let $J$ be an "ordinal regular" $\Jclass$-class. Then there exists an "idempotent" $e \in J$ such that $\omegaoper e \in J$. Let $R$ be any $\Req$-class in $J$. By item (1), $R$ contains an "idempotent" $f$. By the Structure Theorem (\Cref{thm:fundamental}.\Cref{itm:omegaL}), $R$ also contains $\omegaoper f$. Thus, every $\Req$-class in $J$ contains both an "idempotent" and its "omega power".
		\item This is the dual of item (2), applying the same reasoning to $\Leq$-classes and "omega$^*$ powers".
		\item (Forward direction) Suppose $J$ is "gap insensitive regular". Then it contains a "gap insensitive" "idempotent" $e = \omegaoper e \productoper \omegastaroper e$. By \Cref{lem:omegaandstar}.\Cref{itm:omegaomegastar}, the element $\omegastaroper e \productoper \omegaoper e$ is a "scattered idempotent", so $J$ is "scattered regular".

		(Backward direction) If $J$ is "scattered regular", then it contains a "scattered idempotent". By \Cref{remark:idempotents}, every "scattered idempotent" is also "gap insensitive", so $J$ is "gap insensitive regular".

		\item Let $J$ be a "scattered regular" $\Jclass$-class and let $e \in J$ be a "scattered idempotent". By \Cref{remark:idempotents}, "scattered idempotents" are "gap insensitive", and by \Cref{lem:omegaandstar}.\Cref{itm:omegaomegastar}, all "idempotents" in $J$ are "gap insensitive".
		\item This follows directly from \Cref{remark:idempotents}.
		\item Suppose $J$ contains two "scattered idempotents" $e$ and $f$, satisfying $e = \omegaoper e = \omegastaroper e$ and $f = \omegaoper f = \omegastaroper f$. Then by the Structure Theorem (\Cref{thm:fundamental}.\Cref{itm:omegaL}), we have $\omegaoper e \Leq \omegaoper f$ and $\omegastaroper e \Req \omegastaroper f$, so $e \Heq f$. By \Cref{thm:fundamental}.\Cref{itm:nogroup}, the "H-class" of $e$ has size 1, hence $e = f$.
		\item Let $e$ be a "shuffle idempotent" in $J$. Then by \Cref{remark:idempotents}, $e$ is "scattered", and thus $J$ is "scattered regular".
		\item Let $J$ be "shuffle simple regular". Then, again by \Cref{remark:idempotents}, it is "shuffle regular".
	\end{enumerate}
\end{proof}

We can now restate the equational properties of "o-monoids" in terms of structural properties of their "J-classes".

\begin{theorem}\AP\label{thm:jclass}
Let $\monoid$ be an "o-monoid". Then:
\begin{enumerate}
\item $\monoid$ satisfies $\foeqs$ if and only if all its "regular" "J-classes" are "shuffle simple regular".
\item $\monoid$ satisfies $\fofiniteeqs$ if and only if all its "ordinal regular" and "ordinal* regular" "J-classes" are "shuffle simple regular".
\item $\monoid$ satisfies $\focuteqs$ if and only if it is "aperiodic" and all its "scattered regular" "J-classes" are "shuffle simple regular".
\item $\monoid$ satisfies $\fofinitecuteqs$ if and only if all its "scattered regular" "J-classes" are "shuffle simple regular".
\item $\monoid$ satisfies $\foscatteredeqs$ if and only if all its "shuffle regular" "J-classes" are "shuffle simple regular".
\end{enumerate}
\end{theorem}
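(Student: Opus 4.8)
The plan is to derive all five equivalences from a handful of structural facts, running essentially the same argument in each case. First I would record two observations. \emph{Observation A}: a "shuffle simple regular" "J-class" $J$ is "scattered regular"---combine the last two items of \Cref{lem:Jclassprop}, namely that every "shuffle simple regular" "J-class" is "shuffle regular" and every "shuffle regular" "J-class" is "scattered regular"---hence all idempotents of $J$ are "gap insensitive" (by the item of \Cref{lem:Jclassprop} stating that all idempotents of a "scattered regular" "J-class" are "gap insensitive"), and, since $J$ contains a "scattered idempotent" $e$ with $e=\omegaoper e$, \Cref{thm:fundamental}.\Cref{itm:nogroup} shows that $J$ contains no non-trivial group. \emph{Observation B}: if $e$ is a "scattered idempotent" and $\Jclass(e)$ is "shuffle simple regular", then $e$ is a "shuffle simple idempotent"; indeed $\Jclass(e)$ contains some "shuffle simple idempotent" $f$, which is a "scattered idempotent" by \Cref{remark:idempotents}, and a "J-class" contains at most one "scattered idempotent" (\Cref{lem:Jclassprop}), so $e=f$.

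For the forward direction---the equations imply the condition on "J-classes"---I would argue uniformly. Let $J$ be a "J-class" of the kind named in the condition. In item~1 the hypothesis includes $\eigi$, and in item~2 it includes $\oigi$ and $\ostigi$; these force the idempotent witnessing that $J$ is "regular" (resp.\ "ordinal regular", resp.\ "ordinal* regular") to be "gap insensitive", so $J$ is "gap insensitive regular" and hence "scattered regular" by \Cref{itm:gisc}. In items~3 and~4 this step is unnecessary, since a "scattered regular" "J-class" already contains a "scattered idempotent"; and in item~5 the "shuffle regular" "J-class" $J$ contains a "shuffle idempotent", which is a "scattered idempotent" by \Cref{remark:idempotents}. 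In every case $J$ now contains a "scattered idempotent" $e$; applying $\scish$ makes $e$ a "shuffle idempotent", and then $\shiss$ makes it a "shuffle simple idempotent", so $J$ is "shuffle simple regular". In item~3 the further requirement that $\monoid$ be "aperiodic" is literally one of the assumed equations.

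For the converse---the condition on "J-classes" implies the equations---I would verify each identity separately. For $\eigi$ (resp.\ $\oigi$, resp.\ $\ostigi$): given an idempotent (resp.\ an "ordinal idempotent", resp.\ an "ordinal* idempotent") $e$, the "J-class" $\Jclass(e)$ is "regular" (resp.\ "ordinal regular", resp.\ "ordinal* regular"), hence "shuffle simple regular" by hypothesis, hence "scattered regular" with all its idempotents "gap insensitive" by Observation~A, so $e$ is "gap insensitive". For $\scish$ and $\shiss$: given a "scattered idempotent" (resp.\ a "shuffle idempotent") $e$, note that $e$ is a "scattered idempotent" (using \Cref{remark:idempotents} in the "shuffle idempotent" case) and that $\Jclass(e)$---being "regular", "ordinal regular", "scattered regular", or "shuffle regular" as appropriate---is "shuffle simple regular" by hypothesis, so Observation~B gives that $e$ is a "shuffle simple idempotent". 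The one identity requiring a separate argument is the "aperiodic" clause of item~1 (in item~3 it is assumed outright): every "regular" "J-class" is "shuffle simple regular", hence contains a "scattered idempotent" and so no non-trivial group by Observation~A, so, since every non-trivial subgroup of $\monoid$ lies inside a group "H-class" of a "regular" "J-class", $\monoid$ has no non-trivial subgroup and is therefore "aperiodic".

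The remainder is bookkeeping: pairing each of the five items with the correct flavour of "J-class" and the correct identity, and presenting the common argument once rather than five times. The only non-clerical ingredients are \Cref{itm:gisc} of \Cref{lem:Jclassprop} (the equivalence of "gap insensitive regular" and "scattered regular", which itself rests on \Cref{lem:omegaandstar}.\Cref{itm:omegaomegastar}) and the classical equivalence, for finite monoids, of "aperiodicity" with the absence of non-trivial subgroups; both are already available, so I anticipate no real obstacle beyond keeping the case analysis tidy.
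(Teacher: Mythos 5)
Your proposal is correct and follows essentially the same route as the paper's (sketched) proof, and in fact is more careful on the two points the paper glosses over: the uniqueness of the scattered idempotent in a $\Jclass$-class (your Observation~B) and the derivation of "aperiodicity" in item~1 from the absence of non-trivial groups. The only slip is presentational: in item~5 the identity $\scish$ is not among $\foscatteredeqs$, so you cannot "apply $\scish$" there — but none is needed, since the witnessing idempotent is already a "shuffle idempotent" and $\shiss$ applies to it directly.
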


\begin{proof}[Proof Sketch]
We prove the theorem in reverse order, from Item (5) to Item (1).

For the forward direction of Item (5), suppose $\monoid$ satisfies $\foscatteredeqs$. Let $J$ be a "shuffle regular" "J-class", so it contains a "shuffle idempotent" $e \in \shuffleidemp$. Since $\foscatteredeqs$ holds, $e \in \shsimp$, and hence $J$ is "shuffle simple regular". Conversely, if every "shuffle regular" "J-class" is shuffle simple regular, then for any $e \in \shuffleidemp$, $\gJ(e)$ is "shuffle simple regular", so $e \in \shsimp$, and therefore $\foscatteredeqs$ holds.

The forward and backward directions for Items (4) and (3) follow similarly by using the characterizations in \Cref{lem:Jclassprop} and \Cref{remark:idempotents}, along with the appropriate implications between "shuffle simple@shuffle simple regular", "scattered", and "gap insensitive idempotents".

For Item (2), assume $\monoid$ satisfies $\fofiniteeqs$. Let $J$ be an "ordinal regular" "J-class", so it contains an "ordinal idempotent" $e \in \oidemp$. By the equations, $e \in \giidemp$. Then, by \Cref{lem:Jclassprop}.\Cref{itm:gisc}, $J$ is "scattered regular". From Item (4), it follows that $J$ is "shuffle simple regular". A symmetric argument applies for ordinal* regular "J-classes".

Conversely, suppose all "ordinal" and "ordinal* regular" "J-classes" are shuffle simple regular. Then any $e \in \oidemp$ lies in such a $J$ and hence $e \in \shsimp \subseteq \giidemp$, proving that $\oigi$ holds. A similar argument shows $\ostigi$.

For Item (1), suppose $\monoid$ satisfies $\foeqs$. Let $J$ be a "regular" "J-class". Then $J$ contains an "idempotent" $e$ such that $e \in \giidemp$, so by \Cref{lem:Jclassprop}, $J$ is "scattered regular". Then, by Item (4), $J$ is "shuffle simple regular".

Conversely, if every "regular" "J-class" is shuffle simple regular, then for any "idempotent" $e$, $\gJ(e)$ is "shuffle simple regular" and thus "gap insensitive regular". Hence $e \in \giidemp$, showing that $\eigi$ holds. The other "idempotent" equations follow similarly.

This completes the proof.
\end{proof}


\deprec{unlineWithValue}
\deprec{nulineWithValue}
\deprec{shuffleline}
\deprec{unline}
\deprec{sline}
\deprec{nuline}

\knowledgenewrobustcmd{\cuts}{\cmdkl{\mathcal{C}}}
\newrobustcmd{\cutformula}{\formula{nc}}
\newrobustcmd{\lcuts}{\prec}
\newrobustcmd{\rcuts}{\succ}

\newrobustcmd{\leftlangone}[1]{P_{#1}^1}
\newrobustcmd{\rightlangone}[1]{S_{#1}^1}
\newrobustcmd{\leftlangtwo}[1]{P_{#1}^2}
\newrobustcmd{\rightlangtwo}[1]{S_{#1}^2}

\knowledgenewrobustcmd{\formulaLeq}[1]{\cmdkl{\fformula{Leq}}_{\cmdkl{#1}}}
\knowledgenewrobustcmd{\formulaReq}[1]{\cmdkl{\fformula{Req}}_{\cmdkl{#1}}}
\knowledgenewrobustcmd{\formulaJ}[1]{\cmdkl{\fformula{Words}}_{\Jeq(#1)}}

\knowledgenewrobustcmd{\formulawords}[1]{\cmdkl{\fformula{Words}}_{\cmdkl{#1}}}

\knowledgenewrobustcmd{\formulalanguage}[1]{\cmdkl{\fformula{Language}}_{#1}}
\knowledgenewrobustcmd{\formulaomega}[1]{\cmdkl{\omega\text{-}\fformula{Words}}_{\cmdkl{#1}}}
\knowledgenewrobustcmd{\formulaomegastar}[1]{\cmdkl{\omega^*\text{-}\fformula{Words}}_{\cmdkl{#1}}}
\knowledgenewrobustcmd{\formulawordsbelow}{\cmdkl{\fformula{Words}}_{\nJg a}}
\knowledgenewrobustcmd{\formulaomegastarbelow}{\cmdkl{\omega^*\text{-}\fformula{Words}}_{\notin\mainZp}}
\knowledgenewrobustcmd{\formulaomegabelow}{\cmdkl{\omega\text{-}\fformula{Words}}_{\notin\mainZp}}
\knowledgenewrobustcmd{\formulafinwords}[1]{\cmdkl{\fformula{Fin}\text{-}\fformula{Words}}_{\cmdkl{#1}}}
\knowledgenewrobustcmd{\formulascatwords}[1]{\cmdkl{\fformula{Scat}\text{-}\fformula{Words}}_{\cmdkl{#1}}}

\knowledgenewrobustcmd{\fJfall}{\cmdkl{\fformula{Words}}_{\cmdkl{\overline Z}}}
\knowledgenewrobustcmd{\wordsJ}{\cmdkl{\fformula{Words}}_{\cmdkl{J}}}

\knowledgenewrobustcmd{\fJa}{\cmdkl{\fformula{Words}}_{\cmdkl{\Jeq a}}}
\knowledgenewrobustcmd{\fLa}{\cmdkl{\fformula{Words}}_{\cmdkl{\Leq a}}}
\knowledgenewrobustcmd{\fRa}{\cmdkl{\fformula{Words}}_{\cmdkl{\Req a}}}
\knowledgenewrobustcmd{\fHa}{\cmdkl{\fformula{Words}}_{\cmdkl{\Heq a}}}


\newrobustcmd{\foformula}[1]{\formula{#1}^{\formula{FO}}}

\knowledgenewrobustcmd{\fodefcuts}{\cmdkl{\fo\cuts}}
\knowledgenewrobustcmd{\existsfodefcut}{\cmdkl{\exists_{\fodefcuts}}}
\knowledgenewrobustcmd{\forallfodefcuts}{\cmdkl{\forall_{\fodefcuts}}}
\knowledgenewrobustcmd{\cutset}[1]{\cmdkl{\mathcal{#1}}}

\knowledge{\fo-definable cuts}[\fo-definable|\fo-definable gaps|\fo-definable cut]{notion}
\knowledge{\fo-definable cut formulas}[\fo-definable cut formula]{notion}
\knowledgenewrobustcmd{\hole}{\cmdkl{o}}

\newrobustcmd{\fWordsusv}{(L_1,L_2,L_3)\text{-}\formulawords{}}
\newrobustcmd{\subtract}[2]{#1 \backslash #2}
\newrobustcmd{\param}{p}
\newrobustcmd{\fgapright}[1]{\overrightarrow{\varphi_{#1}}}
\newrobustcmd{\fgapleft}[1]{\overleftarrow{\varphi_{#1}}}
\renewrobustcmd{\complement}[1]{\overline{#1}}

\newrobustcmd{\proofiff}{\Leftrightarrow}

\newrobustcmd{\msoiff}{\leftrightarrow}
\newrobustcmd{\msoimplies}{\rightarrow}

\newrobustcmd{\fw}[1] {\Words {#1}}
\newrobustcmd{\fwJg}[1] {\Words {\Jg #1}}
\newrobustcmd\fwJgeq[1]{\Words{{\Jgeq}#1}}

\newrobustcmd\fwJeq[1]{\Words{\Jclass #1}}

\newrobustcmd{\fwnJg}[1] {\Words {\nJg #1}}
\newrobustcmd{\fwnJgeq}[1] {\Words {\nJgeq #1}}
\newrobustcmd{\fwReq}[1] {\Words {\Rclass #1}}
\newrobustcmd{\fwLeq}[1] {\Words {\Lclass #1}}
\newrobustcmd{\fwHeq}[1] {\Words {\Hclass #1}}


\newrobustcmd\jr{\euright{\nmainZp}}
\newrobustcmd\jl {\euleft{\nmainZp}}

\section{From "o-monoids" to regular expressions}
\label{section:monoid to expression}

In this section, we establish the direction from algebra to expression in \Cref{theorem:main}.

\subsection{General structure of the translation from algebra to expressions}
\label{subsection:monoid to expression structure}

\AP The goal of this section is to establish the difficult implications of \Cref{theorem:main}, which we reformulate using the characterization via "J-classes":

\begin{lemma}[main lemma]\AP\label{lemma:monoid-to-expression-generic}\label{lem:main}\label{lemma:main}
	Let~$L$ be "recognized" by an "o-monoid" $\monoid$. Then the following holds:
	\begin{enumerate}
	\item \label{item:fo} If all "regular J-classes" in $\monoid$ are "shuffle simple@@regular", there exists a "marked star-free expression" for~$L$.
	\item \label{item:wmso} If all "ordinal@@regular" and "ordinal* regular" "J-classes" in $\monoid$ are "shuffle simple@@regular", there exists a "marked expression" for~$L$.
	\item \label{item:focut} If $\monoid$ is "aperiodic" and all its "scattered regular" "J-classes" are "shuffle simple@@regular", there exists a "power-free expression" for~$L$.
	\item \label{item:wmsocut} If all "scattered regular" "J-classes" in $\monoid$ are "shuffle simple@@regular", there exists a "scatter-free expression" for~$L$.
	\item \label{item:scat}\label{item:foscat} If all "shuffle regular" "J-classes" in $\monoid$ are "shuffle simple@@regular", there exists a "scatter expression" for~$L$.
	\end{enumerate}	
\end{lemma}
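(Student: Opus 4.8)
The plan is to prove all five items in parallel by a single induction that processes the "J-classes" of $\monoid$ from $\Jleq$-maximal to $\Jleq$-minimal. Since each of the five expression classes is closed under finite unions and $L = h^{-1}(F) = \bigcup_{a\in F}h^{-1}(a)$, it suffices to build, for every $a\in\monoid$, an expression of the relevant class for $h^{-1}(a)$; for this it is convenient to build first the auxiliary language $\fwJg a$ of words whose image lies strictly above $\Jclass(a)$ in the $\mathcal J$-order, and then $\fwJeq a$ refined by the "R-classes" and "L-classes" of $\Jclass(a)$, from which $h^{-1}(a)$ is recovered by a boolean combination (legitimate because, under any of the five hypotheses, the relevant "J-classes" have only singleton $\mathcal H$-classes, by \Cref{thm:fundamental}.\Cref{itm:nogroup}). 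The induction is organised so that when $\Jclass(a)$ is treated, expressions for all $\fwJeq b$ with $b\Jg a$ are already available; in particular $\fwJg a$, a finite union of such, is in hand, and it is "extension closed", which is exactly what lets \Cref{lem:operations} be applied to it.

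The heart of the argument is the analysis of a word $w$ with $h(w)\in\Jclass(a)$. Using \Cref{lem:monoidprop}.\Cref{itm:Jfall} one shows that the set of positions $x$ of $w$ for which the factor strictly to the left of $x$ still maps into $\fwJg a$ is a "convex" initial segment $I$ of $\dom w$, whose right boundary is the place where the running product first falls into $\Jleq a$; this boundary may be a "gap", and $I$ need not have a maximum. One then writes $w$ as a "generalized concatenation" over a short index ordering --- a prefix in $\fwJg a$ (handled inductively), possibly a marking "letter", and a "core" factor whose running value never leaves $\Jclass(a)$ --- and the task reduces to describing, inside the core, which element of $\Jclass(a)$ is produced. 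That element is pinned down by its "R-class" on the left and its "L-class" on the right (Green's Lemma, \Cref{lem:monoidprop}.\Cref{itm:green}) together with a finite, ordinal, scattered, or shuffle-type ``skeleton'' of how the $\Jclass(a)$-factors are arranged, the $\fwJg a$-factors acting neutrally modulo $\Jclass(a)$.

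This is precisely where the hypothesis is used. If $\Jclass(a)$ is one of the infinitary regular types occurring in the relevant item --- "ordinal@@regular"/"ordinal* regular" for items \ref{item:fo}--\ref{item:wmso}, "scattered regular" for \ref{item:focut}--\ref{item:wmsocut}, "shuffle regular" for \ref{item:scat} --- then by assumption it is "shuffle simple regular", and the Structure Theorem (\Cref{thm:fundamental}) together with \Cref{lem:omegaprop} and \Cref{lem:omegaandstar} make the $\omega$-, $\omega^*$- and shuffle operations behave rigidly on its egg-box: each sends an element to a fixed row or column, there are no non-trivial groups, and the "shuffle idempotent" $e$ of $\Jclass(a)$ satisfies $\shuffleoper{(\{e\}\cup K)}=e$ whenever $e\productoper c\productoper e=e$ for all $c\in K$, so a dense arrangement of core factors contributes nothing new. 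If instead $\Jclass(a)$ is only finitely regular (every infinite concatenation of its elements escapes downward in the $\mathcal J$-order --- the only situation in which a group can occur, and only in items \ref{item:wmso}, \ref{item:wmsocut}, \ref{item:scat}), the core is a finite concatenation of $\Jclass(a)$-factors and is handled by a "marked Kleene star" or "Kleene star". In every case the skeleton thus amounts to a bounded amount of row/column bookkeeping plus finite concatenations, and it is expressible with the operators permitted in the target class --- "marked concatenation" and "marked Kleene star" for items \ref{item:fo}--\ref{item:wmso}, "unrestricted concatenation", "Kleene star" and the "scatter operator" for items \ref{item:focut}--\ref{item:scat} --- using, where no star or scatter operator is available, the negation-based encodings of ordinal-indexed and scattered concatenations supplied by \Cref{lem:operations} (\Cref{item:definfinity}, \Cref{item:defdense}).

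As already noted, the main difficulty is carrying this out uniformly for all five cases at once. The delicate point is that the ``skeleton'' --- and the prefix/marker/core decomposition of $w$ itself --- must be rendered using only the operators of the target class: for the marked and "marked star-free" cases this means expressing well-founded (ordinal-indexed) and finite concatenations without an "unrestricted Kleene star" or the "scatter operator", which is the real technical content of \Cref{lem:operations}; for the power-free, scatter-free and scatter cases it means expressing scattered concatenations through the "scatter operator" or the surrogate $\negation{\sop{(\cdots)}}$. A secondary, bookkeeping-heavy obstacle is maintaining the "extension closed" invariants along the induction so that \Cref{lem:operations} stays applicable, and checking that the various boolean recombinations remain inside the target class.
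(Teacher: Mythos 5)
Your overall route is the same as the paper's: a top-down induction on $\mathcal J$-classes, first expressing the words whose image lies strictly above the current class, then the class itself refined by $\mathcal R$- and $\mathcal L$-classes, with the Structure Theorem controlling the $\omega$-, $\omega^*$- and shuffle limits and \Cref{lem:operations} supplying the encodings of infinitary concatenations in the restricted operator sets. Two points, however, need repair.

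First, the parenthetical justification for recovering $\Words a$ from the $\mathcal R$/$\mathcal L$-class information by a boolean combination is incorrect: it is not true that under all five hypotheses the relevant $\mathcal J$-classes have singleton $\mathcal H$-classes. \Cref{thm:fundamental}.\Cref{itm:nogroup} only forbids non-trivial groups in $\mathcal J$-classes containing an idempotent $\mathcal J$-equivalent to its $\omega$- or $\omega^*$-power; a $\mathcal J$-class that is regular but neither ordinal nor ordinal* regular can carry a non-trivial group while satisfying the hypotheses of items 2, 4 and 5 vacuously (the $\mathcal J$-class $\{s,\,s\productoper s\}$ of $\monoidEven$ is an example, and the recognized language, words with an even number of $a$'s, is not a boolean combination of $\mathcal R$- and $\mathcal L$-class languages). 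You do later say that the finitely regular core is handled by a Kleene star, which is the right idea and is exactly the paper's \Cref{lem:groupwitness} and \Cref{lem:fa} (a finite automaton over $\Rclass(a)$ with marked-expression transitions), but the claim as stated would let you skip that step, and you cannot.

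Second, and more substantively, the completeness of your decomposition is asserted rather than proved. Knowing that the prefix products leave $\mainZ$ at a convex boundary does not by itself yield an expression for $\Words\nmainZ$: one must show that every such fall is certified by a local event of one of finitely many types --- a letter, a binary concatenation, an $\omega$- or $\omega^*$-limit, or a shuffle --- each detectable by an expression of the target class. This is the paper's \Cref{lem:witness}, proved by a Zorn's-lemma argument on maximal intervals whose image stays in $\mainZ$, and it is the single load-bearing combinatorial step. Your sketch uses its conclusion (notably in the claim that a dense arrangement of core factors contributes nothing new, which is precisely the shuffle-witness case of \Cref{lem:shuffle} where shuffle simplicity enters) without establishing it; without this classification the union of detectors you build has no reason to cover all of $\Words\nmainZ$.
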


\AP \Cref{section:monoid to expression} is devoted to establishing \Cref{lemma:monoid-to-expression-generic}. We provide a unified proof that establishes all the cases in \Cref{lemma:monoid-to-expression-generic} simultaneously, depending on the properties of the "o-monoid"~$\monoid$ and the target expression class~$\expr$.

Assume we are given an "o-monoid"~$\monoid=(\monoidset,\productoper)$ that "recognizes" a "language"~$L$, and that $\monoid$ satisfies the hypothesis of one of the items in \Cref{lemma:monoid-to-expression-generic}. Our aim is to construct an "$\expr$-expression" for~$L$, where $\intro*\expr$ is the corresponding class of expressions. For example, if $\monoid$ satisfies the hypothesis of \Cref{item:fo}, we aim to construct a "marked star-free expression" recognizing~$L$.

\AP For all $S\subseteq M$, we write:
\[
	\intro*\Words S \defs \invmorphism(S) = \{u\in\words\alphabet \mid \morphism(u)\in S\}\ ,
\]
and for $a\in M$, we write $\Words a$ as shorthand for $\Words{\{a\}}$.
We also define:
\[
\Words{{\Jleq} a} := \Words{\{b\mid b\Jleq a\}}, \quad
\fwJeq a := \Words{\{b\mid b\Heq a\}}.
\]
Similarly, we define:
\[
\Words{{\nJleq} a}, \quad \Words{{\Jgeq} a}, \quad, \fwReq a, \quad \fwHeq a \dots
\]
as analogous abbreviations.

Note that:
\[
\Words S = \bigcup_{a\in S}\Words a,
\]
so it suffices to prove that $\Words a$ is "$\expr$-expressible" for each $a \in M$ in order to conclude that $L$ is "$\expr$-expressible".

\AP The core of the proof is an induction on the size of the $\gJ$-upward closure~$\intro*\mainZ\subseteq\monoidset$ of an element~$a\in M$, for which we establish the "induction hypothesis@IHJ":
\begin{quote}\AP
	""induction hypothesis@IHJ"": $\Words a$ is "$\expr$-expressible".
\end{quote}

\AP Once $a$ is fixed, let~$\intro*\mainJ$ be its "J-class", and $\intro*\mainZp:= \mainZ\setminus\mainJ$ the set of elements strictly $\gJ$-above~$a$ (see \Cref{fig:ih}).

\AP For each $b\in\mainZp$, the set $\Words b$ is "$\expr$-expressible" by the "induction hypothesis@IHJ", since the $\gJ$-upward closure of $b$ is strictly contained in~$\mainZ$.

Our goal is now, by combining these "$\expr$-expressions" for $\Words b$ (for~$b\in\mainZp$), to show that $\Words a$ is "$\expr$-expressible". This is carried out in several steps, which we now outline:

\begin{itemize}
\itemAP We first construct an "$\expr$-expression" for the complement set $\Words\nmainZ$. This is addressed in \Cref{subsection:nJgeq-expression}, and is in fact the most complex part of the proof, as it heavily depends on the nature of $\expr$. Once this is done, we directly obtain an expression for:
\[
\Words\mainJ = \complement{\Words\nmainZ} \setminus \Words\mainZp.
\]
\itemAP In the second step, we construct for all~$a\in\mainJ$ an "$\expr$-expression" for the sets $\fwReq a$ and $\fwLeq a$, which correspond to the words mapping to elements $\Req a$ and $\Leq a$, respectively. From these, we deduce an expression for $\fwHeq a$.
\itemAP In the final step, we provide an "$\expr$-expression" for $\Words a$ itself (see \Cref{subsection:equal-expression}).
\end{itemize}

Before beginning the proof, we develop the necessary tools required for constructing these expressions.

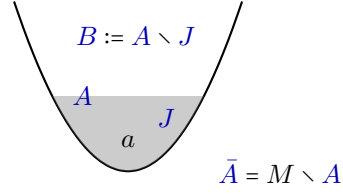
\begin{figure}[h]
\begin{center}
\begin{tikzpicture}
    \draw[thick, domain=-1.5:1.5, smooth, variable=\x] plot ({\x}, {\x*\x});
    \begin{scope}
        \clip [domain=-1.5:1.5]plot (\x, {\x*\x});
        \fill [gray, opacity=0.4] (-1,0) rectangle (1,1);
    \end{scope}
    \node at (0.1, 1.8) {$\mainZp := \mainZ \setminus \mainJ$};
    \node at (0, 0.4) {$a$};
    \node at (0.5, 0.7) {$\mainJ$};
    \node at (-0.6,1) {$\mainZ$};
    \node at (2,0) {$\nmainZ = M \setminus \mainZ$};
\end{tikzpicture}
\end{center}
\caption{\label{fig:ih}Subsets of $M$ involved in an inductive step.}
\end{figure}

\subsection{The Tools}
\AP
Our aim is to introduce the tools required for the rest of the proof.
For some set~$S\subseteq M$, we shall define
\[
\intro*\euright S \defs  \cofinal(\Words{S}),
\]
and $\intro*\euleft S$ for the symmetric version.
Note that if~$S$ is a subset of $\mainZp$ or equal to $\nmainZp$, then we already know that $\euright S$ and $\euleft S$ are ``$\expr$-expressible''.

\AP For all idempotents~$e$, we define the language $\intro*\eright e$ of words that ``look like'' an omega sequence of $e$'s:
\begin{align*}
 \reintro*\eright e \defs \cofinal(\fwJeq e)\ & \bigcap\ \fprefix(\fwJgeq e) \\
  & \bigcap\ \initial(\fwReq e)
\end{align*}
and $\intro*\eleft e$ for the symmetric version. Note also here that if $e\in\mainZp$, then we already know that $\eright e$ and $\eleft e$ are ``$\expr$-expressible''.
 
\AP
The following lemma (and its symmetric version) shows some properties of these languages.
\begin{lemma}\AP \label{lem:eomega}
	Consider an idempotent $e$.
	\begin{enumerate}
	  \item \label{item:euright} Let words $u_0, u_1, \dots$ be such that $\morphism(u_0) \Req e$, $\morphism(u_i) \Jeq e$, and $\morphism(u_0 \dots u_i) \Jeq e$ for all $i \in \nats$. Then $\prod_{i \in \nats} u_i \in \eright e$.
	  \item For all $w \in \eright e$, we have $\morphism(w) = \omegaoper e$.
	  \item $\eright e \subseteq \jr$ for all $e \notin \mainZp$.
	  \item For all $w \in \jr$, either $\morphism(w) \notin \mainZ$ or there exists $e \in \mainJ$ such that $\morphism(w) \Leq \omegaoper e$.
	\end{enumerate}
	\end{lemma}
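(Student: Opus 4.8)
The plan is to verify the four items separately: items~1 and~3 are quick consequences of the definitions, item~4 rests on an additive Ramsey argument over a cofinal decomposition of the word, and item~2 — that \emph{every} word ``looking like'' an $\omega$-sequence of $e$'s already evaluates to $\omegaoper e$ — is where the real work lies.

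For the first item I would write $w=\prod_{i\in\nats}u_i$ and check the three conjuncts of $\eright e$ in turn. From $\morphism(u_0)\Req e$, $\morphism(u_i)\Jeq e$ and $\morphism(u_0\productoper\cdots\productoper u_i)\Jeq e$ one first gets $\morphism(u_i\productoper\cdots\productoper u_j)\Jeq e$ for $1\le i\le j$ (it is $\Jleq\morphism(u_i)\Jeq e$ and $\Jgeq\morphism(u_0\cdots u_j)\Jeq e$), which yields the $\cofinal(\fwJeq e)$ part; every strict prefix is $\Jgeq e$ since it is a left factor of some $\subword w {(-\infty,z)}$ of value $\Jeq e$, giving $\fprefix(\fwJgeq e)$; and $u_0$ supplies $\initial(\fwReq e)$. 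Only minor care with first/last elements of the $u_i$ is needed. Item~3 is immediate: if $e\notin\mainZp$, i.e.\ $e\nJg a$, then no element $\Jeq e$ can be $\Jg a$, so $\Jclass(e)\subseteq\nmainZp$, hence $\eright e\subseteq\cofinal(\fwJeq e)\subseteq\cofinal(\Words\nmainZp)=\jr$.

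For item~4 I would take $w\in\jr$ and assume $\morphism(w)\in\mainZ$, i.e.\ $\morphism(w)\Jgeq a$ (otherwise the first alternative holds). Since $w\in\cofinal(\Words\nmainZp)$, $\dom w$ has no maximum, hence admits a cofinal $\omega$-chain; thinning it using the cofinality of the $\nmainZp$-factors gives positions $x_0<x_1<\cdots$ cofinal in $\dom w$ with every block $\subword w {(x_i,x_{i+1}]}$ containing a factor of value in $\nmainZp$, so $\morphism(\subword w {(x_i,x_{i+1}]})\nJg a$. Writing $w=\subword w {(-\infty,x_0]}\productoper\prod_i\subword w {(x_i,x_{i+1}]}$ and applying additive Ramsey in the finite monoid $\monoid$ to the block values, I obtain an idempotent $f$ (a product of consecutive blocks) with $\morphism(w)=r\productoper\omegaoper f$ for some $r$, so $\morphism(w)\Lleq\omegaoper f\Jleq f$ and $f\nJg a$. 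Combined with $\morphism(w)\Jgeq a$, the chain $a\Jleq\morphism(w)\Jleq\omegaoper f\Jleq f$ together with $f\nJg a$ forces $f\Jeq\omegaoper f\Jeq\morphism(w)\Jeq a$, so $f\in\mainJ$; and $\morphism(w)\Lleq\omegaoper f$ with $\morphism(w)\Jeq\omegaoper f$ give $\morphism(w)\Leq\omegaoper f$ by the left–right dual of \Cref{lem:monoidprop}.\Cref{itm:JlRisR}. Taking $e:=f$ finishes item~4.

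Item~2 is the main obstacle. Given $w\in\eright e$, from $\initial(\fwReq e)$ fix a position $x_0$ with $\morphism(\subword w {(-\infty,x_0)})\Req e$; then $\fprefix(\fwJgeq e)$ and \Cref{lem:monoidprop}.\Cref{itm:JlRisR} show that every prefix of $w$ past $x_0$ has value $\Req e$ (it is $\Rleq$ a value $\Req e$, and $\Jgeq e$, hence $\Jeq e$, hence $\Req e$). The $\cofinal(\fwJeq e)$ part forces $\dom w$ to have no maximum, hence a cofinal $\omega$-chain $x_0\le z_0<z_1<\cdots$ which, after thinning, has each block $\subword w {(z_i,z_{i+1}]}$ containing a factor of value in $\Jclass(e)$. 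Additive Ramsey on the block values gives, on an infinite index set, a fixed idempotent $g$ equal to every consecutive product of blocks; such a product dominates a $\Jclass(e)$-factor (so $g\Jleq e$) and telescopes with the stabilized $\Req e$ prefixes (so $g\Jgeq e$), whence $g\Jeq e$. Regrouping, $\morphism(w)=q\productoper\omegaoper g$ where $q\Req e$ is the eventually constant value of the initial segment and $q\productoper g=q$. The finish is a short computation in the $\circsymbol$-algebra: since $e$ is idempotent and $q\Req e$, $e\productoper q=q$; writing $e=q\productoper x_1$, the element $h:=x_1\productoper q$ is idempotent with $h\Jeq e$ and $q\productoper h=q$; from $q\Req e$, $q\productoper g=q$ and $q\productoper h=q$ one gets $q\Leq g$ and $q\Leq h$, hence $g\Leq h$, so $g\productoper h=g$ and $h\productoper g=h$; therefore, using $\omegaoper{(a\productoper b)}=a\productoper\omegaoper{(b\productoper a)}$, $\omegaoper e=\omegaoper{(q\productoper x_1)}=q\productoper\omegaoper{(x_1\productoper q)}=q\productoper\omegaoper h=q\productoper h\productoper\omegaoper g=q\productoper\omegaoper g=\morphism(w)$. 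The symmetric statements for $\eleft e$ and $\jl$ follow by the dual argument. The delicate point throughout is item~2: isolating the stabilized $\Req e$ prefix, extracting the idempotent $g\Jeq e$ by Ramsey, and reconciling $q\productoper\omegaoper g$ with $\omegaoper e$ via the $\circsymbol$-algebra identities — the Green's-relation bookkeeping around the egg-box is where the care is spent.
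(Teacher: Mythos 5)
Your proposal is correct, and for items~1, 3 and~4 it either matches the paper (item~1) or supplies details the paper omits (the paper dismisses items~3 and~4 as ``similar and omitted''; your cofinal-$\omega$-chain plus additive-Ramsey argument for item~4, and the monotonicity-of-$\cofinal$ argument for item~3, are exactly the intended ones). The genuine divergence is in the finish of item~2. Both you and the paper extract, via Ramsey on a cofinal $\omega$-chain, an idempotent in $\Jclass(e)$ and write $\morphism(w)$ as (prefix value)$\,\productoper\,\omegaoper{(\text{that idempotent})}$. The paper then concludes by Green's lemma (\Cref{lem:monoidprop}.\Cref{itm:green}) that this element lies in $\Rclass(e)\cap\Lclass(\omegaoper f)=\Hclass(\omegaoper e)$, a singleton; note that identifying this intersection with $\Hclass(\omegaoper e)$ tacitly uses $\omegaoper e\Req e$, i.e.\ $\omegaoper e\Jeq e$. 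You instead finish with a purely equational computation in the $\circsymbol$-algebra: from $q\Req e$ you manufacture the idempotent $h=x_1\productoper q\Jeq e$ with $q\productoper h=q$, show $g$ and $h$ are $\Leq$-equivalent idempotents so that $g\productoper h=g$ and $h\productoper g=h$, and then chain $\omegaoper e=q\productoper\omegaoper h=q\productoper h\productoper\omegaoper g=q\productoper\omegaoper g=\morphism(w)$ using $\omegaoper{(a\productoper b)}=a\productoper\omegaoper{(b\productoper a)}$. This buys you a proof that does not presuppose anything about the position of $\omegaoper e$ relative to $\Jclass(e)$, and so covers the case $\omegaoper e\Jl e$ uniformly; the paper's Green's-lemma route is shorter when $\omegaoper e\Jeq e$ but is less explicit about why it applies otherwise. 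Both are valid ways to reach $\morphism(w)=\omegaoper e$.
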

	
	\begin{proof}
	\textbf{(1)} 
	Let $u_0, u_1, \dots$ be as in the statement, and set $w = \prod_{i \in \nats} u_i$.  
	Since $\morphism(u_i) \Jeq e$ and $\morphism(u_0 \dots u_i) \Jeq e$ for all $i$, there are infinitely many marked factors $u$ with $\morphism(u) \Jeq e$. Thus, $w$ satisfies $\cofinal(\fwJeq e)$.  
	
	Moreover, since $\morphism(u_0 \dots u_i) \Jeq e$ for each $i$, $w$ satisfies $\fprefix(\fwJgeq e)$.  
	Finally, because $\morphism(u_0) \Req e$, it follows by \Cref{lem:monoidprop}.\Cref{itm:JlRisR} that $\morphism(u_0 \dots u_i) \Req e$ for all $i$. Therefore, $w$ satisfies $\initial(\fwReq e)$, and hence $w \in \eright e$.
	
	\medskip
	\noindent \textbf{(2)}
	Let $w \in \eright e$. Since $w$ satisfies $\cofinal(\fwJeq e)$, it has no last letter.
	
	Thus, there exists an infinite set of positions $I \subseteq \dom w$ such that for every $y \in \dom w$, there is $x \in I$ with $x > y$.  
	Define a coloring $c(i,j) = \morphism(\subword{w}{[i,j)})$ for $i<j$ in $I$.  
	Since there are finitely many colors, by the infinite Ramsey theorem, there exists an infinite subset $I' \subseteq I$ and an idempotent $f$ such that $c(i,j) = f$ for all $i<j$ in $I'$.  
	(That $f$ is idempotent follows since for all $i<j<k$ in $I'$, we have $f = c(i,k) = c(i,j) \cdot c(j,k) = f \cdot f$.)
	
	We now show that $f \Jeq e$.  
	Consider any factor $u_l$ among the $u_i$'s. Since there exist $i<\dom {u_l}<j$ in $I$ with $\morphism(u_l) \Jeq e$, it follows that $e \Jgeq f$.  
	Conversely, since $w$ satisfies $\fprefix(\fwJgeq e)$, we have $\morphism(\subword{w}{(-\infty,j)}) \Jgeq e$ for all $j$, and therefore $f \Jgeq e$.  
	Hence, $f \Jeq e$.
	
	Thus, $\morphism(w) = x \productoper \omegaoper f$ for some $x$.  
	Since $w$ satisfies $\initial(\fwReq e)$, we have $x \productoper f \Rleq e$.  
	But since $x \productoper f \Jeq e$, by \Cref{lem:monoidprop}.\Cref{itm:JlRisR}, we conclude $x \productoper f \Req e$.  
	By Green's lemma (\Cref{lem:monoidprop}.\Cref{itm:green}), it follows that
	\[
	x \productoper \omegaoper f \in \Rclass(e) \cap \Lclass(\omegaoper f) = \Hclass(\omegaoper e),
	\]
	and since $\Hclass(\omegaoper e)$ is a singleton, we conclude that $\morphism(w) = \omegaoper e$.
	
	\medskip
	\noindent \textbf{(3) and (4)}
	The proofs of the third and fourth claims are similar to the arguments above and are omitted.
	\end{proof}
	
	The next lemma describes how, if a word has no maximal position, its right limit behaviour can be captured by a suffix belonging to some set $\eright e$ for an idempotent~$e$. The proof crucially uses Ramsey's theorem to find a regular pattern in the word's structure.
	\begin{lemma}\AP\label{lem:factor}
		Let $w$ be a non-empty word that does not have a last letter. Then $w$ can be factorized into $w'v$ where $w'$ has a last letter, $\morphism(w') \Lleq e$, and $v \in \eright e$ for some idempotent~$e$. Furthermore, if for all strict prefixes $u$ of $w$ we have $\morphism(u) \in\mainZ$ and $\morphism(w) \in\nmainZ$, then $e \in \mainZ$ and $\omegaoper e \Jl  e$.
	\end{lemma}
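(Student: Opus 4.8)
The plan is a Ramsey-type argument isolating a ``periodic'' suffix, fed into \Cref{lem:eomega} and the $\omega$-power structure from \Cref{thm:fundamental} and \Cref{lem:omegaprop}. As $w$ is non-empty with no last letter, $\dom w$ has a cofinal $\omega$-sequence $p_0<p_1<p_2<\cdots$. I would colour each pair $i<j$ of naturals by $c(i,j):=\morphism(\subword w{(p_i,p_j]})$; since $M$ is finite, Ramsey's theorem gives an infinite $i_0<i_1<i_2<\cdots$ on which $c$ is constant, equal to some $e$, and $e=c(i_0,i_2)=c(i_0,i_1)\productoper c(i_1,i_2)=e\productoper e$ shows $e$ is idempotent. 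Then I set $w':=\subword w{(-\infty,p_{i_1}]}$, which has last letter $w(p_{i_1})$, and $v:=\subword w{(p_{i_1},\infty)}$, so that $w=w'v$.

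For the first claim: $\morphism(w')=\morphism(\subword w{(-\infty,p_{i_0}]})\productoper c(i_0,i_1)=\morphism(\subword w{(-\infty,p_{i_0}]})\productoper e$, so $\morphism(w')\Lleq e$; and applying \Cref{lem:eomega}.\Cref{item:euright} to the factors $u_j:=\subword w{(p_{i_{j+1}},p_{i_{j+2}}]}$ — each mapping to $e$, with all partial products $\morphism(u_0\cdots u_j)=c(i_1,i_{j+2})=e$ — gives $v=\prod_{j\in\nats}u_j\in\eright e$ (the $p_{i_k}$ remain cofinal in $\dom v$). This direction is routine once \Cref{lem:eomega} is in hand.

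For the second claim, assume every strict prefix of $w$ is $\Jgeq a$ while $\morphism(w)\in\nmainZ$, and put $y:=\morphism(w')$. Since $w'$ is a strict non-empty prefix, $y\Jgeq a$, and from $y\Lleq e$ we get $a\Jleq y\Jleq e$, hence $e\Jgeq a$, i.e.\ $e\in\mainZ$. It remains to show $\omegaoper e\Jl e$: the relation $\omegaoper e=e\productoper\omegaoper e\Lleq e$ is automatic, so I only need to rule out $\omegaoper e\Jeq e$. If it held, \Cref{lem:omegaprop}.\Cref{itm:omega} would give $e\Req\omegaoper e$, hence $e=\omegaoper e\productoper z$ for some $z$; since $e$ is idempotent and $y\Lleq e$ we have $y=ye$, so $y=y\productoper\omegaoper e\productoper z$ and thus $y\Req y\productoper\omegaoper e$. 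But $\morphism(v)=\omegaoper e$ by \Cref{lem:eomega}, so $\morphism(w)=y\productoper\omegaoper e\Jeq y\Jgeq a$, contradicting $\morphism(w)\in\nmainZ$.

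I expect the only real obstacle to be the bookkeeping in the second claim: the colour $e$ must be read off an \emph{interior} block of a prefix — taking $w'=\subword w{(-\infty,p_{i_1}]}$ rather than $\subword w{(-\infty,p_{i_0}]}$ — precisely so that $y=\morphism(w')\Lleq e$, which makes the idempotent identity $y=ye$ available; that identity is exactly what makes the $\omega$-drop $\morphism(w)=y\productoper\omegaoper e$ $\gJ$-comparable to $y$. Everything else reduces to \Cref{lem:eomega} and \Cref{lem:omegaprop}.
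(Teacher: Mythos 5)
Your proposal is correct and follows essentially the same route as the paper's proof: Ramsey's theorem on a cofinal $\omega$-sequence to extract an idempotent colour $e$, cutting $w$ at the \emph{second} selected position so that $\morphism(w')\Lleq e$, invoking \Cref{lem:eomega}.\Cref{item:euright} for $v\in\eright e$, and, for the second claim, deriving a contradiction from $e\Jeq\omegaoper e$ via $e\Req\omegaoper e$ and the idempotent identity $y=y\productoper e$. Your write-up is in fact somewhat more explicit than the paper's about why $\morphism(w)\Jgeq\morphism(w')$ follows, but the underlying argument is identical.
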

	
	\begin{proof}
	 Let $w$ be a non-empty word without a last letter. Then there exists an $\omega$-sequence of positions $I$ such that for any point $y \in \dom w$ there is a point $x \in I$ with $x > y$. For any $i<j$ in $I$, define the coloring $c(i,j)=\morphism(\subword{w}{[i,j)})$. Since the range of $c$ is finite, the infinite Ramsey theorem applies: there exists an infinite subset $I' \subseteq I$ and an element $e$ such that $c(i,j)=e$ for all $i<j$ in $I'$. 
	
	 Note that $e$ is idempotent: for all $i<j<k$ in $I'$, we have $e=c(i,k)=c(i,j) \cdot c(j,k) = e \cdot e$. 
	
	 Now, factor $w$ as $w'v$ where $w'$ ends at any position other than the first position of $I'$ and $v$ is the suffix beginning at that point. Applying \Cref{lem:eomega}.\Cref{item:euright} to $v$, we conclude that $v \in \eright e$. Also, $\morphism(w') \Lleq e$ by construction.
	
	 For the second part, assume that for all strict prefixes $u$ of $w$, $\morphism(u) \in \mainZ$, and $\morphism(w) \in \nmainZ$. Suppose for contradiction that $e \Jeq \omegaoper e$. Then, since $\morphism(w)=\morphism(w')\productoper\morphism(v)$ and $\morphism(v)=\omegaoper e$, there exists an element $x$ such that $\morphism(w)\productoper x \Jeq \morphism(w')$. But $\morphism(w')\in\mainZ$ by assumption, and $\mainZ$ is $\Jeq$-upward closed. Thus, $\morphism(w) \in\mainZ$, contradicting our hypothesis. Therefore, it must be that $e \in \mainZ$ and $\omegaoper e \Jl e$.
	\end{proof}
	
The following lemma provides a list of languages that can be expressed using “$\expr$-expressions,” by applying marked concatenations to simpler building blocks obtained through the "inductive hypothesis@IHJ".

\begin{lemma}\AP\label{lem:expr-expressible} The following languages are expressible using marked concatenations applied to $\expr$-expressions: 
	\begin{enumerate} 
		\item $\eright e$, $\words \alphabet \eright e$, $\words \alphabet \eright e \alphabet \words \alphabet$ for all idempotents $e \in \mainZp$. 
		\item $\eleft f$, $\eleft f \words \alphabet$, $\words \alphabet \alphabet \eleft f \words \alphabet$ for all idempotents $f \in \mainZp$. 
		\item $\jr$, $\jr \alphabet \words \alphabet$. 
		\item $\jl$, $\words \alphabet \alphabet \jl$. 
		\item $\eright e \eleft f$, $\words \alphabet \eright e \eleft f \words \alphabet$ for idempotents $e, f \in \mainZp$ satisfying $\omegaoper e \productoper \omegastaroper f \Jl e$. 
		\item $\eright e \eleft f$, $\words \alphabet \eright e \eleft f \words \alphabet$ for idempotents $e, f \in \mainZp$ satisfying $\omegaoper e \productoper \omegastaroper f \Jl f$. 
		\item $\eright e \jl$, $\words \alphabet \eright e \jl$ for idempotents $e \in \mainZp$. 
		\item $\jr \eleft e$, $\jr \eleft e \words \alphabet$ for idempotents $e \in \mainZp$. 
		\item $\words \alphabet \eleft f \words \alphabet$ for idempotents $f \in \mainZp$ with $\omegaoper f \Jl f$. 
		\item $\words \alphabet \eright e \words \alphabet$ for idempotents $e \in \mainZp$ with $\omegaoper e \Jl e$. 
	\end{enumerate} 
\end{lemma}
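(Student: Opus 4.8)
The plan is to build every language in the list from a fixed stock of $\expr$-expressible atoms, using only boolean operations together with the operators supplied by \Cref{lem:operations}, and to observe that each of those operators is realised by an expression applying at most marked concatenations to its argument, so the whole construction stays inside ``marked concatenations applied to $\expr$-expressions''. The atoms are obtained as follows. By the "induction hypothesis@IHJ", $\Words b$ is $\expr$-expressible for every $b\in\mainZp$; and if $e\in\mainZp$ then every element $\gJ$-above $e$ is strictly $\gJ$-above $a$, hence lies in $\mainZp$, so $\{b\mid b\Jgeq e\}$, $\Jclass e$, $\Rclass e$ and $\Lclass e$ are subsets of $\mainZp$ and the sets $\fwJgeq e$, $\fwJeq e$, $\fwReq e$, $\fwLeq e$ are finite unions of atoms, hence $\expr$-expressible. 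Likewise $\Words{\nmainZp}=\negation{\Words{\mainZp}}$ is $\expr$-expressible, and $\words\alphabet=\negation\emptyset$, $\alphabet$, $\alphabet\words\alphabet$, $\words\alphabet\alphabet$ and $\epsilon$ are given by "marked star-free expressions" (\Cref{sec:examples}); finite unions, intersections and complements are available in every class.

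Next I would dispatch the one-sided patterns. Unfolding definitions, $\eright e=\cofinal(\fwJeq e)\cap\fprefix(\fwJgeq e)\cap\initial(\fwReq e)$ and $\jr=\cofinal(\Words{\nmainZp})$, with symmetric formulas for $\eleft f$ and $\jl$. By \Cref{lem:operations}, each of $\initial$, $\final$, $\fprefix$, $\fsuffix$, $\cofinal$, $\coinitial$ turns an $\expr$-expressible language into an $\expr$-expressible one using only marked concatenations applied to it; substituting the atoms above makes $\eright e$, $\eleft f$, $\jr$ and $\jl$ themselves $\expr$-expressible, and every decorated form in which a marked letter is inserted against a one-sided factor — such as $\eright e\,\sigma\,\words\alphabet$ or $\jr\,\sigma\,\words\alphabet$ — follows by one further marked concatenation. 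The semantic facts about these languages (members of $\eright e$ map to $\omegaoper e$; a word with no last letter factors as $w'v$ with $v$ in some $\eright e$), recorded in \Cref{lem:eomega} and \Cref{lem:factor}, are not needed for expressibility here but are the purpose of these building blocks for the later gluing arguments.

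The genuinely delicate point is every junction that is forced to be a gap rather than a position: a word of $\eright e\,\eleft f$, $\eright e\,\jl$, $\jr\,\eleft e$, or a word obtained by prepending or appending $\words\alphabet$ against a one-sided factor without a marked letter, may carry no letter where the two parts meet — precisely because $\eright e$ and $\eleft f$ have, respectively, no last letter and no first letter — so the concatenation cannot be a marked one. For the three classes admitting "unrestricted concatenation" (power-free, scatter-free and scatter expressions) this costs nothing: the language is literally the unrestricted concatenation of two $\expr$-expressible factors, with the gap at the junction appearing automatically, exactly as in the gap expression of \Cref{sec:examples}. For the "marked expression" classes I would instead invoke the hypothesis of the relevant item of \Cref{lemma:main}: the side conditions $\omegaoper e\productoper\omegastaroper f\Jl e$, $\omegaoper e\productoper\omegastaroper f\Jl f$, and $\omegaoper e\Jl e$ attached to the two-sided items pin down the element that such a word maps to, and that hypothesis forces the pertinent "regular J-classes" to be "shuffle simple regular", hence "gap insensitive" by \Cref{remark:idempotents}; the Structure Theorem (\Cref{thm:fundamental}) then shows that placing a gap at such a junction does not change the value, so the words containing the gap already lie in an $\expr$-expressible set, and the language is recovered as a marked combination of the atoms. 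The outer $\words\alphabet\,(\ )\,\words\alphabet$ wrappers then cost one last pair of marked concatenations.

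I expect this last step to be the main obstacle: over countable words a concatenation point need not carry a letter, so the naive product of two one-sided patterns escapes the marked syntax, and the difficulty is to handle this uniformly across all five expression classes — for the marked classes by exploiting the gap insensitivity coming from \Cref{lemma:main}, with the $\gJ$-order side conditions being exactly what makes the offending gap collapsible. Running alongside it is the routine-but-lengthy bookkeeping of verifying that every operator imported from \Cref{lem:operations} never introduces anything beyond marked concatenation.
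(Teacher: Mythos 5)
You have correctly isolated the crux --- the junction of $\eright e$ and $\eleft f$ (or of $\jr$ and $\jl$, or of $\words\alphabet$ against a one-sided factor) carries no letter, so it cannot be a marked concatenation --- but your resolution of it does not work, and it is also not faithful to what the lemma asserts. First, the lemma claims expressibility \emph{using marked concatenations} uniformly, for all five classes; your plan to fall back on unrestricted concatenation for the power-free, scatter-free and scatter classes would prove a weaker statement than the one needed later (the lemma is invoked precisely in places where only marked operations are available). Second, for the marked classes your appeal to gap insensitivity is a non sequitur: $\eright e\,\eleft f$ is a \emph{structural} language (``a prefix that is an $\omega$-pattern of $e$'s followed by a suffix that is an $\omega^*$-pattern of $f$'s''), not a value-defined one, so knowing that the value at the junction is pinned down does not produce an expression for it. Worse, the side conditions $\omegaoper e\productoper\omegastaroper f\Jl e$ (resp.\ $\Jl f$) say that the product falls \emph{strictly below} the $\gJ$-class of $e$, which is the opposite of the gap-insensitive situation $\omegaoper e\productoper\omegastaroper e=e$; the hypothesis of \Cref{lemma:main} concerns $\Jclass(a)=\mainJ$ and gives you nothing about the idempotents $e,f\in\mainZp$ appearing here.

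The missing idea is the paper's double-negation construction. One writes $\eright e\,\eleft f$ as an intersection $E\cap F$ of languages each of which is a conjunction of \emph{negations of marked concatenations}: the pieces
\[
E_{(d,b,\sigma,c)} \defs \negation{(\Words b\,\sigma)\ \negation{\Words c\,\alphabet\,\words\alphabet}},
\qquad b\Req d\Req e,\ c\Jeq e,\ d=b\productoper\morphism(\sigma)\productoper c,
\]
assert that every prefix ending at a letter in the right $\gR$-class can be strictly extended while staying there, which forces an $\eright e$-prefix without ever placing a concatenation at the gap; the dual pieces $F_{(d,b,\sigma,c)}$ use the falling condition $\nJgeq e$ (this is exactly where the hypothesis $\omegaoper e\productoper\omegastaroper f\Jl e$ enters) to locate the region past the junction and check, again by marked concatenations anchored at letters, that an $\eleft f$-suffix sits immediately to its left. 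Every concatenation in these expressions is marked by an actual letter $\sigma$ of the word; the gap itself is never a concatenation point. Without this (or an equivalent) construction, items 5--10 of the lemma are not established.
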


\begin{proof} The claim for Items 1–4 follows directly from the inductive hypothesis and the definitions of $\eright e$, $\eleft f$, $\jr$, and $\jl$.

We now show Item 5.
Consider $\eright e \eleft f$ where $e,f \in\mainZp$ and $\omegaoper e \productoper \omegastaroper f \nJgeq e$.

We first construct an $\expr$-expression $E$ that captures words with a prefix from $\eright e$.
Define, for $b, c, d \in \monoid$ and $\sigma \in \alphabet$, the expression
\[
E_{(d,b,\sigma,c)} \defs \negation{(\Words b \sigma)\ \negation{\Words c \alphabet \words \alphabet}}
\]
where $b \Req d \Req e$, $c \Jeq e$, and $d = b \productoper \morphism(\sigma) \productoper c$. The expression $E$ is the conjunction of all such $E_{(d,b,\sigma,c)}$ and it ensures that whenever a prefix $u\sigma$ satisfies $\morphism(u) \Req e$, it can be extended to a prefix $v = u\sigma u'$ with $\morphism(u') \Jeq e$ and $\morphism(v) \Req e$.
Thus, $E$ captures the existence of an infinite sequence of words that map to $e$. In particular, this implies the words have a prefix in $\eright e$.

Next, we construct an expression $F$ that recognizes a suffix from $\eleft f$, following the prefix from $\eright e$. Define, for suitable $b, c, d$ and $\sigma$, the expression
\[
F_{(d,b,\sigma,c)} \defs \negation{\big(\Words{\nJgeq e} \cap \negation{\words \alphabet \alphabet \Words c}\big) \sigma \Words b}
\]
where $b \Leq d \Leq f$, $c \Jeq f$, and $d = c \productoper \morphism(\sigma) \productoper b$. Taking the conjunction over all such $F_{(d,b,\sigma,c)}$ (this gives the expression $F$) ensures that beyond a point where $\morphism(u) \nJgeq e$, there is a factor to the left that satisfies $\eleft f$.

Thus, $E \cap F$ defines $\eright e \eleft f$.

To define $\words \alphabet \eright e \eleft f$, observe that by \Cref{lem:factor}, any word in this language can be decomposed as $u'\sigma u''$ where $u'' \in \eright e \eleft f$. Hence, $\words \alphabet \alphabet \eright e \eleft f$ describes it.

The proof for Item 6 (with roles of $e$ and $f$ swapped) follows symmetrically.

Items 7 and 8 follow similarly by adapting the argument to $\jl$ and $\jr$.

For Item 9, consider $\eleft f$ with $f \nJgeq \omegaoper f$. From the inductive hypothesis, $\eleft f$ is expressible. Then, $\words \alphabet \eleft f$ is the union of languages of the form $\words \alphabet \alphabet \eleft f$, $\words \alphabet \alphabet \jr \eleft f$, and $\words \alphabet \alphabet \eright e \eleft f$ for some $e \in \mainZp$.
Since $f \nJgeq \omegaoper f$, $\omegaoper e \productoper \omegastaroper f \nJgeq f$, and thus $\words \alphabet \alphabet \eright e \eleft f$ is expressible.

The argument for Item 10 is analogous.
\end{proof}

\subsection{The expression for $\Words \nmainZ$}
\label{subsection:nJgeq-expression}

\subsubsection{The basic witnesses}

Consider a word $w$ such that $\morphism(w) \notin \mainZ$. Our goal is to identify a factor of $w$ that causes it to fall below $\mainJ$. We call such a factor a \emph{$\mainJ$-witness} (or simply a \emph{witness}), formally defined as follows.

\begin{definition}[$\mainJ$-witness]
A \reintro{witness} is a word $w$ of one of the following types:
\begin{enumerate}
  \itemAP \intro{Letter witness}: $w$ is a single letter and $\morphism(w) \in \nmainZ$.
  \itemAP \intro{Concatenation witness}: $w = uv$ where $u, v \neq \epsilon$, $\morphism(u), \morphism(v) \in \mainZ$, but $\morphism(uv) \in \nmainZ$.
  \itemAP \intro{Omega witness}: $w = u_1 u_2 u_3 \dots$ with $\morphism(u_i) = e$ for all $i$, where $e \in \mainZ$ is an idempotent and $\omegaoper e \in \nmainZ$.\footnote{Note $e \neq 1$ and hence $u_i$s are necessarily non-empty.}
  \itemAP \intro{Omega$^*$ witness}: $w = \dots u_3 u_2 u_1$ with $\morphism(u_i) = e$ for all $i$, where $e \in \mainZ$ is an idempotent and $\omegastaroper e \in \nmainZ$.
  \itemAP \intro{Shuffle witness}: $w = \prod_{i \in \rationals} u_i$, where each $u_i$ is a word, and $\prod_{i \in \rationals} \morphism(u_i)$ is a perfect shuffle of some $K \subseteq \mainZ$ such that $\shuffleoper K \in \nmainZ$.
\end{enumerate}
\end{definition}

\begin{lemma}\AP\label{lem:witness}
If $\morphism(w) \in \nmainZ$, then $w$ contains a witness as a factor.
\end{lemma}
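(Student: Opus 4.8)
The plan is to prove the statement directly, by well-founded induction on a decomposition of the "countable word" $w$ into simpler pieces. The key input is the classical structural fact about countable (coloured) linear orderings: every "countable word" over a finite alphabet is obtained from single "letters" by a well-founded iteration of finite "generalized concatenation", $\omega$-indexed and $\omega^*$-indexed "generalized concatenation", and "perfect shuffle" of finitely many previously-built words, each of them occurring densely in the $(\rationals,<)$-indexing. I would fix such a decomposition of $w$ and induct on its ordinal rank, the induction hypothesis being the statement of the lemma for all words of strictly smaller rank, appealing to it only on children of the current node so that the rank genuinely drops.

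There are five cases at the root. If $\dom w$ is a single point then $w$ is a "letter" with $\morphism(w)\in\nmainZ$, hence a "letter witness". If $w=w_1\cdots w_n$ is a finite "generalized concatenation" and some $\morphism(w_i)\in\nmainZ$, the induction hypothesis applied to $w_i$ gives a "witness" that is a factor of $w$; otherwise all $\morphism(w_i)\in\mainZ$, and the least $k$ with $\morphism(w_1\cdots w_k)\in\nmainZ$ (necessarily $k\ge 2$) exhibits $w_1\cdots w_k=(w_1\cdots w_{k-1})\,w_k$ as a "concatenation witness". In the $\omega$-case $w=\prod_{i\in\nats}w_i$ I first normalise: if some finite consecutive block $w_i\cdots w_j$ has image in $\nmainZ$, take a shortest such; if it is a single child, invoke the induction hypothesis, and otherwise it directly yields a "concatenation witness". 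After this normalisation every finite consecutive block maps into $\mainZ$, so the colouring $c(i,j)=\morphism(w_i\cdots w_{j-1})$ takes values in $\mainZ$, and the infinite Ramsey theorem gives an idempotent $e\in\mainZ$ and an infinite set of cut points along which a suffix of $w$ factors as $\prod_{p\in\nats}v_p$ with $\morphism(v_p)=e$. If $\omegaoper e\in\nmainZ$ this suffix is an "omega witness" (here $e\neq\unit$, so the $v_p$ are non-empty), and if $\omegaoper e\in\mainZ$ then $\morphism(w)=\morphism(\text{prefix})\productoper\omegaoper e$ with both factors in $\mainZ$, so $w$ itself is a "concatenation witness". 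The $\omega^*$-case is symmetric, producing an "omega$^*$ witness" or a "concatenation witness". Finally, in the "perfect shuffle" case $w=\prod_{i\in\rationals}w_i$: if some child has image in $\nmainZ$, descend; otherwise put $K=\{\morphism(w_i)\mid i\in\rationals\}\subseteq\mainZ$, observe that $\prod_{i\in\rationals}\morphism(w_i)$ is a "perfect shuffle" of $K$ (each element of $K$ labels a dense set of positions and there are no endpoints), so by "generalized associativity" $\morphism(w)=\shuffleoper K\in\nmainZ$, and $w$ is a "shuffle witness".

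The main obstacle is having the decomposition theorem available in exactly this shape — in particular that shuffle nodes may be taken to use only finitely many word-types, each dense — which is where the genuine combinatorics of countable orderings enters. It is, however, classical: it is precisely the structural fact that makes the infinitary product of an "o-monoid" computable from the binary product together with the $\omega$-power, $\omega^*$-power and shuffle-power operations via a well-founded term, and it follows from Hausdorff's analysis of scattered countable orderings together with the homogeneity of $(\rationals,<)$ (see \cite{Rosenstein} and the discussion in \cite{CartonColcombetPuppis18}). Granting it, every case above is a routine unwinding of the definitions of the five witness types, plus one application of the infinite Ramsey theorem in the two infinite-concatenation cases; the only point demanding care is the bookkeeping with ordinal ranks, which is handled by the normalisation step so that the induction hypothesis is invoked exclusively on children of a node.
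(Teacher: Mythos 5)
Your argument is correct, but it takes a genuinely different route from the paper's. The paper proves the contrapositive in a self-contained way: assuming $w$ has no witness, it considers the family $\mathcal{I}$ of intervals whose image under $\morphism$ lies in $\mainZ$, translates the absence of each witness type into a closure property of $\mathcal{I}$ (under finite consecutive unions, $\omega$- and $\omega^*$-unions, and dense unions), and then applies Zorn's lemma to conclude that the full domain belongs to $\mathcal{I}$. You instead fix a well-founded decomposition of $w$ into letters via finite, $\omega$-, $\omega^*$- and shuffle-shaped nodes and induct on its rank, reading off the five witness types case by case. The trade-off is that your proof outsources the hard combinatorics to the existence of such decompositions (the evaluation trees of \cite{CartonColcombetPuppis18}), whose construction is of essentially the same difficulty as the paper's Zorn argument; the paper re-derives exactly the decomposition it needs on the fly and so avoids that dependency. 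Granting the citation, your case analysis is sound: the normalisation in the $\omega$-case correctly restricts the induction hypothesis to children of the current node, and the Ramsey step yields either an omega witness or a concatenation witness exactly as claimed.

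One point of precision: the decomposition you invoke must be taken relative to the morphism $\morphism$, not as a purely word-level statement. A shuffle node cannot in general be arranged so that its factors range over finitely many \emph{words}, each occurring densely (consider $\prod_{q\in\rationals}a^{n_q}$ with block sizes unbounded in every interval); what is true, and what \cite{CartonColcombetPuppis18} provide, is that the factors can be regrouped so that their finitely many \emph{images} under $\morphism$ form a perfect shuffle of a set $K\subseteq M$. Your shuffle case only uses the images, so the argument survives, but the decomposition theorem should be quoted in that morphism-relative form.
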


\begin{proof}
Assume $w$ does not contain any witness. We will show that $\morphism(w) \in \mainZ$.

Let $\alpha$ be the domain of $w$. Since $w$ has no letter witness, for all $i \in \alpha$, we have $\subword{w}{[i]} \in \mainZ$. Define:
\[
\mathcal{I} \defs \left\{ I \subseteq \alpha \mid I \text{ is an interval and } \morphism(\subword{w}{I}) \in \mainZ \right\}.
\]
That is $\mathcal{I}$ is the set of all intervals $I$ of $\alpha$ such that $\morphism(\subword{w}{I}) \in \mainZ$. We now identify closure properties of $\mathcal{I}$.

- \textbf{Closure under finite union of consecutive intervals:}
If $I_1, I_2, \dots, I_k \in \mathcal{I}$ are consecutive intervals, then $I_1 \cup I_2 \cup \dots \cup I_k \in \mathcal{I}$.

We prove by induction on $k$. The base case $k=1$ is trivial. Now we show the inductive step. Since $\morphism(\subword{w}{I_1}), \morphism(\subword{w}{I_2}) \in \mainZ$ and there is no concatenation witness, we have $\morphism(\subword{w}{I_1} \subword{w}{I_2}) \in \mainZ$. This reduces the number of intervals by one and hence inductive hypothesis applies.

- \textbf{Closure under omega unions:}
Let $(J_k)_{k \in \omega}$ be a sequence of consecutive intervals in $\mathcal{I}$. Then $\bigcup_{k \in \omega} J_k \in \mathcal{I}$.

To show this, we use \Cref{lem:factor} and factor $\subword{w}{J}$ into $u_0 u_1 \dots$ with $\morphism(u_0) = b$ and $\morphism(u_i) = e$ for $i \geq 1$ and some idempotent $e$. Each $u_i$ lies in a finite union of intervals in $\mathcal{I}$, hence $\morphism(u_i) \in \mainZ$. Absence of omega witnesses implies $\morphism(u_1u_2\dots) \in \mainZ$.

- \textbf{Closure under omega$^*$ unions:} Similar to the previous claim, but for a sequence $(J_k)_{k \in \omega^*}$ of intervals. Again, $\bigcup_{k \in \omega^*} J_k \in \mathcal{I}$.

- \textbf{Closure under rational shuffle unions:} Let $(J_k)_{k \in \rationals}$ be a collection of intervals in $\mathcal{I}$ such that $I = \bigcup_{k \in \rationals} J_k$ is an interval and $\morphism(\subword{w}{I})$ is a perfect shuffle word. Then $I \in \mathcal{I}$.

The above closure follows from the assumption that no shuffle witness occurs in $w$.

We now invoke Zorn’s Lemma.

\begin{claim}[Zorn's Lemma]
The poset $(\mathcal{I}, \subseteq)$ has a maximal element.
\end{claim}
\begin{proof}[Proof sketch]
We show that every chain in $\mathcal{I}$ has an upper bound. 
\begin{itemize}
\item Finite chains: Since the $\mathcal{I}$ is closed under finite union, the union of the finite chain is an upper bound.
\item Chains with common left endpoint: Let the chain be $I_1, I_2, \dots$. Consider $J_k = I_k \setminus I_{k-1}$ for all $k > 1$ and $J_1 = I_1$. Each of these $J_k$ is in $\mathcal{I}$ since $\morphism(\subword{w}{J_k}) \Jgeq \morphism(\subword{w}{I_k})$. Since $\mathcal{I}$ is closed under omega unions, we have $\bigcup_{k \in \omega} J_k \in \mathcal{I}$. Hence union of $I_k$ is an upper bound.
\item Chains with common right endpoint: This is symmetric to the case above. 
\item General chains can be split into left- and right-coinciding subchains and handled via finite union closure.
\end{itemize}
Since every chain in $\mathcal{I}$ has an upper bound, Zorn's Lemma applies and we conclude that $\mathcal{I}$ has a maximal element.
\end{proof}

Let $C \subseteq \mathcal{I}$ be the set of maximal intervals obtained via Zorn's Lemma. We show that $C = \{\alpha\}$:
\begin{itemize}
\item If $|C| = 1$, then $\morphism(w) \in \mainZ$.
\item If some $C_1, C_2 \in C$ intersect, their intersection contradicts maximality.
\item If $C_1, C_2$ are consecutive, then $C_1 \cup C_2 \in \mathcal{I}$, again contradicting maximality.
\item If $C$ forms a dense linear order, then there is a domain $C' \subseteq C$ such that
 $\prod_{I \in C'} \subword{w}{I}$ forms a shuffle word (see \cite{Shelah75}), contradicting the absence of a shuffle witness.
\end{itemize}
Hence, only the first case is possible, and we conclude $\morphism(w) \in \mainZ$.
\end{proof}

\subsubsection{Detecting witnesses using $\expr$-expressions}\label{sec:detect}

We now prove the inductive step of the completeness direction of \Cref{theorem:main}. Thanks to \Cref{lem:witness}, the presence of one of the five witnesses as factor of a word $w$ can be used for guaranteeing $\morphism(w) \not\in\mainZ$. 

We begin with detecting words that ``falls in~$\nmainZ$ at a gap''.
\begin{lemma}[\reintro{gap fall}]\AP\label{lem:gapfall}\label{lemma:gap-fall-expression}\phantomintro{gap fall}
	There exists an "$\expr$-expressible" language~$G\subseteq \Words\nmainZ$ such that
	for all non-empty "words"~$u,v$ satisfying 
	\begin{itemize}
	\item $u$ has no last letter,
	\item $v$ has no first letter, 
	\item $\morphism(u') \in \mainZ$ and $\morphism(v') \in \mainZ$ for all non-empty prefixes $u'$ of $u$ and all non-empty suffixes $v'$ of $v$, and
	\item $\morphism(pq)\in\nmainZ$ for all non-empty suffixes~$p$ of $u$, and all non-empty prefixes $q$ of~$v$.
	\end{itemize}
	then $uv\in G$.
\end{lemma}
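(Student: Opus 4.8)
The plan is to read off the two limit behaviours bordering the gap between $u$ and $v$ and to wrap them into an expression whose very syntactic shape forces the image below $\mainJ$. First I would apply \Cref{lem:factor} to $u$ (which has no last letter): it factors as $u=u_0 s$ with $u_0$ carrying a last letter, $\morphism(u_0)\Lleq e$, and $s\in\eright e$ for some idempotent $e$; dually, $v=t v_0$ with $v_0$ carrying a first letter and $t\in\eleft f$ for some idempotent $f$. By \Cref{lem:eomega} (its second item and the mirror thereof), $\morphism(s)=\omegaoper e$ and $\morphism(t)=\omegastaroper f$.

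Next I would locate $e,f$ and their limit powers in the egg-box of $\monoid$. Every non-empty prefix of $u$ — in particular $u$ itself — maps into $\mainZ$, so $\morphism(u)=\morphism(u_0)\productoper\omegaoper e\in\mainZ$; since $\mainZ$ is upward closed for $\Jgeq$ and $\omegaoper e\Jgeq\morphism(u)$, this forces $\omegaoper e\in\mainZ$, and then $e\in\mainZ$ because $e\Jgeq\omegaoper e$. Symmetrically $\omegastaroper f,f\in\mainZ$. On the other hand $s$ is a non-empty suffix of $u$ and $t$ a non-empty prefix of $v$, so the cross-gap hypothesis gives $\morphism(st)=\omegaoper e\productoper\omegastaroper f\in\nmainZ$.

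With these facts I would set $G$ to be the finite union, over pairs of idempotents $e,f\in\mainZ$ with $\omegaoper e\productoper\omegastaroper f\in\nmainZ$, of the following $\expr$-expressible languages — in each case checking that $uv$ belongs to it and that it is contained in $\Words\nmainZ$:
\begin{itemize}
  \item if $e,f\in\mainZp$, take $\words\alphabet\,\eright e\,\eleft f\,\words\alphabet$; it is $\expr$-expressible by \Cref{lem:expr-expressible}(5) (note $\omegaoper e\productoper\omegastaroper f\Jl e$, being $\Jleq\omegaoper e\Jleq e$ yet lying in $\nmainZ\not\ni e$), every word in it maps by \Cref{lem:eomega} to some element $\Jleq\omegaoper e\productoper\omegastaroper f\in\nmainZ$, and $uv=u_0\cdot s\cdot t\cdot v_0$ lies in it;
  \item if $e\in\mainJ$ (so $\eright e\subseteq\jr$ by \Cref{lem:eomega}, whence $u_0 s\in\jr$) and $f\in\mainZp$, take $\jr\,\eleft f\,\words\alphabet$ (\Cref{lem:expr-expressible}(8)): then $uv=(u_0 s)\cdot t\cdot v_0$ lies in it, and it sits inside $\Words\nmainZ$ since by \Cref{lem:eomega} a word of $\jr$ either already maps into $\nmainZ$ or maps $\Leq\omegaoper{e'}$ for an idempotent $e'\in\mainJ$, and by \Cref{thm:fundamental}.\Cref{itm:omegaL} all such $\omegaoper{e'}$ are $\Leq$-equivalent, so $\omegaoper{e'}\productoper\omegastaroper f\Jeq\omegaoper e\productoper\omegastaroper f\in\nmainZ$;
  \item symmetrically, if $e\in\mainZp$ and $f\in\mainJ$, take $\words\alphabet\,\eright e\,\jl$ (\Cref{lem:expr-expressible}(7));
  \item if $e,f\in\mainJ$, take $\jr\,\jl$, which is $\expr$-expressible as an unrestricted concatenation of the (already $\expr$-expressible) languages $\jr$ and $\jl$; here $uv=u\cdot v$ with $u\in\jr$ and $v\in\jl$, and the containment in $\Words\nmainZ$ follows as above, using in addition the $\Req$-rigidity of $\omegastaroper{f'}$ for idempotents $f'\in\mainJ$.
\end{itemize}

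The step I expect to be the main obstacle is keeping $G\subseteq\Words\nmainZ$: at this point of the induction we have no expression for $\Words\nmainZ$ yet, so this inclusion must be forced by the form of each building block rather than by intersecting with $\Words\nmainZ$. The block $\eright e\,\eleft f$ forces it transparently (its image is $\Jleq\omegaoper e\productoper\omegastaroper f$), but $\jr$ and $\jl$ do not on their own — a word in $\jr$ may still map into $\mainJ$ — and the rescue is exactly the $\Leq/\Req$-rigidity of the $\omega$- and $\omega^*$-powers of idempotents within a fixed $\gJ$-class coming from \Cref{thm:fundamental}.\Cref{itm:omegaL}. Finally, the last bullet uses unrestricted concatenation, available only for the power-free, scatter-free and scatter classes; for the two marked classes (\Cref{item:fo} and \Cref{item:wmso}) one argues instead that $e\in\mainJ$ (equivalently $f\in\mainJ$) cannot occur here, because $e\in\mainJ$ forces $\omegaoper e\in\mainJ$, hence $\mainJ$ is regular — indeed ordinal regular by \Cref{lem:omegaprop}.\Cref{itm:omegaidemp} — hence shuffle simple regular by the standing hypothesis, and then the structure of such $\gJ$-classes (\Cref{lem:omegaandstar}.\Cref{itm:shuffle-scat} together with \Cref{thm:fundamental}.\Cref{itm:omegaL} and the Location Lemma) pushes $\omegaoper e\productoper\omegastaroper f$ into $\mainJ$, contradicting $\omegaoper e\productoper\omegastaroper f\in\nmainZ$; so for the marked classes only the first bullet is ever needed and the whole construction stays within marked concatenation.
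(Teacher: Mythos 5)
Your overall strategy is the paper's: factorize $u$ and $v$ via \Cref{lem:factor} into $u_0\,s$ and $t\,v_0$ with $s\in\eright e$, $t\in\eleft f$, deduce $\omegaoper e\productoper\omegastaroper f\in\nmainZ$, and then case-split on whether $e$ and $f$ lie in $\mainZp$ or in $\mainJ$, invoking \Cref{lem:expr-expressible} for the first three configurations. Your explicit verification that each block is contained in $\Words\nmainZ$ (via the $\Leq/\Req$-rigidity of limit powers from \Cref{thm:fundamental}.\Cref{itm:omegaL}) is a welcome elaboration of what the paper leaves implicit.

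There is, however, a genuine gap in your treatment of the case $e,f\in\mainJ$ for the marked classes. Your escape route is the claim that this case cannot occur under \Cref{item:fo} or \Cref{item:wmso}. For \Cref{item:fo} the conclusion is right (every regular $\gJ$-class is then shuffle simple regular, hence scattered regular, hence gap insensitive, forcing $\omegaoper e\productoper\omegastaroper f\in\mainJ$) — although your opening step ``$e\in\mainJ$ forces $\omegaoper e\in\mainJ$'' has the logic backwards: that inclusion is a \emph{consequence} of the standing hypothesis, not of $e\in\mainJ$ alone. For \Cref{item:wmso} the claim is simply false: the hypothesis there only constrains "ordinal regular" and "ordinal* regular" $\gJ$-classes, so $\mainJ$ may be regular while being neither, in which case $\omegaoper e$ and $\omegastaroper f$ both fall strictly below $\mainJ$ and $\omegaoper e\productoper\omegastaroper f\in\nmainZ$ is perfectly possible. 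In that situation your fourth bullet is the only one that applies, and it uses unrestricted concatenation, which marked expressions do not have. The paper resolves exactly this configuration (its Case 4) by observing that, since $\mainJ$ contains no $\omega$- or $\omega^*$-power of its idempotents, any word with infinitely many non-overlapping factors from $\Words{\nmainZp}$ maps into $\nmainZ$, and this ``infinitely many factors'' language admits a purely marked formulation, $\negation{\negation{F\alphabet F}\ \fop{(\alphabet\,\negation{F\alphabet F})}}$ with $F=\Words{\nmainZp}$, by \Cref{lem:operations}.\Cref{item:definfinity}. Your proof needs this (or an equivalent marked-Kleene-star construction) to cover the \Cref{item:wmso} case.
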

\begin{proof} 
    Let $u$ and $v$ be non-empty words such that $u$ has no last letter and $v$ has no first letter.
    Applying \Cref{lem:factor} separately to $u$ and $v$, we obtain factorizations $u = u' p$ and $v = q v'$, where: \begin{itemize} \item $u'$ has a last letter, $p \in \eright e$ for some idempotent $e$, \item $v'$ has a first letter, $q \in \eleft f$ for some idempotent $f$, \item and $\omegaoper e \productoper \omegastaroper f \in \nmainZ$. \end{itemize} Since $u' \in \words \alphabet \alphabet$ and $v' \in \alphabet \words \alphabet$, it suffices to show that $\eright e \eleft f$ is "$\expr$-expressible".
    
    We proceed by a case analysis: 
    \begin{itemize} 
        \item \emph{Case 1:} $e, f \in \mainZp$. Then $\eright e \eleft f$ is "$\expr$-expressible" by \Cref{lem:expr-expressible}. \item \emph{Case 2:} $e \in \mainJ$ and $f \in \mainZp$. Then $\eright e \eleft f$ is contained in $\jr \eleft f$, which is "$\expr$-expressible" by \Cref{lem:expr-expressible}. 
        \item \emph{Case 3:} $e \in \mainZp$ and $f \in \mainJ$. Then $\eright e \eleft f$ is contained in $\eright e \jl$, which is "$\expr$-expressible" by \Cref{lem:expr-expressible}. 
        \item \emph{Case 4:} $e,f \in \mainJ$ and both $\omegaoper e \in \nmainZ$ and $\omegastaroper f \in \nmainZ$. Then $\mainJ$ is both "ordinal@@regular" and "ordinal* regular", meaning we are not in \Cref{item:fo} of \Cref{lemma:monoid-to-expression-generic}. Therefore, "$\expr$-expressions" can either use "marked Kleene star" or "unrestricted concatenation". In this case, $\eright e \eleft f$ is contained in $\jr \jl$, which is definable using "$\expr$-expressions" by \Cref{item:definfinity} of \Cref{lem:operations}. 
        \item \emph{Case 5:} $e, f \in \mainJ$ and either $\omegaoper e \in \mainJ$ or $\omegastaroper f \in \mainJ$. Then $\mainJ$ is either "ordinal@@regular" or "ordinal* regular", but not "gap insensitive". In particular they are not "shuffle simple@@idempotent". Hence we are not in \Cref{item:fo} or \Cref{item:wmso} of \Cref{lemma:monoid-to-expression-generic}. Hence, "$\expr$-expressions" can use "unrestricted concatenation". Again, $\eright e \eleft f$ is contained in $\jr \jl$, which is definable using inductively constructed "$\expr$-expressions". 
    \end{itemize} 
        Thus, in all cases, $\eright e \eleft f$ (and therefore the desired language $G$) is "$\expr$-expressible". 
\end{proof}

We now show how to detect "omega witnesses".
\begin{lemma}\AP\label{lem:omega}\label{lemma:omega-witness-expressino}
    There exists an "$\expr$-expressible" language~$O \subseteq \Words{\nmainZ}$ such that for all words~$w$ containing an "omega witness", we have~$w \in O$.
\end{lemma}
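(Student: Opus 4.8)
The plan is to define $O$ as a union, over all idempotents $e\in\mainZ$ with $\omegaoper e\in\nmainZ$, of "$\expr$-expressible" languages $O_e$ containing every word that has a factor in $\eright e$. This is enough: an "omega witness" built from an idempotent $e$ lies in $\eright e$ by \Cref{lem:eomega}.\Cref{item:euright} (so such an $e$ is necessarily in $\mainZ$), while conversely $\eright e\subseteq\Words{\omegaoper e}\subseteq\Words\nmainZ$ by \Cref{lem:eomega}, so each $O_e$ -- obtained from $\eright e$ by prepending and appending pieces that map $\Jleq\omegaoper e$ -- stays inside $\Words\nmainZ$, and hence so does $O$. Before the case analysis I would record two facts. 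First, if $\monoid$ has no idempotent $e$ with $e\in\mainZ$ and $\omegaoper e\in\nmainZ$, then there are no "omega witnesses" and $O:=\emptyset$ works; this disposes of \Cref{item:fo} of \Cref{lemma:monoid-to-expression-generic}, since there every "regular J-class" is "shuffle simple regular", hence "ordinal regular", whence $\omegaoper e\Jeq e$ for every idempotent by \Cref{thm:fundamental}.\Cref{itm:omegaL}. Second, once such an $e$ is fixed we have $\omegaoper e\Jl e$ strictly, so $\Jclass(e)$ is \emph{not} "ordinal regular" (by the same item of \Cref{thm:fundamental}), and if moreover $e\in\mainJ$ then the same argument shows $\omegaoper f\in\nmainZ$ for \emph{every} idempotent $f\in\mainJ$, so that $\jr\subseteq\Words\nmainZ$ by \Cref{lem:eomega}.

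Given such an $e$ and a word $w$ with an $\eright e$-factor $v=\subword w I$, the key step is to locate $v$ relative to a cut and then invoke the marked-concatenation building blocks of \Cref{lem:expr-expressible}. Since $\eright e$-words have no last letter, $I$ has no maximum; let $c$ be the cut of $\dom w$ just above $I$, so $w=w_0w_1$ with $w_0=\subword w{\openprefix c}\in\words\alphabet\,\eright e$ and $w_1=\subword w{\closedsuffix c}$. If the complement of $c$ is empty or has a least element, then $w\in\words\alphabet\,\eright e$ or $w\in\words\alphabet\,\eright e\,\alphabet\,\words\alphabet$, which are "$\expr$-expressible" by \Cref{lem:expr-expressible} when $e\in\mainZp$, and via $\eright e\subseteq\jr$ and the $\jr$-clauses of \Cref{lem:expr-expressible} when $e\in\mainJ$ (using $\jr\subseteq\Words\nmainZ$ from the remark above). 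The remaining case is that $c$ is a gap with $w_1\neq\emptyword$; then $w_1$ has no first letter, and applying the dual of \Cref{lem:factor} to $w_1$ gives $w_1=q\,w_1'$ with $q\in\eleft g$ for some idempotent $g$ and $w_1'$ empty or beginning with a letter, so $w$ contains a factor in $\eright e\,\eleft g$ of image $\omegaoper e\productoper\omegastaroper g\Jleq\omegaoper e\Jl e$, hence in $\nmainZ$. By \Cref{lem:expr-expressible} this is "$\expr$-expressible" as $\eright e\,\eleft g$ (and $\words\alphabet\,\eright e\,\eleft g\,\words\alphabet$) when $e,g\in\mainZp$, as $\eright e\,\jl$ when $g\in\mainJ$, as $\jr\,\eleft g$ when $e\in\mainJ$, and -- mirroring the argument of \Cref{lem:gapfall} -- as $\jr\,\jl$ via \Cref{lem:operations}.\Cref{item:definfinity} in the residual subcases, where the hypotheses of \Cref{item:fo} and \Cref{item:wmso} fail and so "unrestricted concatenation" (or the "marked" Kleene-star formulation of \Cref{lem:operations}.\Cref{item:definfinity}) becomes available; the subcase $g\in\nmainZ$ is absorbed by peeling a further $\eleft{}$-factor off $q$ and noting that $q$ then contains a witness of strictly smaller type, so $w$ is caught either by some $O_{e'}$ or by another clause of the eventual expression for $\Words\nmainZ$.

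I expect the genuine difficulty to lie precisely in that last (gap) case inside the "marked" classes \Cref{item:fo} and \Cref{item:wmso}, where "unrestricted concatenation" is forbidden. For \Cref{item:fo} this is vacuous, since no "omega witnesses" exist. For \Cref{item:wmso} the argument must rely on the structural constraint above: crossing a gap immediately after an $\eright e$-factor forces a "J-class" that is "ordinal regular" or "ordinal* regular" yet not "gap insensitive", hence not "shuffle simple regular", contradicting the hypothesis of \Cref{item:wmso}; so that configuration cannot occur, and what remains is handled with the "marked" building blocks of \Cref{lem:expr-expressible}. Once this is in place, the dual statements for "omega$^*$ witnesses" (replacing $\eright e$ by $\eleft e$ and $\jr$ by $\jl$) and for "shuffle witnesses" (replacing $\eright e$ by the analogous "$\perfectShuffle{\cdot}$"-style language and using \Cref{lem:operations}.\Cref{item:definfinity}) follow along the same scheme.
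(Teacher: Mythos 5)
Your proof follows essentially the same route as the paper's: extract the $\eright e$ suffix of the witness via \Cref{lem:eomega}.\Cref{item:euright}, split on $e\in\mainZp$ versus $e\in\mainJ$ and on what sits to the right of the factor (nothing, a letter, or a gap), and assemble the expression from \Cref{lem:expr-expressible}, \Cref{lem:gapfall} and \Cref{lem:operations}.\Cref{item:definfinity}; your added observations that the statement is vacuous under \Cref{item:fo}, and that $\jr\subseteq\Words\nmainZ$ as soon as one idempotent of $\mainJ$ has its $\omega$-power in $\nmainZ$, are correct and usefully make explicit what the paper leaves implicit. Two local slips should be repaired. First, the subcase $g\in\nmainZ$ cannot be deferred to a \emph{witness of strictly smaller type} (no such well-founded notion exists, and $O$ must by itself contain every word with an omega witness); it is handled directly by noting that $\Jclass(g)\subseteq\nmainZ\subseteq\nmainZp$, hence $\eleft g\subseteq\jl$, which folds this subcase into the $\eright e\,\jl$ and $\jr\,\jl$ cases you already treat. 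Second, in your closing paragraph the implication is reversed: a gap following an $\eright e$-factor with $e\in\mainJ$ and $\omegaoper e\in\nmainZ$ forces $\mainJ$ to be \emph{neither} ordinal regular nor ordinal* regular (if it contained an ordinal idempotent $f$ then \Cref{thm:fundamental}.\Cref{itm:omegaL} would give $\omegaoper e\Leq f$, keeping $\omegaoper e$ inside $\mainJ$), so this configuration is not excluded under \Cref{item:wmso}; it is covered not by ruling it out but by the marked-Kleene-star formulation of \Cref{lem:operations}.\Cref{item:definfinity} for $\jr\,\jl$, exactly as you note parenthetically earlier.
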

\begin{proof}
    Let~$w$ be a word that has an "omega witness". This means there is an $\omega$-sequence of words~$u_1, u_2, \dots$ such that $v = \prod_{i \in \omega} u_i$ is a factor of~$w$, $\morphism(v) \in \nmainZ$, and for all~$i \in \omega$, $\morphism(u_i) = e$ for some idempotent~$e \in \mainZ$ with $\omegaoper e \in \nmainZ$. From \Cref{lem:eomega}.\Cref{item:euright}, we have~$v \in \words \alphabet \eright e$.
    
    We proceed by a case analysis, each time providing an "$\expr$-expressible" language~$K$ such that~$w \in K \subseteq \Words\nmainZ$:
    \begin{itemize}
        \item If $e \in \mainZp$: 
        We set~$K \defs \words \alphabet \eright e \words \alphabet$. 
        This language is "$\expr$-expressible" by \Cref{lem:expr-expressible}, since~$e \in \mainZp$ and~$\omegaoper e \nJgeq e$. Moreover, $w \in K \subseteq \Words\nmainZ$.
    
        \item If $e \in \mainJ$ and~$v$ is a suffix of~$w$: 
        We set~$K \defs \jr$. 
        This language is "$\expr$-expressible" by \Cref{lem:expr-expressible}, and $w \in K \subseteq \Words\nmainZ$ by \Cref{lem:eomega}.
    
        \item If $e \in \mainJ$ and there is a letter to the right of~$v$: 
        We set~$K \defs \jr \alphabet \words \alphabet$. 
        This language is "$\expr$-expressible" by \Cref{lem:expr-expressible}, and $w \in K \subseteq \Words\nmainZ$ by \Cref{lem:eomega}.
    
        \item If $e \in \mainJ$ and there is no letter to the right of~$v$: 
        We set~$K \defs \words \alphabet \eright e$. 
        By \Cref{lem:expr-expressible} and \Cref{lem:gapfall}, $w$ satisfies the hypotheses and hence $w \in K \subseteq \Words\nmainZ$.
    \end{itemize}
    
    The language~$O$ is then simply the union of the finitely many languages~$K$ described above.
\end{proof}

By symmetry we also get.
\begin{lemma} \label{lem:omegastar}
There exists a "$\expr$-expressible" language $O* \subseteq \Words{\nmainZ}$ such that for all words $w$ containing an "omega witness", we have $w \in O*$.
\end{lemma}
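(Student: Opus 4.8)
The plan is to mirror the proof of \Cref{lem:omega}, interchanging the roles of left and right throughout. Let $w$ be a word containing an "omega$^*$ witness": there is an $\omega^*$-indexed sequence $\dots u_3 u_2 u_1$ whose product $v=\prod_{i\in\omega^*}u_i$ is a factor of $w$, with $\morphism(v)\in\nmainZ$ and $\morphism(u_i)=e$ for all $i$, where $e\in\mainZ$ is an idempotent with $\omegastaroper e\in\nmainZ$. Applying the symmetric version of \Cref{lem:eomega}.\Cref{item:euright} to $v$ gives $v\in\eleft e\,\words\alphabet$, and in particular $\morphism(v)=\omegastaroper e$.

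From here I would run the same four-way case analysis as in \Cref{lem:omega}, each time exhibiting an "$\expr$-expressible" language $K$ with $w\in K\subseteq\Words\nmainZ$. If $e\in\mainZp$, take $K\defs\words\alphabet\,\eleft e\,\words\alphabet$, which is "$\expr$-expressible" by \Cref{lem:expr-expressible} since $e\in\mainZp$ and $\omegastaroper e\nJgeq e$. If $e\in\mainJ$ and $v$ is a prefix of $w$, take $K\defs\jl$. If $e\in\mainJ$ and there is a letter to the left of $v$, take $K\defs\words\alphabet\,\alphabet\,\jl$; in both of these cases \Cref{lem:expr-expressible} supplies the expression and the symmetric part of \Cref{lem:eomega} gives the containment $K\subseteq\Words\nmainZ$. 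Finally, if $e\in\mainJ$ and $v$ has no letter to its left, take $K\defs\eleft e\,\words\alphabet$ and invoke \Cref{lem:gapfall} — which is already stated symmetrically in its two arguments — to see that $w$ meets its hypotheses and hence lies in the gap-fall language, which is "$\expr$-expressible". The language $O*$ is then the finite union of all the $K$ produced above; it is "$\expr$-expressible" and contains every word with an "omega$^*$ witness".

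I expect no genuine obstacle: the work is purely bookkeeping, namely invoking the correctly oriented variants of the auxiliary results — \Cref{lem:factor} applied to words with no \emph{first} letter, the left-handed clauses of \Cref{lem:eomega} and of \Cref{lem:expr-expressible}, and the fact that \Cref{lem:gapfall} needs no reorientation. Beyond this, the argument is a verbatim transcription of the proof of \Cref{lem:omega}.
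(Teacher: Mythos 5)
Your proposal is correct and is precisely what the paper intends: its own proof of this lemma is a one-line appeal to symmetry with \Cref{lem:omega}, and you have simply written out that symmetric argument case by case (correctly reading the statement's ``omega witness'' as the intended ``omega$^*$ witness''). The left-handed invocations of \Cref{lem:factor}, \Cref{lem:eomega}, \Cref{lem:expr-expressible} and \Cref{lem:gapfall} are all as the paper expects.
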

\begin{proof}
The proof follows symmetrically by considering $\eleft e$ and the corresponding variants of \Cref{lem:eomega} and \Cref{lem:expr-expressible}.
\end{proof}

We now move toward detecting "concatenation witnesses". This requires a bit more work.

This lemma establishes a partial "$\expr$-expressibility" result needed for detecting concatenation witnesses. 

\begin{lemma}\AP\label{lemma:concatenation-witness-to-expression}
	For all \( r \in \monoid \), there exists an "$\expr$-expressible" language \( R_r \) such that:
	\begin{itemize}
		\item ($r$-compatibility) for all \( w \in R_r \), we have \( \morphism(w) \Rleq r \), and
		\item ($R_r$ witness) for every word \( w \) that has a "concatenation witness", one of the following holds:
		\begin{itemize}
			\item \( w \) contains a "gap fall" (see \Cref{lemma:gap-fall-expression}),
			\item \( w \) contains an "omega witness" (see \Cref{lemma:omega-witness-expressino}), or
			\item the "concatenation witness" of $w$ can be factorized as \( x\sigma y \) for some "words" \( x, y \) and letter \( \sigma \), such that \( \morphism(x) \in \mainZ \), \( y \in R_r \) for some \( r \in M \), and \( \morphism(x\sigma) \productoper r \in \nmainZ \).
		\end{itemize}
	\end{itemize}
\end{lemma}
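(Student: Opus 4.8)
The plan is to build the languages $R_r$ by induction on the rank of $r$ in the $\Jleq$-preorder (equivalently, on the size of $\{b\mid b\Jgeq r\}$), so that $r$-compatibility is guaranteed piece by piece, and then to prove the global witness property by a case analysis on an arbitrary concatenation witness that reduces it, via one-letter peels and the right-limit lemmas, to objects already under control. Two strata are easy. If $r\Jg a$, then $\Words r$ is $\expr$-expressible by the inductive hypothesis (IHJ), and we may take $R_r\supseteq\Words r$; since $\morphism(w)=r\Rleq r$, $r$-compatibility is immediate. For $r\in\nmainZ$ we arrange the witness argument never to call $R_r$: whenever we are about to, the fall has already occurred and, by downward closure of $\nmainZ$, we may peel one more letter and appeal instead to some $R_{r'}$ with $r'\in\mainZ$; so these $R_r$ may be taken minimal. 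The substantial case is $r\in\mainJ$.

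For $r\in\mainJ$ we assemble $R_r$ as a finite union of $\expr$-expressible blocks: (i) $\Words c$ for $c\in\mainZp$ with $c\Rleq r$ (expressible by IHJ); (ii) single letters whose image is $\Rleq r$; (iii) the right/left-limit blocks $\eright e$, $\eleft e$, $\jr$, $\jl$ and the products $\eright e\,\eleft f$ for idempotents $e,f$ whose images are compatible with mapping $\Rleq r$, each $\expr$-expressible by \Cref{lem:expr-expressible} -- when $e$ or $f$ lies in $\mainJ$ one uses exactly the items of that lemma phrased with $\jr$, $\jl$ or unrestricted concatenation, which are available because the hypothesis currently in force (one of the items of \Cref{lemma:monoid-to-expression-generic}) either excludes $\mainJ$ from being ordinal or ordinal$^*$ regular or permits unrestricted concatenation; and (iv) the marked concatenations $\sigma\,R_{r'}$ for $\sigma\in\alphabet$ and $r'\in\monoidset$ with $\morphism(\sigma)\productoper r'\Rleq r$, which preserve $\expr$-expressibility (being marked concatenations) and preserve $r$-compatibility since $x\mapsto\morphism(\sigma)\productoper x$ is monotone for $\Rleq$. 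The blocks (iv) realise "peeling off a first letter"; the blocks (iii) catch the no-first-letter suffixes produced by \Cref{lem:factor}.

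For the witness property, fix a word $w$ with a concatenation witness and pick one, the factor $uv$ with $\morphism(u),\morphism(v)\in\mainZ$ and $\morphism(uv)\in\nmainZ$, and analyse the boundary between $\dom u$ and $\dom v$. If $u$ has a last letter $\sigma$, then $u=x\sigma$ with $\morphism(x)\in\mainZ$ (a prefix of $u$, and $\mainZ$ is $\Jeq$-upward closed), and $uv=x\sigma v$; taking $r:=\morphism(v)$ gives $\morphism(x\sigma)\productoper r=\morphism(uv)\in\nmainZ$, and it remains to place $v$ into $R_r$, which we do by repeatedly peeling its first letter (block (iv)) -- each peel either keeps the running prefix in $\mainZ$, yielding a strictly shorter concatenation-witness factorization, or pushes it into $\nmainZ$, after which any $r$ is admissible -- until the remaining suffix has image in $\mainZp$ (block (i)), is a single letter (block (ii)), or has no first letter and is handled by block (iii) through \Cref{lem:factor} and \Cref{lem:eomega}. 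If $u$ has no last letter, symmetrically peel first letters of $v$; the residual subcase, $u$ with no last letter and $v$ with no first letter, is settled by applying \Cref{lem:factor} to $u$ and its symmetric version to $v$ to obtain $p\in\eright e$, $q\in\eleft f$ with $\morphism(p)=\omegaoper e$ and $\morphism(q)=\omegastaroper f$: if $\omegaoper e\productoper\omegastaroper f\in\nmainZ$ then $pq$ satisfies the hypotheses of \Cref{lemma:gap-fall-expression}, so $w$ contains a gap fall; if $\omegaoper e\productoper\omegastaroper f\in\mainZ$ then $pq$ has both a first and a last letter and we revert to the "$u$ has a last letter" case with right part $pq\in\eright e\,\eleft f\subseteq R_{\omegaoper e\productoper\omegastaroper f}$. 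Whenever an omega witness surfaces inside $u$ in this analysis, \Cref{lemma:omega-witness-expressino} closes the case.

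The main obstacle is the stratum $r\in\mainJ$: one cannot simply let $R_r$ be the set of all words mapping $\Rleq r$, since that language is a fragment of the very sets $\Words\nmainZ$ and $\fwReq a$ ($a\in\mainJ$) that we are in the middle of constructing. The way out is that the offending suffix of a concatenation witness is never arbitrary -- after a single peel it is either controlled by the inductive hypothesis (image in $\mainZp$), or it is a right- or $\omega^*$-limit block captured by \Cref{lem:factor}, \Cref{lem:eomega} and \Cref{lem:expr-expressible}, or the fall has already happened and downward closure of $\nmainZ$ releases the choice of $r$ -- and the delicate points are to check that this trichotomy is exhaustive (in particular that the peeling always reaches a base block) and that the specific item of \Cref{lem:expr-expressible} invoked is licensed by the structural hypothesis on $\monoid$ we are assuming.
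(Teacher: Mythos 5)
There is a genuine gap, and it sits exactly where you flag the difficulty: the stratum $r\in\mainJ$ and the ``repeated peeling'' that is supposed to resolve it. First, the recursion defining block (iv) is not well-founded under your induction on the $\Jleq$-rank of $r$: the side condition $\morphism(\sigma)\productoper r'\Rleq r$ does not force $r'\Jg r$ (it is compatible with $r'\Jeq r$ and even $r'\Jl r$), so $R_{r'}$ need not be available when you define $R_r$. Second, and more fundamentally, the witness argument's loop ``peel the first letter of $v$ until the remaining suffix has image in $\mainZp$, is a single letter, or has no first letter'' need not terminate on countable words: take $v=\omegaword{b}$ with $\morphism(b)=e$ an idempotent of $\mainJ$ and $\omegaoper e\Jeq e$ (i.e.\ $\mainJ$ ordinal regular). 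Every suffix of $v$ still has a first letter, is not a single letter, and maps into $\mainJ$, so no finite number of peels reaches a base block, and ``strictly shorter factorization'' is not a well-founded measure here. Your fallback blocks (iii) could in principle absorb such a $v$ via $\jr$, but your witness argument never routes this case to them, because you fix the factorization point at the witness's own boundary and set $r:=\morphism(v)$, which forces you to handle an arbitrary word with image in $\mainJ$ --- precisely the circularity you name without escaping it.

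The paper avoids both problems by making a single, canonical cut rather than iterating: inside the concatenation witness $w_1w_2$ it takes $u$ to be the \emph{longest} prefix all of whose strict prefixes map into $\mainZ$, and peels exactly one letter there (or applies \Cref{lem:factor} if $u$ has no last letter). Maximality of $u$ then forces the residual suffix into one of six shapes, so $R_r$ only ever needs to be defined for $r\in\{\unit,\ \omegastaroper e,\ \omegaoper e,\ \omegaoper e\productoper\omegastaroper f\}$, each realized directly as $\words\alphabet$, $\eleft e\words\alphabet$, $\jl$, $\eright e\words\alphabet$, $\eright e\jl$, or $\eright e\eleft f\words\alphabet$ --- no recursion on $r$ at all. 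In particular the case ``the fall has already happened at the last letter of $u$'' is discharged with $r=\unit$ and $R_\unit=\words\alphabet$ (compatibility $\morphism(y)\Rleq\unit$ is vacuous), which is the rigorous version of your remark that ``any $r$ is admissible'' after the fall. Your boundary case analysis, the gap-fall and omega-witness escapes, and the use of \Cref{lem:factor} are all in the right spirit; what is missing is replacing the non-terminating peeling by this one-shot maximal-prefix factorization, which is the actual content of the lemma.
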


\begin{proof}
\textbf{(A) Construction of \( R_r \).}
For each \( r \in M \), define \( R_r \) as the union of the following languages:
\begin{itemize}
	\item \( \words\alphabet \), if \( r = \unit \),
	\item \( \eleft e\words\alphabet \), if \( r = \omegastaroper e \) for some idempotent \( e \in \mainZp \),
	\item \( \jl \), if \( r = \omegastaroper e \) for some idempotent \( e \in \mainJ \),
	\item \( \eright e\jl \), if \( r = \omegaoper e \productoper \omegastaroper f \) for idempotents \( e \in \mainZp \), \( f \in \mainJ \),
	\item \( \eright e\words\alphabet \), if \( r = \omegaoper e \) for an idempotent \( e \in \mainZp \) with \( \omegaoper e \Jl e \),
	\item \( \eright e\eleft f\words\alphabet \), if \( r = \omegaoper e \productoper \omegastaroper f \) for idempotents \( e, f \in \mainZp \), under additional conditions: either \( e \) and \( f \) are not \( \Jeq \)-equivalent, or \( \omegaoper e \Jl e \) or \( \omegastaroper f \Jl f \).
\end{itemize}

\textbf{(B) Correctness of $r$-compatibility:}  
Let \( w \in R_r \). Then \( w \) belongs to one of the six languages given above.
\begin{itemize}
	\item If \( w \) is from a language not involving \( \jl \), then \( \morphism(w) \Rleq r \) by construction.
	\item If \( w \in \jl \), correctness follows from \Cref{lem:eomega}.
\end{itemize}

\textbf{(C) Correctness of $R_r$ witness:}  
Suppose \( w \) has a "concatenation witness". Therefore there exists two consecutive factors $w_1w_2$ of $w$ such that $\morphism(w_1), \morphism(w_2) \in \mainZ$ but $\morphism(w) \in \nmainZ$.

Let \( u \) be the longest prefix of \( w_1w_2 \) such that all strict prefixes \( u' \) of \( u \) satisfy \( \morphism(u') \in \mainZ \). Let \( v \) be the rest of \( w_1w_2 \). That is, $uv = w_1w_2$.

Note that $u$ is not empty since $w_1$ is a prefix of $u$. Furthermore, $\morphism(v) \in \mainZ$ since $v$ is a suffix of $w_2$.

We proceed by a case distinction:
\begin{itemize}
	\item If $u$ has a last letter and $\morphism(u)\in\nmainZ$. We decompose $u$ as $x\sigma$ for some letter~$\sigma$.
		We have $\morphism(x)\in\mainZ$ and hence $uv=x\sigma y$ with $y:=v\in\words\alphabet$. Since $R_{\unit} = \words \alphabet$, we have $y \in R_{\unit}$, and $\morphism(x\sigma)\productoper\unit\in\nmainZ$, as expected.
	\item If $u$ has a last letter and $\morphism(u)\in\mainZ$. We decompose~$u$ as $x\sigma$ as above. We proceed by case distinction.
	\begin{itemize}
		\item $v$ is empty: This is not possible, since $\morphism(uv)\in\nmainZ$ and $\morphism(u) \in \mainZ$.
		\item $v$ has a first letter $\sigma'$: Then $u\sigma\sigma'$ would be such that all its strict prefixes~$p$ satisfy $\morphism(p)\in\mainZ$. This is a contradiction to the maximality assumption in the definition of~$u$.
		\item Otherwise: We have, $v$ is non-empty and has no first letter. Applying \Cref{lem:factor} to~$v$, we get $v\in\eleft e\words\alphabet$ for some idempotent~$e\in\mainZ$. Again, by maximality in the definition of~$u$, we know that $\morphism(u\sigma p)\in\nmainZ$ for all non-empty prefixes~$p$ of~$v$, which implies $\morphism(u)\productoper \omegastaroper e\in\nmainZ$. Now, if $e\in\mainJ$, then we have $uv=x\sigma y$ with $y:=v\in\jl$, and $\jl \subseteq R_{\omegastaroper e}$.
		Otherwise, if $e\in\mainZp$, we have $uv = x\sigma y$ with $y:=v\in\eleft e\words\alphabet$, where $\eleft e \words \alphabet \subseteq R_{\omegastaroper e}$.
	\end{itemize}
	\item If~$u$ has no last letter and $\morphism(u)\in\nmainZ$. By \Cref{lem:factor}, there is a factorization of $u$ as $x\sigma p$, with $x, p$ "words" and $\sigma$ letter such that $p\in \eright e$ for some idempotent~$e\in\mainZ$ with $\omegaoper e \Jl e$. In particular, we have $\morphism(x\sigma)\productoper \omegaoper e\in\nmainZ$.
	
	Now, if~$e\in\mainJ$, then $p$ is an "omega witness", and the case is proved.
	Otherwise, $e\in\mainZp$, and $uv=x\sigma y$ with $y:=pv\in\eright e\words\alphabet$ and $\eright e \words \alphabet \subseteq R_{\omegaoper e}$.
	\item If~$u$ has no last letter, $\morphism(u)\in\mainZ$ and $v$ has a first letter. We decompose $v$ as $\sigma y$. Hence, all strict prefixes $u'$ of $u\sigma$ satisfy $\morphism(u') \in \mainZ$ contradicting the maximality of $u$. Therefore, this case is not possible
\item If~$u$ has no last letter, $\morphism(u)\in\mainZ$ and $v$ has no first letter. Using twice \Cref{lem:factor},
	we can write $u$ as $x\sigma p$ with $p\in \eright e$ for some idempotent $e\in\mainZ$, and $v$ as $qz$ with $q\in\eleft f$ for some idempotent $f\in\mainZ$.
	
	If~$e\in\mainJ$, then $\morphism(p)\Leq \omegaoper e$, and hence $\morphism(pq)\Jeq  \omegaoper e\productoper \omegastaroper f$. If $\omegaoper e \productoper \omegastaroper f \in \nmainZ$, then we are in the case of a "gap fall" witnessed by $pq$. Otherwise, $\omegaoper e \productoper \omegastaroper f \in \mainZ$ and hence $\morphism(pq) \in \mainZ$. This contradicts the maximality condition of $u$, since all prefixes $u'$ of $uq$ satisfies the condition $\morphism(u') \in \mainZ$. 
	
	Otherwise, $e\in\mainZp$. If~$f\in\mainZp$, then we have $w = x\sigma y$ with $y:=pqz\in \eright e\eleft f\words\alphabet$. Furthermore, $\eright e\eleft f\words\alphabet\subseteq R_{\omegaoper e \productoper \omegastaroper f}$ and ~$\morphism(x\sigma)\productoper \omegaoper e\productoper \omegastaroper f = \morphism(x\sigma p q)\in \nmainZ$. 
	The last case is when $f\in\mainJ$, for which we have $w = x\sigma y$ with $y:=pqz\in \eright e\jl$. Moreover, $\eright e \jl \subseteq R_{\omegaoper e \productoper \omegastaroper f}$ and ~$\morphism(x\sigma)\productoper \omegaoper e\productoper \omegastaroper f = \morphism(x\sigma pq)\in \nmainZ$.
\end{itemize}

\textbf{(D) "$\expr$-expressibility" of \( R_r \):}  
All cases are "$\expr$-expressible" by \Cref{lem:expr-expressible}, except for the last item. In this case \( R_r = \eright e\eleft f\words\alphabet \) and \( e \Jeq f \). Since $\omegaoper e\Jl e$ or $\omegastaroper f\Jl f$, we know that $\Jclass(e)$ is "regular@@J-class" but not "scattered regular". 	Let us now show the "$\expr$-expressibility" of~$\eright e\eleft f\words\alphabet$ following the case distinction made in \Cref{lemma:monoid-to-expression-generic}:
\begin{enumerate}
\item (\Cref{item:fo}) This is not possible since $\Jclass(e)$ is "regular@@J-class", and hence "shuffle regular", which implies "scattered regular". A contradiction.
\item (\Cref{item:wmso}) We first remark that $\Jclass(e)$ is neither "ordinal*@@regular" and nor "ordinal regular". Indeed, if it would be "ordinal regular", then it would be "shuffle regular", and hence "scattered regular", which is a contradiction. The same goes if $\Jclass(e)$ is "ordinal* regular". Hence, we have $\omegaoper e\Jl e$ and $\omegastaroper f\Jl f$.
	This means by \Cref{lem:expr-expressible} that $\eright e\eleft f\words\alphabet$ is "$\expr$-expressible".
\item (\Cref{item:focut,item:wmsocut,item:scat}) In this case, "unrestricted concatenation" is allowed, and the "$\expr$-expressibility" of $\eright e\eleft f\words\alphabet$ follows.
\end{enumerate}

This completes the proof.
\end{proof}

Note that there is also a symmetric version of this lemma, involving languages~$L_\ell$.

\begin{lemma}\AP\label{lemma:Ll}
	For all \( \ell \in \monoid \), there exists an "$\expr$-expressible" language \( L_\ell \) such that:
	\begin{itemize}
		\item for all \( w \in L_\ell \), we have \( \morphism(w) \Lleq \ell \), and
		\item for every word \( w \) that has a "concatenation witness", one of the following holds:
		\begin{itemize}
			\item \( w \) contains a "gap fall" (see \Cref{lemma:gap-fall-expression}),
			\item \( w \) contains an "omega$^*$ witness" (see \Cref{lem:omegastar}), or
			\item \( w \) can be factorized as \( x \sigma y \) for some "words" \( x, y \) and letter \( \sigma \), such that \( \morphism(y) \in \mainZ \), \( x \in L_\ell \) for some \( \ell \in M \), and \( \ell \productoper \morphism(\sigma y) \in \nmainZ \).
		\end{itemize}
	\end{itemize}
\end{lemma}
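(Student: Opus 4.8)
The plan is to derive this lemma as the mirror image of \Cref{lemma:concatenation-witness-to-expression}, obtained by reversing the order of every linear ordering: left limits and right limits swap, "omega witnesses" become "omega$^*$ witnesses", $\eright{\cdot}$ and $\eleft{\cdot}$ exchange roles, and every occurrence of $\Rleq$ becomes $\Lleq$. Concretely, for each $\ell\in M$ I would set $L_\ell$ to be the union of the reflected versions of the six building blocks of $R_r$: namely $\words\alphabet$ when $\ell=\unit$; $\words\alphabet\eright e$ when $\ell=\omegaoper e$ for an idempotent $e\in\mainZp$; $\jr$ when $\ell=\omegaoper e$ for an idempotent $e\in\mainJ$; $\jr\eleft f$ when $\ell=\omegaoper e\productoper\omegastaroper f$ for idempotents $e\in\mainJ$ and $f\in\mainZp$; $\words\alphabet\eleft f$ when $\ell=\omegastaroper f$ for an idempotent $f\in\mainZp$ with $\omegastaroper f\Jl f$; and $\words\alphabet\eright e\eleft f$ when $\ell=\omegaoper e\productoper\omegastaroper f$ for idempotents $e,f\in\mainZp$ subject to the mirrored side condition ($e\nJeq f$, or $\omegaoper e\Jl e$, or $\omegastaroper f\Jl f$). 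The clause $\morphism(w)\Lleq\ell$ for all $w\in L_\ell$ then follows from \Cref{lem:eomega} exactly as the $r$-compatibility clause did in part (B) of the original.

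For the witness clause, given a word $w$ with a "concatenation witness" $w_1w_2$ — so $\morphism(w_1),\morphism(w_2)\in\mainZ$ while $\morphism(w_1w_2)\in\nmainZ$ — I would dualise part (C) of the original proof: instead of the longest prefix of $w_1w_2$ all of whose strict prefixes map into $\mainZ$, take the longest \emph{suffix} $v$ of $w_1w_2$ all of whose strict suffixes map into $\mainZ$ (such a $v$ is non-empty, since every suffix of $w_2$ maps into $\mainZ$), and let $u$ be the remaining prefix, which is a prefix of $w_1$ and hence satisfies $\morphism(u)\in\mainZ$. Then I would run the same case distinction — on whether $v$ has a first letter, and on whether $\morphism(v)$ lies in $\mainZ$ or $\nmainZ$ — invoking the mirror of \Cref{lem:factor} (a non-empty word without a first letter has a prefix in $\eleft e$ for some idempotent $e$) wherever the original invokes \Cref{lem:factor}, and producing an "omega$^*$ witness" wherever it produces an "omega witness". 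Each branch then yields either a "gap fall", an "omega$^*$ witness", or a factorisation $x\sigma y$ with $\morphism(y)\in\mainZ$, $x\in L_\ell$ for the appropriate $\ell$, and $\ell\productoper\morphism(\sigma y)\in\nmainZ$, by the same reasoning as before.

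Finally, "$\expr$-expressibility" of every language in the list above follows from \Cref{lem:expr-expressible}, the one delicate case being $L_\ell=\words\alphabet\eright e\eleft f$ with $e\Jeq f$, which I would treat exactly as part (D) of \Cref{lemma:concatenation-witness-to-expression}: the side condition forces $\Jclass(e)$ to be a "regular J-class" that is not "scattered regular", so the target class is neither "marked star-free expressions" nor "marked expressions", and each of the remaining three classes either permits "unrestricted concatenation" or supplies the needed marked combination through \Cref{lem:expr-expressible}. I expect no genuinely new obstacle here, since the whole argument is a strict duality; the real work is in bookkeeping the reflection carefully, and — as in the un-mirrored lemma — the step that still requires care is coordinating the five target expression classes in this last case.
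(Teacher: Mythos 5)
Your proposal is correct and matches the paper exactly: the paper gives no separate proof of this lemma, simply remarking that it is the symmetric version of \Cref{lemma:concatenation-witness-to-expression}, and your dualisation (reversing the ordering, swapping $\eright{\cdot}$ with $\eleft{\cdot}$, $\Rleq$ with $\Lleq$, omega with omega$^*$ witnesses, and mirroring the six building blocks and the case analysis of parts (B)--(D)) is precisely the intended argument, with the building-block list correctly reflected.
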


Our goal is to show that the concatenation witness of a word $w$ can be expressed as a union of "$\expr$-expressible" languages. We can combine the previous two Lemmas.

\begin{lemma} \label{lem:concat}
	For every word \( w \) that has a "concatenation witness", one of the following holds:
	\begin{itemize}
		\item \( w \) contains a "gap fall",
		\item \( w \) contains an "omega witness", or
		\item \( w \) contains an "omega$^*$ witness", or
		\item \( w \) can be factorized as \(x \sigma y \) for some "words" \( x, y \) and letter $\sigma$, such that \( x \in L_\ell \), \(y \in R_r \) for some \(\ell, r \in M\) where \(\ell \productoper \morphism(\sigma) \productoper r \in \nmainZ \), or 
		\item \( w \) can be factorized as \(x \sigma' z \sigma y \) for some "words" \( x, z, y \) and letters \(\sigma', \sigma \), such that \(\morphism(z) \in \mainZp\), \(x \in L_\ell \), and \(y \in R_r \) for some \( \ell, r \in M \) where \(\ell \productoper \morphism(\sigma' z \sigma) \productoper r \in \nmainZ \) 
	\end{itemize}
	Furthermore, there exists a "$\expr$-expressible" language $C \subseteq \Words{\nmainZ}$ such that for all non-empty words $w$ containing a "concatenation witness", we have $w \in C$.
\end{lemma}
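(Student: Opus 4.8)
The statement fuses \Cref{lemma:concatenation-witness-to-expression}, which peels off a right factor lying in some~$R_r$, with its mirror \Cref{lemma:Ll}, which peels off a left factor lying in some~$L_\ell$; the plan is therefore to chain the two. Start from a "concatenation witness" $w_1w_2$ occurring as a factor of~$w$. Apply \Cref{lemma:concatenation-witness-to-expression}: if it already exhibits a "gap fall" or an "omega witness" inside~$w$ we are in option~(1) or~(2) and stop; otherwise it produces a factorization $w_1w_2=x\sigma y$ with $\morphism(x)\in\mainZ$, $y\in R_r$, and $\morphism(x\sigma)\productoper r\in\nmainZ$. Now examine the prefix~$x\sigma$. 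If $x=\epsilon$ (this subsumes a bare "letter witness" $\sigma$ with $\morphism(\sigma)\in\nmainZ$), take $\ell=\unit$: then $x\in L_{\unit}=\words\alphabet$, $y\in R_r$, and $\ell\productoper\morphism(\sigma)\productoper r=\morphism(x\sigma)\productoper r\in\nmainZ$, i.e.\ option~(4). If $x\neq\epsilon$, then $x\cdot\sigma$ is either a "concatenation witness" (when $\morphism(x\sigma)\in\nmainZ$) or a word whose image is still in~$\mainZ$ (when $\morphism(x\sigma)\in\mainZ$), which is precisely the situation handled by \Cref{lemma:Ll}; either way apply \Cref{lemma:Ll} to~$x\sigma$.

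This second application routes us to a "gap fall" (option~(1)), an "omega$^*$ witness" (option~(3)), or a decomposition $x\sigma=x'\sigma'y'$ with $x'\in L_\ell$, $\morphism(y')\in\mainZ$, and $\ell\productoper\morphism(\sigma'y')\in\nmainZ$. Since $y'$ is a suffix of~$x\sigma$ it ends with the letter~$\sigma$, so $y'=z\sigma$ and $w$ contains the factor $x'\sigma'z\sigma y$. The fall condition $\ell\productoper\morphism(\sigma'z\sigma)\productoper r\in\nmainZ$ follows from $\ell\productoper\morphism(\sigma'y')\in\nmainZ$ because $\nmainZ$ is closed under left and right multiplication; and $\morphism(z)\in\mainZ$ because $z$ is a factor of~$x$, so $\morphism(z)\Jgeq\morphism(x)\in\mainZ$. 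If $z=\epsilon$ or $\morphism(z)\in\mainZp$ we obtain option~(5) (an empty middle collapses the two marked letters and gives option~(4)). The only remaining case is $\morphism(z)\in\mainJ$, which I would dispose of by re-grouping: since $\morphism(z)\Jeq a$, the longer left factor $x'\sigma'z$ lies in a language of the same shape as~$L_\ell$ for the value $\ell\productoper\morphism(\sigma')\productoper\morphism(z)$, still available through \Cref{lem:expr-expressible} (or, symmetrically, $z\sigma y$ can be absorbed into some~$R_{r'}$), relocating the fall to a single marked letter and landing us in option~(4). This re-grouping, together with the degenerate cases ($x$ or $y$ empty, pure letter witnesses, and $a\Jeq\unit$ where $\mainZp=\emptyset$), is where all the care sits; it is the main obstacle, not the expressibility.

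Given the disjunction, I would finally set
\begin{align*}
C \defs{}& G \;\cup\; O \;\cup\; O^{*} \\
 &{}\cup\ \bigcup \big(\words\alphabet\alphabet\big)\,\big(L_\ell\,\sigma\,R_r\big)\,\big(\alphabet\words\alphabet\big) \\
 &{}\cup\ \bigcup \big(\words\alphabet\alphabet\big)\,\big(L_\ell\,\sigma'\,\Words{P}\,\sigma\,R_r\big)\,\big(\alphabet\words\alphabet\big),
\end{align*}
where $G,O,O^{*}$ are the languages of \Cref{lem:gapfall}, \Cref{lem:omega}, \Cref{lem:omegastar}; the first union ranges over $\ell,r\in\monoid$ and $\sigma\in\alphabet$ with $\ell\productoper\morphism(\sigma)\productoper r\in\nmainZ$; the second over $\ell,r\in\monoid$ and $\sigma',\sigma\in\alphabet$, with $P=\{p\in\mainZp\mid \ell\productoper\morphism(\sigma')\productoper p\productoper\morphism(\sigma)\productoper r\in\nmainZ\}$; and $R_r$, $L_\ell$, $\Words{P}$ come from \Cref{lemma:concatenation-witness-to-expression}, \Cref{lemma:Ll}, and the "induction hypothesis@IHJ" respectively. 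The outer $\words\alphabet\alphabet(\cdot)\alphabet\words\alphabet$ merely places the factor in context and may also be taken empty on either side; since $\words\alphabet\alphabet=\bigcup_{\tau\in\alphabet}\words\alphabet\tau$, every concatenation here is a "marked concatenation", so $C$ stays in the marked fragment whenever the inductively built pieces do, and uses "unrestricted concatenation" only where that is already permitted by the relevant item of \Cref{lemma:monoid-to-expression-generic}. Hence $C$ is "$\expr$-expressible". Finally $C\subseteq\Words\nmainZ$: for $G,O,O^{*}$ this is given, while for the other summands $\morphism(x)\Lleq\ell$ and $\morphism(y)\Rleq r$ force $\morphism$ of the displayed factor to be $\Jleq$ an element of~$\nmainZ$, hence in~$\nmainZ$, and therefore $\morphism(w)\in\nmainZ$ as well. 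Combined with the disjunction, this shows every non-empty $w$ with a "concatenation witness" lies in~$C$, completing the proof.
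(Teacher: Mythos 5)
Your overall strategy coincides with the paper's: chain \Cref{lemma:concatenation-witness-to-expression} with its mirror \Cref{lemma:Ll}, dispose of gap falls and omega/omega$^*$ witnesses first, and collect the remaining concatenation witnesses into a union of languages of the forms $L_\ell\,\sigma\,R_r$ and $L_\ell\,\sigma'\,\Words{P}\,\sigma\,R_r$ with $P\subseteq\mainZp$. The expression you write for $C$ is essentially the one in the paper. The problem sits exactly where you locate it yourself: the middle factor $z$. The two factorizations only yield $\morphism(z)\in\mainZ$, whereas both the statement and the expressibility of the middle block require $\morphism(z)\in\mainZp$, since the induction hypothesis makes $\Words b$ expressible only for $b$ strictly $\gJ$-above $a$. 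Your proposed repair for the case $\morphism(z)\in\mainJ$ --- re-grouping $x'\sigma'z$ into ``a language of the same shape as $L_\ell$'' for the value $\ell\productoper\morphism(\sigma')\productoper\morphism(z)$ --- does not go through. The languages $L_\ell$ and $R_r$ are defined, and shown $\expr$-expressible via \Cref{lem:expr-expressible}, only for a handful of specific shapes of $\ell$ (namely $\unit$, $\omegastaroper e$, $\omegaoper e\productoper\omegastaroper f$, etc., with side conditions on $e,f$); there is no reason the concrete word $x'\sigma'z$ belongs to any of them, and manufacturing a fresh expressible left language certifying $\Lleq\ell\productoper\morphism(\sigma')\productoper\morphism(z)$ when $\morphism(z)\in\mainJ$ is precisely the $\mainJ$-level expressibility that the induction has not yet delivered --- so the re-grouping is either unjustified or circular.

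The missing idea is that this troublesome case cannot occur at all. The paper applies \Cref{lem:monoidprop}.\Cref{itm:Jfall} with $b=\ell\productoper\morphism(\sigma')$, $t=\morphism(z)$ and $c=\morphism(\sigma)\productoper r$: the two partial products $b\productoper t$ and $t\productoper c$ remain $\Jgeq a$ (they dominate images of factors which the two lemmas guarantee to lie in $\mainZ$), while the full product $b\productoper t\productoper c$ lies in $\nmainZ$; the lemma then forces $t\Jg a$, that is, $\morphism(z)\in\mainZp$. With this in hand your option (5) is always available in the two-marked-letter case, the middle block is legitimately $\Words{P}$ for $P\subseteq\mainZp$, and your expression for $C$ becomes correct. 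The remainder of your argument --- the degenerate cases, the inclusion $C\subseteq\Words\nmainZ$ via downward closure of $\nmainZ$ under $\Jleq$, and the observation that all concatenations involved are marked --- is sound.
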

\begin{proof}
	Let $w$ have a "concatenation witness" $uv$ where $\morphism(u) \in \mainZ$, $\morphism(v) \in \mainZ$ and $\morphism(uv) \in \nmainZ$. 

	If $u \in \eright e$ and $v \in \eleft f$ for some idempotents $e, f \in \mainZ$, then $uv$ is a "gap fall" and hence $uv \in G$ by \Cref{lem:gapfall} where $G$ is recognized by an "$\expr$-expression".

	Let us also assume that $w$ do not contain an "omega witness" or an "omega$^*$ witness".

	Hence, we can apply \Cref{lemma:concatenation-witness-to-expression} on $w$ and obtain a factorization $x' \sigma y$ of $uv$ where $x', y$ are words and $\sigma$ is a letter such that $\morphism(x') \in \mainZ$, $y \in R_r$ for some $r$ such that $\morphism(x'\sigma) \productoper r \in \nmainZ$.

	By applying a further factorization using \Cref{lemma:Ll}, we get that $uv$ can be factorized as one of the following:
	\begin{itemize}
	 \item $x \sigma' z \sigma y$ for $\sigma'$ and $\sigma$ letters such that $x \in L_\ell$, $\morphism(z) \in \mainZ$ and $y \in R_r$ for some $\ell, r$ such that $\ell \productoper \morphism(\sigma' z \sigma) \productoper r \in \nmainZ$,
	 \item or $x \sigma y$ for $\sigma$ letter such that $x \in L_\ell$ and $y \in R_r$ for some $\ell, r$ such that $\ell \productoper \morphism(\sigma) \productoper r \in \nmainZ$.
	\end{itemize}
   
	We argue now that in the former case, $\morphism(u) \in \mainZp$. This follows from \Cref{lem:monoidprop}.\Cref{itm:Jfall}. 
	Hence, "concatenation witnesses" belong to the union of "$\expr$-expressible" languages
	\begin{align*}
		& L_\ell \sigma' \Words{b} \sigma R_r, \quad & \text{ where } b \in \mainZp \text{ and } \ell \productoper \morphism(\sigma') \productoper b \productoper \morphism(\sigma) \productoper r \in \nmainZ, \text{ and} \\
		& L_\ell \sigma R_r, \quad & \text{ where } \ell \productoper \morphism(\sigma) \productoper r \in \nmainZ
	\end{align*}
	Thus, we can conclude that there is an "$\expr$-expressible" language $C \subseteq \Words{\nmainZ}$ as desired.
\end{proof}
   
\begin{lemma} \label{lem:shuffle}
	There exists an "$\expr$-expressible" language $S \subseteq \Words{\nmainZ}$ such that for every non-empty word $w$ with a shuffle witness, we have $w \in S$.
\end{lemma}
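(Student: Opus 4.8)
The plan is to follow the same template as \Cref{lem:omega}, \Cref{lem:omegastar} and \Cref{lem:concat}: we fix a word $w$ containing a "shuffle witness" $v=\prod_{i\in\rationals}u_i$ over a set $K\subseteq\mainZ$ with $\shuffleoper K\in\nmainZ$, where $\prod_i\morphism(u_i)$ realises the perfect shuffle pattern of $K$, and we produce an "$\expr$-expressible" language containing $w$ and contained in $\Words\nmainZ$; the language $S$ is then the union of these languages over the finitely many admissible sets $K$ and the finitely many cases below. The containment in $\Words\nmainZ$ is automatic, since $\morphism(v)=\shuffleoper K\in\nmainZ$ and $\nmainZ$ is $\Jleq$-downward closed, so $\morphism(w)\in\nmainZ$ always holds; the whole difficulty is thus "$\expr$-expressibility".

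The central tool is \Cref{lem:operations}.\Cref{item:defdense}: for an extension-closed "$\expr$-expressible" language $F$, the language $\negation{\sop{(\negation{FF})}}$ is "$\expr$-expressible" and consists exactly of the words in which every non-empty open sub-interval of the domain carries a non-overlapping $F$-factor. A perfect shuffle of $K$ exhibits precisely such a dense pattern, so shuffle witnesses are naturally captured by this expression for an appropriate $F$. Concretely, I would first reduce to the case where every piece satisfies $\morphism(u_i)\in\mainZp$: if some piece maps into $\mainJ$ then, as the occurrences of that value are dense in $v$, the relevant side of $w$ is cofinal (resp.\ coinitial) in $\mainJ$-factors, so $w$ lies in one of $\jr$, $\jr\alphabet\words\alphabet$, $\jl$, $\words\alphabet\alphabet\jl$, each "$\expr$-expressible" by \Cref{lem:expr-expressible}. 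When $K=\{k_1,\dots,k_m\}\subseteq\mainZp$, every $\Words{k_l}$ is "$\expr$-expressible" by the "inductive hypothesis@IHJ", hence so is the extension-closed language $F$ that requires, between its marked endpoints, a block consisting of a $k_1$-factor followed by a $k_2$-factor $\dots$ followed by a $k_m$-factor; a perfect shuffle of $K$ contains such a block in every non-empty open sub-interval, so $v\in\negation{\sop{(\negation{FF})}}$ and therefore $w\in\words\alphabet\,\negation{\sop{(\negation{FF})}}\,\words\alphabet$. It remains to intersect this with an "$\expr$-expressible" constraint forcing the enclosing factor to map into $\nmainZ$; this constraint is assembled from the already-available expressions $\jr$, $\jl$, $\eright{e}$, $\eleft{e}$ together with the languages of \Cref{lemma:concatenation-witness-to-expression} and \Cref{lemma:Ll}, exactly as in \Cref{lem:concat}, to certify that the dense $K$-pattern sits inside a factor whose image is $\shuffleoper K$ rather than some larger $\mainZ$-element.

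The main obstacle is this final intersection: the bare "dense $F$-factor" expression over-approximates, since a word can contain densely many $K$-blocks yet, because of filler between and around the blocks, still map into $\mainZ$. Ruling this out requires the additional constraint just described, and proving that it is exactly tight uses the structure theorem (\Cref{thm:fundamental}.\Cref{itm:shuffle}, that a "J-class" contains at most one shuffle power) together with the relevant class hypothesis — that "shuffle regular" "J-classes" are "shuffle simple regular", via \Cref{remark:idempotents} — in the same spirit as the delicate case analysis at the end of \Cref{lemma:concatenation-witness-to-expression} and in \Cref{lem:gapfall}. A secondary subtlety, already present in \Cref{lem:witness}, is to check that the dense-union closure yields a genuine perfect-shuffle pattern rather than merely a densely ordered family of pieces; this is handled by the same Zorn/Ramsey-style argument used there.
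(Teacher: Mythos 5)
Your proposal has a genuine gap: its central construction is built on \Cref{lem:operations}.\Cref{item:defdense}, i.e.\ on the expression $\negation{\sop{(\negation{FF})}}$, which uses the "scatter operator" $\sop{}$. That operator is only available when the target class $\expr$ is the class of "scatter expressions", i.e.\ in \Cref{item:foscat} of \Cref{lemma:monoid-to-expression-generic}. In the other four items (marked star-free, marked, power-free, scatter-free expressions) the scatter operator is forbidden, so the language $\words\alphabet\,\negation{\sop{(\negation{FF})}}\,\words\alphabet$ is simply not an "$\expr$-expression" there, and your reduction to the case $K\subseteq\mainZp$ does not rescue this. The actual proof must instead branch on the structure of $\mainJ$, because the hypothesis of each item of the main lemma ties the shape of $\mainJ$ to the operations one is allowed to use: if $\mainJ$ is not "regular@@J-class", the witness is caught by $\jr\sigma\words\alphabet$ or $\jr\eleft f\words\alphabet$; if it is regular but neither "ordinal regular" nor "ordinal* regular", one uses the $\fInfinity(\nmainZp)$ construction of \Cref{lem:operations}.\Cref{item:definfinity}; if it is ordinal and ordinal* regular but not gap insensitive, one uses $\jr\jl$ with "unrestricted concatenation"; the dense/scatter expression is used only when $\mainJ$ is "scattered regular" but not "shuffle regular", which forces us into the scatter-expression item where $\sop{}$ is legal.

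The hardest case, which your proposal does not address at all, is when $\mainJ$ is "shuffle regular": by hypothesis it is then "shuffle simple regular", and only marked concatenation is available, so neither $\sop{}$ nor $\jr\jl$ can be used. Here one needs the claim that shuffle simplicity of the idempotent $e\in\mainJ$ forces some $d\in K$ with $e\productoper d\productoper e\in\nmainZ$ (and, when $d\Jeq e$, either $e\productoper d\in\nmainZ$ or $d\productoper e\in\nmainZ$ by \Cref{lem:monoidprop}.\Cref{itm:Jfall}); the shuffle witness is then reduced to a "gap fall" or a "concatenation witness" around an occurrence of $d$, which are detectable with marked operations by \Cref{lem:gapfall} and \Cref{lem:concat}. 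Your appeal to \Cref{thm:fundamental}.\Cref{itm:shuffle} and to "shuffle simple regular" gestures in the right direction but is not turned into an argument; without this case the lemma fails for marked and marked star-free expressions. Separately, your opening claim that $S\subseteq\Words\nmainZ$ is ``automatic'' is not right as stated: it holds for the specific words containing a genuine shuffle witness, but the lemma requires \emph{every} word of the constructed language $S$ to map into $\nmainZ$, which is exactly why the over-approximation you acknowledge must be eliminated rather than patched by an unspecified intersection.
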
	
\begin{proof}
	Suppose $w$ admits a shuffle witness $x$. Then there exists a factor $\prod_{i \in \rationals} u_i$, where each $u_i$ is a word and the image $\prod_{i \in \rationals} \morphism(u_i)$ is a shuffle of some set $K \subseteq \mainZ$, such that $\shuffleoper K \in \nmainZ$.
	
	We construct an "$\expr$-expression" for $w$ via a case analysis based on the properties of the "J-class" $\mainJ$:
	\begin{itemize}
		\item \textbf{$\mainJ$ is not regular:}  
		Then no idempotent $e \in \mainJ$ exists. The shuffle witness $x$ can be factorized as $uyv$ where $\morphism(y) \in K$ and $u \in \eright e$ for some idempotent $e \in \nmainZ$. 

		If $y$ has a first letter $\sigma$, we have that $w \in \jr \sigma \words \alphabet$.

		If $y$ does not have a first letter, $y$ can be factorized by \Cref{lem:factor} as $q v$ where $q \in \eleft f$ for some idempotent $f \in \mainZ$. However, since $\mainJ$ is not regular, $f \in \mainZp$. Thus, $w \in \jr \eleft f \words \alphabet$.

		Each of the above languages are "$\expr$-expressible" by \Cref{lem:expr-expressible}.
		
		\item \textbf{$\mainJ$ is regular but neither ordinal regular nor ordinal* regular:}  Hence, we are not in \Cref{item:fo} of \Cref{lem:main} and hence we can use either "marked concatenation" or "unrestricted concatenation" in our "$\expr$-expressions".

		From our assumption, we have that $\mainJ$ do not contain any omega power or omega* power of an idempotent in $J$. Therefore, any word $u$ that contains infinitely many factors from $\nmainZp$ is such that $\morphism(u) \in \nmainZ$. This language, $\fInfinity(\nmainZp)$, is "$\expr$-expressible" as shown in \Cref{lem:operations}.\Cref{item:definfinity}.
		
		Since $w$ contains a shuffle witness, it is contained in the language $\fInfinity(\nmainZp)$.

		\item \textbf{$\mainJ$ is ordinal and ordinal* regular but not gap-insensitive:}  
		Hence, we are not in \Cref{item:fo} or \Cref{item:wmso} of \Cref{lem:main}. So unrestricted concatenation is permitted in our "$\expr$-expressions". The shuffle witness $x$ can be factorized as $pq$ where $p \in \jr$ and $q \in \jl$. The expression $\jr \jl$ is "$\expr$-expressible" using unrestricted concatenation, as shown in \Cref{lem:expr-expressible}.
		
		\item \textbf{$\mainJ$ is scattered regular but not shuffle regular:}  
		We are in the setting of \Cref{item:foscat} of \Cref{lem:main} and can use the scatter operation. Since $w$ has a shuffle witness, it contains densely many non-overlapping factors whose morphism images lie outside $\mainZp$. This language is "$\expr$-expressible" using the scatter operation, as given in \Cref{lem:operations}.\Cref{item:defdense}.
		
		\item \textbf{$\mainJ$ is shuffle regular:}  
		This is the only remaining case. Since $\mainJ$ is shuffle simple regular, only marked concatenation is allowed. Let $e \in \mainJ$ be a shuffle simple idempotent such that $\shuffleoper e \in \mainJ$. We use the following claim:
		
		\begin{claim} \label{lem:shufflesimple}
			There exists $d \in K$ such that $e \cdot d \cdot e \in \nmainZ$.  
			If additionally $d \Jeq e$, then either $e \cdot d \in \nmainZ$ or $d \cdot e \in \nmainZ$.
		\end{claim}
		
		\begin{proof}
			Assume for contradiction that $e \cdot d \cdot e \Jeq e$ for all $d \in K$. By "shuffle simple regular" assumption $\shuffleoper{\{\{e\} \cup K\}} = e$. From the equations of "o-monoids", $e = \shuffleoper{\{\shuffleoper K \cup \{e\}\}}$ and hence $\shuffleoper K \Jgeq e$ which is a contradiction. Furthermore, by \Cref{lem:monoidprop}.\Cref{itm:Jfall}, if $d \Jeq e$ and $e \cdot d \cdot e \nJgeq e$, then either $e \cdot d \nJgeq e$ or $d \cdot e \nJgeq e$.
		\end{proof}

		Since $x$ is a shuffle witness, we can find a factorization of $x = u s v$ where $u \in \jr$, $v \in \jl$, and $\morphism(s) = d$ for a $d \in K$ given by the above claim. 
		
		We distinguish two subcases:
		\begin{itemize}
			\item \textbf{$d \in \mainJ$:}  
			Suppose $d \cdot e \in \nmainZ$. Then $d \cdot \omegastaroper e \in \nmainZ$. Consider the word $z = s \omegastarword e$. Since $d, e \in \mainZ$, we have that $z$ does not contain a omega or omega* witness. 

			If $z$ has a gap fall of the form $\eright f \jl$ for some idempotent $f \in \mainZp$, then the word $s v$ is also in the "$\expr$-expressible" gap fall language $G$ (since $v \in \jl$) as shown in \Cref{lem:gapfall}. 

			Note that $z$ cannot have a gap fall of the form $\jr \jl$ since $\mainJ$ is shuffle regular and $\omegaoper f \omegastaroper e \in \mainJ$ for idempotent $f \in \mainJ$.

			Otherwise, $z$ has a concatenation witness, given by \Cref{lem:concat}, of the form either $p \sigma' w' \sigma q$ or $p \sigma q$ where $\sigma', \sigma$ are letters, and $p \in R_\ell$, $q \in R_r$ for some $\ell, r \in M$ where $R_r = \jl$ or $R_r = \eright f \jl$ for some idempotent $f \in \mainZp$. Furthermore, there is an "$\expr$-expressible" language $C \subseteq \Words{\nmainZ}$ such that $R_\ell \sigma' w' \sigma R_r \subseteq C$ and $R_\ell \sigma R_r \subseteq C$. Therefore, the word $s u$ is also in $C$.

			\item \textbf{$d \in \mainZp$:}  
			Here $e \cdot d \cdot e \in \nmainZ$. Consider the factorization $usv$. The word $s$ either has a first/last letter, or has a prefix/suffix in $\eleft f$ or $\eright e$ for idempotents $e, f \in \mainZp$. In all cases, this pattern is "$\expr$-expressible" by \Cref{lem:expr-expressible}.
		\end{itemize}
	\end{itemize}
	
	All possible cases are covered, which concludes the proof.
\end{proof}

We can now show that the languages $\Words{\nmainZ}$ and $\Words{\mainJ}$ are "$\expr$-expressible" by recognizing all the witnesses given in \Cref{lem:witness}.
	
\begin{lemma}\label{lem:nJgeqa}
	There are "$\expr$-expressions" for~$\Words{\nmainZ}$ and $\Words{\mainJ}$.
\end{lemma}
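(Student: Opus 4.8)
The plan is to assemble $\Words\nmainZ$ as a finite union of the $\expr$-expressible languages produced in the preceding lemmas, one per witness type in \Cref{lem:witness}, and then to obtain $\Words\mainJ$ by a Boolean combination. First I would record the (trivial) letter-witness language
\[
	\mathrm{Let}\ \defs\ \bigcup_{\sigma\in\alphabet,\ \morphism(\sigma)\in\nmainZ}\ \words\alphabet\ \sigma\ \words\alphabet\,,
\]
which is $\expr$-expressible in every class $\expr$ (it uses only a marked concatenation and $\words\alphabet=\negation\emptyset$) and which satisfies $\mathrm{Let}\subseteq\Words\nmainZ$, since $\nmainZ$ is $\gJ$-downward closed and therefore any word having a factor of value in $\nmainZ$ maps into $\nmainZ$.

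Next I would prove
\[
	\Words\nmainZ\ =\ \mathrm{Let}\ \cup\ O\ \cup\ O*\ \cup\ C\ \cup\ S\,,
\]
where $O$, $O*$, $C$, $S$ are the languages of \Cref{lemma:omega-witness-expressino}, \Cref{lem:omegastar}, \Cref{lem:concat}, and \Cref{lem:shuffle} respectively; the gap-fall language of \Cref{lemma:gap-fall-expression} need not appear here, as it is already absorbed into $C$ and $S$. The inclusion $\supseteq$ is immediate, as each of the five languages is contained in $\Words\nmainZ$ by construction. For $\subseteq$, take $w$ with $\morphism(w)\in\nmainZ$; note that $w\neq\emptyword$ because $\unit$ is $\gJ$-maximal, hence $\unit\in\mainZ$. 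By \Cref{lem:witness}, $w$ has a factor that is one of the five kinds of witness, and I would case on its type: a letter witness gives $w\in\mathrm{Let}$, and an omega (respectively omega$^*$, concatenation, shuffle) witness gives $w\in O$ (respectively $O*$, $C$, $S$) by the corresponding lemma. Since each class $\expr$ is closed under finite union, $\Words\nmainZ$ is $\expr$-expressible.

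Finally, since $M=\mainZ\sqcup\nmainZ$ and $\mainZ=\mainJ\sqcup\mainZp$, we have $\Words\mainJ=\complement{\Words\nmainZ}\setminus\Words\mainZp=\complement{\Words\nmainZ}\cap\complement{\Words\mainZp}$. Here $\Words\mainZp=\bigcup_{b\in\mainZp}\Words b$ is $\expr$-expressible by the induction hypothesis, and $\complement{\Words\nmainZ}$ is $\expr$-expressible by the previous step together with closure of $\expr$ under complement; hence $\Words\mainJ$ is $\expr$-expressible as well. There is essentially no obstacle at this stage: all of the substantive work was carried out in \Cref{lemma:gap-fall-expression}, \Cref{lemma:omega-witness-expressino}, \Cref{lem:omegastar}, \Cref{lem:concat}, and \Cref{lem:shuffle}, and the only things to keep track of here are the harmless bookkeeping points noted above (that $\emptyword$ maps to $\unit\in\mainZ$, and that those five lemmas are phrased for non-empty words).
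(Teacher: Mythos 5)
Your proposal is correct and follows essentially the same route as the paper: both express $\Words\nmainZ$ as the union of the letter-witness language with the languages $O$, $O*$, $C$, $S$ from the preceding lemmas (justified by \Cref{lem:witness}), and then obtain $\Words\mainJ$ by an equivalent Boolean combination with $\Words\mainZp$ (the paper writes it as $\Words{\nmainZp}\setminus\Words{\nmainZ}$). Your extra bookkeeping remarks (non-emptiness, downward closure of $\nmainZ$) are harmless additions.
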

\begin{proof}
	We first construct the expression for $\Words{\nmainZ}$. Consider a word $w$ where $\morphism(w) \in \nmainZ$. Thanks to \Cref{lem:witness} it has one of the five witnesses - "letter witness", "concatenation witness", "omega witness", "omega$^*$ witness", "shuffle witness". Hence, $w$ belongs to the language $L$ defined as the union of:
	\begin{itemize}
		\item languages~$\words\alphabet \sigma\words\alphabet$ for a letter~$\sigma$ such that $\morphism(\sigma)\in\nmainZ$,
		\item the language~$O$ from \Cref{lem:omega},
		\item the language~$O*$ from \Cref{lem:omegastar},
		\item the language~$C$ from \Cref{lem:concat},
		\item the language~$S$ from \Cref{lem:shuffle}.
	\end{itemize}
	Clearly~$L$ is "$\expr$-expressible" and it defines the language $\Words{\nmainZ}$. The language $\Words{\mainJ} = \Words{\nmainZp} \backslash \Words{\nmainZ}$ is also therefore "$\expr$-expressible".
\end{proof}
	
\subsection{The expression for $\morphism(w) \Heq a$}
\label{subsection:Heq-expression}
The goal of this section is to show that the language $\fwHeq a$ is $\expr$-expressible. Our first step is to characterize the words $w$ such that $\morphism(w) \Req a$. The corresponding decomposition (and its symmetric variant for $\Leq$) forms the basis for constructing $\expr$-expressions for $\fwReq a$ and $\fwHeq a$.

\begin{lemma}\AP\label{lem:Rclass}
Let $w \in \nonemptywords \monoidset$. Then $\morphism(w) \Req a$ if and only if $\morphism(w) \in \mainJ$ and $w$ can be factorized as $x\sigma v$, where $\sigma$ is a letter, and there exist an element $r$ and a language $R_r$ (as defined in \Cref{lemma:concatenation-witness-to-expression}) such that:
\begin{enumerate}
  \item $\morphism(x\sigma) \productoper r \Rleq a$, \label{item:Rclass-sv}
  \item $x = \epsilon$ or $\morphism(x) \in \mainZp$, \label{item:Rclass-s}
  \item $v \in R_r$. \label{item:Rclass-v}
\end{enumerate}
\end{lemma}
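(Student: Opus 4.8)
The plan is to prove the two implications separately: the backward one is a short computation, while the forward one requires locating, inside $w$, the place where the morphism images of the prefixes of $w$ drop from $\mainZp$ into $\mainJ$, and it mirrors the case analysis in part~(C) of the proof of \Cref{lemma:concatenation-witness-to-expression}.

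\textbf{Backward implication.} Suppose $\morphism(w)\in\mainJ$ and $w=x\sigma v$ with properties (1)--(3). The $r$-compatibility clause of \Cref{lemma:concatenation-witness-to-expression} turns $v\in R_r$ into $\morphism(v)\Rleq r$, so $\morphism(w)=\morphism(x\sigma)\productoper\morphism(v)\Rleq\morphism(x\sigma)\productoper r\Rleq a$ by (1). Since also $\morphism(w)\Jeq a$, \Cref{lem:monoidprop}.\Cref{itm:JlRisR} gives $\morphism(w)\Req a$. (Property~(2) is not used here; it only pins down the shape of the factorization, which matters for the converse.)

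\textbf{Forward implication.} Assume $\morphism(w)\Req a$; then $\morphism(w)\Jeq a$, so $\morphism(w)\in\mainJ$, and we must build the factorization. Every prefix $u$ of $w$ satisfies $\morphism(u)\Jgeq\morphism(w)\Jeq a$, hence $\morphism(u)\in\mainZ$; and $\mainZp=\{b\mid b\Jg a\}$ is $\gJ$-upward closed, so \Cref{lem:factor} and \Cref{lem:eomega} (and their proofs) remain valid after replacing $\mainZ$ by $\mainZp$ and $\nmainZ$ by $\nmainZp$. The idea is then to run the analysis of part~(C) of \Cref{lemma:concatenation-witness-to-expression} with $w$ playing the role of the dissected word there, $\mainZp$ the role of $\mainZ$, and $\nmainZp$ the role of $\nmainZ$: one takes $u$ to be the longest prefix of $w$ all of whose strict prefixes map into $\mainZp$ (this prefix exists, as the set of such prefixes is closed under sups of chains), lets $v$ be the remainder, and case-splits according to whether $u$ has a last letter, whether $\morphism(u)\in\mainZp$ or $\in\mainJ$, and whether $v$ has a first letter. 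In each case, one or two applications of \Cref{lem:factor} (to the left end of $v$ and, symmetrically, near the right end of $u$) together with \Cref{lem:eomega} produce a factorization $w=x\sigma v'$ with $x=\epsilon$ or $\morphism(x)\in\mainZp$, $\sigma$ a letter, and $v'$ lying in one of $\words\alphabet$, $\eright e\words\alphabet$, $\eleft f\words\alphabet$, $\jl$, $\eright e\jl$, $\eright e\eleft f\words\alphabet$ — which are exactly the constituents of the languages $R_r$ of \Cref{lemma:concatenation-witness-to-expression} — with $r$ the matching value $\unit$, $\omegaoper e$, $\omegastaroper f$, $\omegaoper e\productoper\omegastaroper f$, and so on. Finally, in every case $\morphism(x\sigma)\productoper r$ is the image of a prefix of $w$ that already lies in $\mainJ$; being both $\Jeq a$ and $\Rgeq\morphism(w)\Req a$, it is $\Req a$ by \Cref{lem:monoidprop}.\Cref{itm:JlRisR}, which is property~(1), and properties (2) and (3) are read off from the construction.

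\textbf{Main obstacle.} The difficulty lies entirely in the forward direction's case analysis, and chiefly in the limit cases, where the image of prefixes drops from $\mainZp$ to $\mainJ$ at a gap or an $\omega$-limit rather than at a single letter. There the argument needs the second clause of \Cref{lem:factor} — the one producing the strict inequality $\omegaoper e\Jl e$, which is precisely what makes $\eright e\words\alphabet$ coincide with $R_{\omegaoper e}$ — and the point that unlocks this is the observation above that the clause survives replacing $\mainZ$ by $\mainZp$ because $\mainZp$ is $\gJ$-upward closed. The remaining work is the bookkeeping of assigning each configuration of $w$ around the transition to the correct member of the $R_r$ family of \Cref{lemma:concatenation-witness-to-expression}, including checking the side conditions (such as $e\nJeq f$, or $\omegaoper e\Jl e$) under which that member is included in $R_r$.
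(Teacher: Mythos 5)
Your proposal is correct and follows essentially the same route as the paper: the backward direction is the identical $r$-compatibility computation plus \Cref{lem:monoidprop}.\Cref{itm:JlRisR}, and the forward direction uses the same maximal prefix $u$ (all of whose strict prefixes map into $\mainZp$) with the same four-way case split via \Cref{lem:factor} and \Cref{lem:eomega}, matching each configuration to a constituent of $R_r$. Your explicit remark that \Cref{lem:factor} must be applied with $\mainZ,\nmainZ$ replaced by $\mainZp,\nmainZp$ (valid because $\mainZp$ is $\gJ$-upward closed) is a point the paper leaves implicit, and your derivation of property~(1) from the fact that $\morphism(x\sigma)\productoper r$ is the image of a prefix lying in $\mainJ$ is exactly the intended argument.
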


\begin{proof}
\textbf{(Right to Left)} From item~\ref{item:Rclass-v}, we have $\morphism(v) \Rleq r$. Thus,
\[
\morphism(w) = \morphism(x\sigma v) \Rleq \morphism(x \sigma) \productoper r \Rleq a.
\]
Since $\morphism(w) \in \mainJ$, we conclude $\morphism(w) \Req a$ using \Cref{lem:monoidprop}.\ref{itm:JlRisR}.

\textbf{(Left to Right)} Assume $\morphism(w) \Req a$. We aim to find a decomposition $w = x\sigma y$ satisfying the conditions of the lemma.

Let $u$ be the longest prefix of $w$ such that every strict prefix $p$ of $u$ satisfies $\morphism(p) \in \mainZp$. Let $v$ be the suffix such that $w = uv$. We distinguish several cases:

\begin{itemize}
\item \emph{$u$ has a last letter and $\morphism(u) \in \mainJ$.}  
Then $u = x\sigma$ for some $x$, $\sigma$ and $\morphism(x\sigma) \Req a$ (since $\morphism(w) \Req a$). Let $y = v$ and choose $R_r = \words \alphabet$ with $r = 1$. All conditions are satisfied.

\item \emph{$u$ has a last letter and $\morphism(u) \in \mainZp$.}  
Then $v$ must be nonempty but have no first letter (else $u$ can be extended). So $v$ is in $\eleft e \words \alphabet$ for some idempotent $e \in \mainZ$ with $\morphism(v) \Rleq \omegastaroper e$.
Set $u = x\sigma$.

If $e \in \mainZp$, define $R_r = \eleft e \words \alphabet$ with $r = \omegastaroper e$.

Otherwise $e \in \mainJ$, and we take $R_r = \euleft \nmainZp$ with $r = \omegastaroper e$.

In both cases, the conditions of the lemma are satisfied.

\item \emph{$u$ has no last letter and $\morphism(u) \in \mainJ$.}  
By \Cref{lem:factor}, $u = x\sigma u'$ where $u' \in \eright e$ for some idempotent $e \in \mainZ$. Since $\morphism(p) \in \mainZp$ for all prefixes $p$ of $u$, we conclude $e \in \mainZp$. Moreover, $\morphism(u) \in \mainJ$ implies $\omegaoper e \Jl  e$. Take $R_r = \eright e \words \alphabet$ with $r = \omegaoper e$, and let $y = u'v$. Again, the decomposition satisfies the lemma.

\item \emph{$u$ has no last letter and $\morphism(u) \in \mainZp$.}  
Then $v$ is nonempty and has no first letter. So $v$ begins with a word in $\eleft f \words \alphabet$ for some idempotent $f \in \mainZ$ such that $\morphism(v) \Rleq \omegastaroper f$. By \Cref{lem:factor}, $u = x\sigma u'$ with $u' \in \eright e$ for some $e \in \mainZp$.

If $f \in \mainZp$, let $R_r = \eright e \eleft f \words \alphabet$ with $r = \omegaoper e \productoper \omegastaroper f$.

If $f \in \mainJ$, let $R_r = \eright e \euleft \nmainZp$ with $r = \omegaoper e \productoper \omegastaroper f$.

Set $y = u'v$. The lemma's conditions are fulfilled.
\end{itemize}
\end{proof}

\begin{lemma}\AP\label{lem:fHa}
The languages $\fwReq a$, $\fwLeq a$, and $\fwHeq a$ are $\expr$-expressible.
\end{lemma}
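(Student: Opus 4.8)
The plan is to read off all three claims from the structural characterisation of $\Req$ in \Cref{lem:Rclass}, combined with the $\expr$-expressibility results already at our disposal.

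\textbf{Expression for $\fwReq a$.} By \Cref{lem:Rclass}, a word $w\in\nonemptywords\monoidset$ satisfies $\morphism(w)\Req a$ if and only if $\morphism(w)\in\mainJ$ and $w$ factorises as $x\sigma v$ with $\sigma$ a letter, with $x=\epsilon$ or $\morphism(x)\in\mainZp$, with $v\in R_r$ for some $r\in M$, and with $\morphism(x\sigma)\productoper r\Rleq a$. Since $\mainZp$, $\alphabet$ and $M$ are finite, I would set
\[
	\fwReq a \;=\; T \;\cup\; \bigl(\Words\mainJ \cap E\bigr),
\]
where $E$ is the finite union of all languages $\Words b\,\sigma\,R_r$ with $b\in\mainZp$, $\sigma\in\alphabet$, $r\in M$ and $b\productoper\morphism(\sigma)\productoper r\Rleq a$, together with all languages $\sigma\,R_r$ with $\sigma\in\alphabet$, $r\in M$ and $\morphism(\sigma)\productoper r\Rleq a$, and $T=\{\epsilon\}$ if $\unit\Req a$ and $T=\emptyset$ otherwise. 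Correctness is immediate from \Cref{lem:Rclass} in one direction; in the other, if $w\in\Words b\,\sigma\,R_r$ (or $\sigma R_r$) then $\morphism(w)\Rleq\morphism(x\sigma)\productoper r\Rleq a$, so together with $\morphism(w)\in\mainJ$ we obtain $\morphism(w)\Req a$ by \Cref{lem:monoidprop}.\Cref{itm:JlRisR}, and the empty word is handled by $T$. Every ingredient is $\expr$-expressible: $\Words\mainJ$ by \Cref{lem:nJgeqa}; $\Words b$ for $b\in\mainZp$ by the inductive hypothesis, since the $\gJ$-upward closure of such $b$ is strictly contained in $\mainZ$; $R_r$ by \Cref{lemma:concatenation-witness-to-expression}; $\{\epsilon\}$ by \Cref{sec:examples}.\Cref{itm:epsilon}; and the products $\Words b\,\sigma\,R_r$ and $\sigma R_r=\{\epsilon\}\,\sigma\,R_r$ are \emph{marked} concatenations (the separating symbol being a letter), hence available inside the marked classes too. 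Closure of each expression class of \Cref{lemma:monoid-to-expression-generic} under finite unions, intersections and marked concatenation then shows that the displayed language is an $\expr$-expression.

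\textbf{Expressions for $\fwLeq a$ and $\fwHeq a$.} A statement symmetric to \Cref{lem:Rclass}, obtained by mirroring its proof and using the languages $L_\ell$ of \Cref{lemma:Ll} in place of the $R_r$, characterises the words with $\morphism(w)\Leq a$ through a dual factorisation $x\sigma y$ with $y=\epsilon$ or $\morphism(y)\in\mainZp$. Re-running the construction above with this characterisation produces an $\expr$-expression for $\fwLeq a$. Finally, $\morphism(w)\Heq a$ holds exactly when $\morphism(w)\Req a$ and $\morphism(w)\Leq a$, so $\fwHeq a=\fwReq a\cap\fwLeq a$, which is $\expr$-expressible by closure under intersection.

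\textbf{Main obstacle.} The heavy lifting is already encapsulated in \Cref{lem:Rclass} and \Cref{lem:expr-expressible}, so this step is largely bookkeeping. The points that genuinely need care are: verifying that every factorisation produced by \Cref{lem:Rclass} uses a \emph{marked} concatenation, so the construction does not covertly rely on unrestricted concatenation in the marked cases; treating the empty word as a separate degenerate case; and stating and proving the symmetric counterpart of \Cref{lem:Rclass}, which is a verbatim mirror of the argument already given (with $L_\ell$ replacing $R_r$).
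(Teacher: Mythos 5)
Your proof is correct and follows essentially the same route as the paper: read off the characterisation from \Cref{lem:Rclass}, express $\fwReq a$ as $\Words\mainJ$ intersected with a finite union of languages $\Words b\,\sigma\,R_r$, invoke \Cref{lem:nJgeqa}, the inductive hypothesis and \Cref{lemma:concatenation-witness-to-expression} for the ingredients, then conclude by symmetry for $\fwLeq a$ and by intersection for $\fwHeq a$. Your explicit handling of the degenerate cases ($x=\epsilon$ and the empty word) and the remark that the concatenations involved are marked are details the paper leaves implicit, but they do not change the argument.
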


\begin{proof}
	From \Cref{lem:Rclass}, we have that $\fwReq a$ is the intersection of the language $\Words{\mainJ}$ with the union of languages of the form
	\[
		\fw b\ \sigma\ R_r \words \alphabet,
	\]
where $b \in \mainZp$, $\sigma$ is a letter, $R_r$ is $\expr$-expressible, and $b \productoper \morphism(\sigma) \productoper r \Rleq a$. We show this can be defined by an "$\expr$-expression".

- From \Cref{lem:nJgeqa}, we have $\expr$-expressions for $\Words \mainJ$.

- By inductive hypothesis, $\fw b$ is $\expr$-expressible for all $b \in \mainZp$.

- From \Cref{lemma:concatenation-witness-to-expression}, $R_r \words \alphabet$ is $\expr$-expressible.

Therefore, $\fwReq a$ is $\expr$-expressible.

By symmetry, the same reasoning shows that $\fwLeq a$ is $\expr$-expressible. Since $\fwHeq a = \fwLeq a \cap \fwReq a$, we conclude that $\fwHeq a$ is also $\expr$-expressible.
\end{proof}

\subsection{The expression for $\Words a$}
\label{subsection:equal-expression}
In the previous subsection, we showed that $\fwHeq a$ is $\expr$-expressible. When $\mainJ$ is aperiodic, $\Hclass(a)$ is a singleton set with only the element $a$, and hence $\Words a = \fwHeq a$. In this case, the construction of an $\expr$-expression for $\Words a$ follows directly.

Our interest now lies in the case where $\mainJ$ contains a non-trivial group. Therefore, we are not in \Cref{item:fo} or \Cref{item:focut} of \Cref{lem:main}, and we are allowed to use "marked Kleene star" and "marked concatenation" in our expressions.

From \Cref{thm:fundamental}, it follows that if $\mainJ$ contains a non-trivial group, then there is no $e \in \mainJ$ such that $\omegaoper e \in \mainJ$ or $\omegastaroper e \in \mainJ$. Thus, $\mainJ$ behaves like a "J-class" of a finite monoid, and the proof strategy for constructing an expression parallels that of the finite case. For completeness, we now state the relevant lemma and  the proof.

\begin{lemma}
	\label{lem:groupwitness}
	Assume $\mainJ$ contains a non-trivial group. Let $w$ be a word such that $\morphism(w) \in \mainJ$. Then, there exists an integer $k$ such that $w$ can be factored as
	\[
	w = x_0 \sigma_1 x_1 \sigma_2 \dots \sigma_k x_k
	\]
	where $\sigma_i \in \alphabet$, and each factor $x_i$ satisfies one of the following:
	\begin{itemize}
		\item $\morphism(x_i) \in \mainZp$, or
		\item $x_i \in \eright e$ or $x_i \in \eleft e$ for some idempotent $e \in \mainZp$, or
		\item $x_i \in \eright e \eleft f$ for some idempotents $e, f \in \mainZp$ such that $\omegaoper e \productoper \omegastaroper f \Jl e$ or $\omegaoper e \productoper \omegastaroper f \Jl f$.
	\end{itemize}
\end{lemma}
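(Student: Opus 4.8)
The plan is to imitate the classical analysis of a $\gJ$-class containing a non-trivial group inside a finite monoid (cf.\ \cite{Pin86}); the only genuinely new feature is that the infinite sub-parts of~$w$ must be absorbed into $\eright\cdot$- and $\eleft\cdot$-pieces, and the point to check is that the Structure Theorem makes this harmless.

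First I would record two preliminary facts. \emph{(a)} Every "factor" $\subword w I$ of~$w$ maps into~$\mainZ$: writing $w$ as a prefix, then $\subword w I$, then a suffix gives $\morphism(\subword w I)\Jgeq\morphism(w)\Jeq a$. \emph{(b)} Since $\mainJ$ contains a non-trivial group, \Cref{thm:fundamental}.\Cref{itm:nogroup} shows $\mainJ$ has no "idempotent"~$e$ with $e\Jeq\omegaoper e$ or $e\Jeq\omegastaroper e$; as every ordinal, ordinal$^*$, or shuffle idempotent~$e$ satisfies $\omegaoper e=e$ or $\omegastaroper e=e$, the class~$\mainJ$ is therefore neither ordinal regular, nor ordinal$^*$ regular, nor shuffle regular. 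From~(a) and~(b) I would deduce the side conditions of the lemma: whenever \Cref{lem:factor} (or its symmetric version) is applied to a factor of~$w$ with no last (resp.\ no first) "letter" and returns an idempotent~$e$ with tail $v\in\eright e$ (resp.\ prefix $v\in\eleft e$), then $v$ is a factor of~$w$, so $\morphism(v)=\omegaoper e$ (resp.\ $\omegastaroper e$) lies in~$\mainZ$ by~(a); as $\omegaoper e$ is an ordinal idempotent and $\mainJ$ is not ordinal regular, $\omegaoper e\notin\mainJ$, hence $\omegaoper e\in\mainZp$, and then $e\Jgeq\omegaoper e$ forces $e\in\mainZp$ as well. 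Thus every $\eright e$-, $\eleft e$-, or $\eright e\eleft f$-piece produced below carries idempotents in~$\mainZp$, and for a bi-infinite piece its value $\omegaoper e\productoper\omegastaroper f$ lies in~$\mainZ$ and either satisfies $\omegaoper e\productoper\omegastaroper f\Jl e$ or $\omegaoper e\productoper\omegastaroper f\Jl f$ (the third bullet) or satisfies $\omegaoper e\productoper\omegastaroper f\Jeq e\in\mainZp$, in which case the piece already maps into~$\mainZp$ (the first bullet).

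Next I would build the factorization. Reading along the cuts of~$\dom w$ from left to right, the prefix values $\morphism(\subword w {(-\infty,c)})$ form an $\Rleq$-descending chain, stay $\Jgeq a$, and finally reach $\morphism(w)\in\mainJ$; since~$\monoid$ is finite they visit only finitely many $\gR$-classes, hence drop strictly at only finitely many cuts. Dually, the suffix values visit finitely many $\gL$-classes and drop at finitely many cuts. Cutting~$w$ at all these finitely many cuts and additionally peeling off the single letter at each drop cut that happens to be a position writes $w=x_0\sigma_1 x_1\cdots\sigma_k x_k$ for some integer~$k$, with each $\sigma_i$ a letter and each~$x_i$ an interval on which both the prefix-$\gR$-class and the suffix-$\gL$-class are constant. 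The last step is a case analysis, for each such~$x_i$, on whether it has a first and/or last letter: when it does not, I would use \Cref{lem:factor} and its symmetric version to split off an $\eright e$-tail and/or an $\eleft f$-prefix, with $e,f\in\mainZp$ by~(b); the constancy of the Green classes on~$x_i$ together with Green's lemma (\Cref{lem:monoidprop}.\Cref{itm:green}), used to identify the value of the finite middle part that remains, then forces~$x_i$ to be empty, to map into~$\mainZp$, or to equal $\eright e$, $\eleft e$, or $\eright e\eleft f$ for idempotents $e,f\in\mainZp$ of the required form. Crucially, fact~(b) rules out an infinite piece on which the prefix-$\gR$- or suffix-$\gL$-class stays inside~$\mainJ$, since such a piece would, by a Ramsey argument exactly as in \Cref{lem:factor}, produce an ordinal or ordinal$^*$ idempotent in~$\mainJ$; so no shape outside the three listed ones arises.

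The main obstacle is precisely that last case analysis. Over a finite word the ``finite skeleton'' between the essential letters is immediate, whereas over a countable linear order one must carefully dispatch, for every maximal constancy piece~$x_i$, the interplay of gaps, one-sided limit behaviour, and the constancy of Green's relations, and verify in each case that the piece is one of: $\mainZp$-valued, $\eright e$, $\eleft e$, or $\eright e\eleft f$ with the correct side condition. Everything else reduces to the classical finite-monoid computation, which becomes available once fact~(b) has forbidden re-entering~$\mainJ$ through an $\omega$-, $\omega^*$-, or shuffle-power.
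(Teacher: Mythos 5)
Your overall plan --- reduce to the classical analysis of a group $\gJ$-class after using \Cref{thm:fundamental}.\Cref{itm:nogroup} to push all limit behaviour into $\mainZp$ --- is in the right spirit, and your preliminary facts (a) and (b) are essentially sound (one quibble: $\omegaoper e$ need not be an ordinal idempotent; the correct route is that $\omegaoper e\Jleq e$ and fact (a) force $\omegaoper e\in\mainJ$ whenever $e\in\mainJ$, contradicting (b)). The genuine gap is in the decomposition itself. Cutting $w$ only at the finitely many cuts where the prefix $\gR$-class or the suffix $\gL$-class drops is too coarse precisely in the situation of this lemma: when $\mainJ$ contains a non-trivial group, a letter can move the running prefix value inside its $\gH$-class without changing its $\gR$-class, so a constancy piece $x_i$ can contain such letters and satisfy $\morphism(x_i)\in\mainJ$, which is none of your three allowed shapes. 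Concretely, take the element $s^2$ of $\monoidEven$ (so $\mainJ=\{s,s^2\}$ and $\mainZp=\{\unit\}$), the morphism $h(a)=s$, $h(b)=\unit$, and $w=babababab$. Since $s\Req s^2$ and $s\Leq s^2$, the prefix $\gR$-class drops only at the first $a$ and the suffix $\gL$-class only at the last $a$, so your middle piece is $x_1=babab$ with $\morphism(x_1)=s^2\in\mainJ$: it is neither $\mainZp$-valued nor of the form $\eright e$, $\eleft e$, or $\eright e\eleft f$, and no appeal to Green's lemma can make it so.

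The same example shows that the finiteness of $k$ cannot come from counting $\gR$- and $\gL$-classes: every occurrence of $a$ must be one of the marked letters $\sigma_i$ (a factor containing an $a$ maps into $\mainJ\cup\{0\}$, and a finite factor cannot lie in any $\eright e$ or $\eleft e$), so $k$ grows with the number of $a$'s in $w$ and admits no bound uniform in $w$. The paper instead peels $w$ greedily from the left, marking at each step the letter at which the running prefix first leaves $\mainZp$, and proves termination by contradiction: infinitely many steps would, by the Ramsey argument of \Cref{lem:factor}, produce a factor of $w$ in $\eright e$ for some $e\in\mainJ$, whose image $\omegaoper e$ is $\Jl e$ by \Cref{thm:fundamental}.\Cref{itm:nogroup} and hence outside $\mainZ$, contradicting $\morphism(w)\in\mainJ$. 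Your argument needs this finer, per-word cutting criterion and this separate termination argument; the Green-class-drop skeleton does not suffice.
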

\begin{proof}
	Assume $\mainJ$ contains a non-trivial group. Let $w$ be a word such that $\morphism(w) \in \mainJ$. We construct the factorization inductively. Suppose we already have a factorization of the form $(x_0, \sigma_1, \dots, x_{j-1}, \sigma_j)$ for some prefix of $w$. Let $w'$ denote the remaining suffix, so that $w = x_0 \sigma_1 \dots x_{j-1} \sigma_j w'$.

Let $u$ be the longest prefix of $w'$ such that every strict prefix $p$ of $u$ satisfies $\morphism(p) \in \mainZp$. Let $v$ be the corresponding suffix such that $w' = uv$. We consider the following cases:
\begin{itemize}
    \item \textbf{$u$ has a last letter:} Let $u = z \sigma$. Then $\morphism(z) \in \mainZp$ and we can extend the factorization to $(x_0, \sigma_1, \dots, x_{j-1}, \sigma_j, z, \sigma)$.

    \item \textbf{$v$ has a first letter $\sigma$ and $\morphism(u) \in \mainZp$:} In this case, every prefix of $u\sigma$ belongs to $\mainZp$, contradicting the maximality of $u$. Hence, this case is not possible.

    \item \textbf{$v$ has a first letter $\sigma$, $u$ does not have a last letter, and $\morphism(v) \in \mainJ$:} By \Cref{lem:factor}, we can factor $u = z \sigma' p$ where $\morphism(z) \in \mainZp$, $\sigma'$ is a letter, and $p \in \eright e$ for some $e \in \mainZp$. We extend the factorization to $(x_0, \sigma_1,\dots,x_{j-1},\sigma_{j}, z, \sigma', p, \sigma)$.

    \item \textbf{$v$ does not have a first letter and $u$ does not have a last letter:} Apply \Cref{lem:factor} to factor $u = z \sigma p$ with $\morphism(z) \in \mainZp$ and $p \in \eright e$ for $e \in \mainZp$. Similarly, factor $v = q \sigma' y$ where $\sigma'$ is a letter and $q \in \eleft f$ for an $f \in \mainZ$. Since \Cref{thm:fundamental}.\Cref{itm:nogroup} rules out idempotents $f \in \mainJ$ with $\omegaoper f \in \mainJ$, we must have $f \in \mainZp$. We then extend the factorization with $(z, \sigma, pq, \sigma')$.
\end{itemize}
In each case, we successfully extend the factorization. It remains to show that this process terminates in finitely many steps. Observe that for all $j$, we have $\morphism(u_j \sigma_{j+1} u_{j+1}) \in \mainJ$.

Suppose the process continues for $\omega$ steps. Then there exists an infinite sequence of such factors. By \Cref{lem:factor}, this would imply that $w$ has a factor in $\eright e$ for some $e \in \mainJ$. However, by \Cref{thm:fundamental}.\ref{itm:nogroup}, we have $\omegaoper e \Jl e$, contradicting the assumption that $\morphism(w) \in \mainJ$. Hence, the factorization must terminate in finite number of steps.
\end{proof}

\begin{lemma}
\label{lem:fa}
There exists an $\expr$-expression defining the language $\Words a$.
\end{lemma}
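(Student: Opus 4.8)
The plan is to split according to whether the $\gJ$-class $\mainJ$ is aperiodic. If $\mainJ$ is aperiodic then $\Hclass(a)=\{a\}$ by \Cref{lem:monoidprop}.\Cref{itm:Hclass}, so $\Words a=\fwHeq a$ and we are done immediately by \Cref{lem:fHa}. Hence the real work is the case in which $\mainJ$ contains a non-trivial group; as already noted, we are then neither in \Cref{item:fo} nor in \Cref{item:focut} of \Cref{lem:main}, so the class $\expr$ is closed under "marked concatenation" and "marked Kleene star" (and, in the three remaining items, even under "unrestricted concatenation"). These are the only closure properties I shall use.

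First I would apply \Cref{lem:groupwitness}: since $\Words a\subseteq\Words\mainJ$ it applies to every $w\in\Words a$ and factorizes it as $w=x_0\sigma_1 x_1\sigma_2\cdots\sigma_k x_k$ with $k\ge 0$, each $\sigma_j\in\alphabet$, and each block $x_i$ lying in one of finitely many \emph{block languages}: $\Words b$ for $b\in\mainZp$; $\eright e$ or $\eleft e$ for an idempotent $e\in\mainZp$; or $\eright e\eleft f$ for idempotents $e,f\in\mainZp$ with $\omegaoper e\productoper\omegastaroper f\Jl e$ or $\omegaoper e\productoper\omegastaroper f\Jl f$. I would record two facts about these block languages. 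First, each is "$\expr$-expressible": the languages $\Words b$ by the "induction hypothesis@IHJ", and the remaining ones by \Cref{lem:expr-expressible} (items $1$, $2$, $5$ and $6$, whose hypotheses match exactly the block types above). Second, $\morphism$ is constant on each block language, with respective value $b$, $\omegaoper e$, $\omegastaroper e$, $\omegaoper e\productoper\omegastaroper f$ — this follows from \Cref{lem:eomega} and its symmetric version.

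Next I would introduce a finite alphabet $\Gamma=\Gamma_{\mathrm b}\cup\Gamma_{\mathrm p}$, where $\Gamma_{\mathrm b}$ consists of the finitely many block languages above, each carrying its constant $\morphism$-value $m\in\monoidset$, and $\Gamma_{\mathrm p}$ consists of the pairs $(\sigma,B)$ with $\sigma\in\alphabet$ and $B$ a block language, carrying the value $\morphism(\sigma)\productoper m_B$. Re-bracketing $w=x_0\,(\sigma_1 x_1)\,(\sigma_2 x_2)\cdots(\sigma_k x_k)$, the decomposition of $w$ reads off a $\Gamma$-word in $\Gamma_{\mathrm b}\Gamma_{\mathrm p}^{*}$ whose associated product in the finite monoid $(\monoidset,\productoper)$ is exactly $\morphism(w)$. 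Hence $\Words a$ equals the union, over all words of $\Gamma_{\mathrm b}\Gamma_{\mathrm p}^{*}$ of product $a$, of the concrete languages obtained by replacing each block letter by its block language and each pair letter $(\sigma,B)$ by $\sigma B$. The inclusion $\Words a\subseteq$ (this language) is \Cref{lem:groupwitness} together with the first recorded fact, and the reverse inclusion is the second fact. The set of such $\Gamma$-words is regular — it is the intersection of $\Gamma_{\mathrm b}\Gamma_{\mathrm p}^{*}$ with the preimage of $a$ under the evaluation morphism $\Gamma^{*}\to(\monoidset,\productoper)$ — so it is described by an ordinary regular expression over $\Gamma$ built from $+$, concatenation and Kleene star; substituting as above produces the desired expression for $\Words a$.

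The step I expect to be the main obstacle is this final substitution, which must respect the syntax of "marked expressions": every Kleene star has to occur under a marking letter. This goes through precisely because block letters and marking letters strictly alternate, so after the re-bracketing $w=x_0(\sigma_1 x_1)(\sigma_2 x_2)\cdots$ every repeatable $\Gamma$-segment starts with a marking letter — the concrete language of a pair letter $(\sigma,B)$ is the "left marked expression" $\sigma B$, every Kleene star of the $\Gamma$-expression becomes a "marked Kleene star" of a union of "left marked expressions" (legal by the Remark after the definition of "marked expressions"), and the leading block $x_0$ is attached in front by a single "marked concatenation". Together with the aperiodic case, this establishes \Cref{lem:fa}, and with it the last step of \Cref{lem:main}.
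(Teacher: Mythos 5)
Your proof is correct and follows essentially the same route as the paper: split on whether $\mainJ$ contains a non-trivial group, reduce the aperiodic case to \Cref{lem:fHa}, and otherwise use the factorization of \Cref{lem:groupwitness} together with a Kleene-type recombination over the finite monoid (your evaluation morphism on the auxiliary alphabet $\Gamma$ is just a reformulation of the paper's DFA with states in $\Rclass(a)$). Both arguments rely on, and justify in the same way, the observation that the strict alternation of marking letters and blocks keeps every Kleene star in marked form.
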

\begin{proof}
	If $\monoid$ is aperiodic, then it contains no non-trivial group, and we have $\Words a = \fwHeq a$. From \Cref{lem:fHa}, it follows that $\Words a$ is $\expr$-expressible.

Assume now that $\monoid$ is not aperiodic. Then, \Cref{item:fo} and \Cref{item:focut} of \Cref{lemma:monoid-to-expression-generic} are not applicable, and we may use "marked Kleene star" in $\expr$-expressions.

Now consider \Cref{lem:groupwitness}. It implies that any word $w \in \Words a$ can be partitioned into finitely many segments $(u_0, \sigma1u_1, \sigma_2u_2, \dots, \sigma_ku_k)$, each of which begins (except for the first partition) with a letter followed by a word from one of finitely many $\expr$-expressible languages. This allows us to construct a deterministic finite automaton (DFA) where:
\begin{itemize}
    \item States correspond to elements of $\Rclass(a)$.
    \item Transitions are labeled by expressions of the form $\sigma H$, where $H$ is recognizes $\fw b$ for some $b \in \mainZp$, or a language of the form $\eright e$, $\eleft f$, or $\eright e \eleft f$ for some idempotents $e, f \in \mainZp$ satisfying the conditions of \Cref{lem:groupwitness}.
\end{itemize}

\Cref{lem:groupwitness} ensures that $w$ is composed of such segments. We designate the state corresponding to $a$ as final. For any initial state, an expression using concatenation, "Kleene star", and union can be constructed to describe the accepted language. Since these operations preserve "left marked expressions" and the base languages are already left-marked, the resulting expression uses only "marked operations".

The initial state of the DFA depends on an initial prefix $u_0$ of $w$, which can itself be detected using inductively defined expressions.

This concludes the proof.
\end{proof}

Since this completes the proof of one "inductive step@IHJ", the induction hypothesis is now established. Therefore, the language $L$ is "$\expr$-expressible", completing the proof of \Cref{lemma:main}.

\section{Conclusion}
In this paper, we have characterized algebraically and effectively five natural families of recognizable languages of "countable words" that can be defined by "regular expressions". These expressions involve boolean operations, "marked@marked concatenation" and "unrestricted@unrestricted concatenation" concatenations, "marked@marked Kleene star" and "unrestricted Kleene stars@unrestricted Kleene star", and "scatter operations". These results complete our results in \cite{colcombeticalp15} that characterize the same families of languages using logical formalisms. 
We also provide an analysis of the structure of "J-classes".

\bibliographystyle{plain}
\bibliography{linearOrderings}

\end{document}